\documentclass[hidelinks]{siamart220329}
\usepackage[utf8]{inputenc}
\usepackage{amssymb}
\usepackage[margin=1in]{geometry}
\usepackage{bold-extra}
\usepackage{csquotes}
\usepackage{comment}
\usepackage{tikz}
\usetikzlibrary{shapes.geometric}
\tikzset{
    itri/.style n args={2}{%
    isosceles triangle, 
    draw, 
    shape border rotate = 90,
    minimum height = 5mm,
    minimum width = 4mm,
    inner sep = 2pt,
  }
}
\tikzset{
    itri_dashed/.style n args={2}{%
    isosceles triangle, 
    draw, dashed
    shape border rotate = 90,
  }  
}

\usepackage{hyperref}
\usepackage{nameref}
\newtheorem{remark}{Remark}

\newcommand{\BTnstar}{\mathcal{BT}^\ast_n}

\newcommand{\BTn}{\mathcal{BT}_n}
\newcommand{\Tfb}{T^{\mathit{fb}}_h}
\newcommand{\Tmb}{T^{\mathit{mb}}_n}
\newcommand{\Tcat}{T^{\mathit{cat}}_n}

\newcommand{\Tgfb}{T^{\mathit{gfb}}_n}
\newcommand{\Path}{\mathcal{P}}
\newcommand{\Pone}{\mathcal{P}_1}
\newcommand{\Ptwo}{\mathcal{P}_2}

\usepackage{xcolor}

\headers{The GFB Tree and Tree Imbalance Indices }
{Cleary, Fischer, and St.~John}


\author{Sean Cleary\thanks{Department of Mathematics, The City College of New York, NY 10031, USA
  (\email{scleary@ccny.cuny.edu}). 
}
\and Mareike Fischer\thanks{University of Greifswald, Institute of  Mathematics and Computer Science, Walther-Rathenau-Str. 47, 17487 Greifswald, Germany
  (\email{email@mareikefischer.de, mareike.fischer@uni-greifswald.de}).}
\and Katherine St.~John\thanks{Department of Computer Science, Hunter College of CUNY and the Division of Invertebrate Zoology, American Museum of Natural History, New York, NY }
(\email{stjohn@hunter.cuny.edu}).}

\title{The GFB Tree and Tree Imbalance Indices}

\begin{document}

\maketitle

\begin{abstract}
    Tree balance plays an important role in various research areas in phylogenetics and computer science. Typically, it is measured with the help of a balance index or imbalance index. There are more than 25 such indices available, recently surveyed in a book by Fischer et al. They are used to rank rooted binary trees on a scale from the most balanced to the least balanced. 
    We show that a wide range of subtree-size based measures satisfying concavity and monotonicity conditions are minimized by the complete or greedy-from-the-bottom (GFB) tree and maximized by the caterpillar tree, yielding an  infinitely large family of distinct new imbalance indices.
    Answering an open question from the literature, we show that one such established measure, the $\widehat{s}$-shape statistic, has the GFB tree as its unique minimizer. We also provide an alternative characterization of GFB trees, showing that they are equivalent to complete trees, which arise in different contexts.  We give asymptotic bounds on the expected $\widehat{s}$-shape statistic under the uniform and Yule-Harding distributions of trees, and answer questions for the related $Q$-shape statistic as well.
\end{abstract}

\section{Introduction}

Trees are a canonical data structure, providing an efficient way to implement fundamental concepts such as dynamic sets as well as representing hierarchical and phylogenetic relationships between data (see \cite{clr} and  \cite{Semple2003}).  
Much of the power of the tree data structure relies on well-distributed branching that can yield 
tree height logarithmic in the total size of the tree, and result in efficient access, assuming a reasonable balance. 
The balance, or lack thereof, often affects the running time of algorithms, with many tree-based algorithms having significantly different times depending if they are very balanced (an element in a balanced, binary search tree on $n$ leaves can be found in $O(\log n)$ time) or very imbalanced (the same search has $O(n)$ time for the pectinate or caterpillar tree) \cite{clr}. 
There are many different measures suggested to assess balance, this fundamental property of trees, surveyed  in \cite{fischerBook}. While similar in format, these indices can yield quite different rankings of trees, as illustrated in Figure~\ref{fig:plotn10}. This figure compares the rankings of three different indices considered in the present manuscript (the $\widehat{s}$-shape statistic, the $Q$-shape statistic and the Sackin index) as well as two other well-known indices, namely the popular Colless and total cophenetic indices for the case $n=10$. 

When the number of leaves, $n$, is a power of 2, with $n=2^h$ for some $h$, all imbalance indices are minimal for the fully balanced tree and maximal for the caterpillar tree.  For trees which are neither the fully balanced tree nor caterpillars, the
values of each of these indices lie in the interval containing these extremes for that  imbalance index.

\begin{figure}[t]
  \centering
  \includegraphics[scale=1]{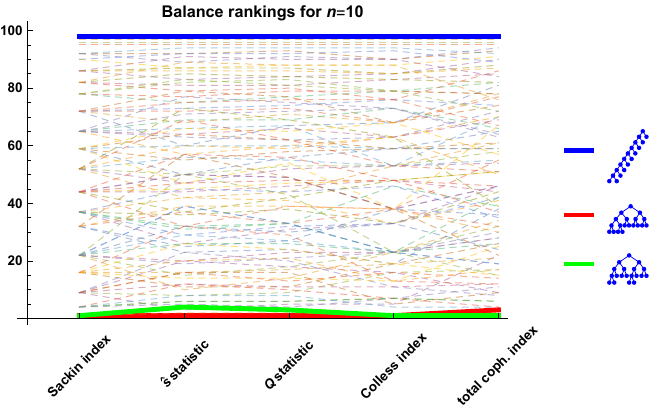}
  \caption{The rankings of all 98 tree shapes of size 10 with respect to the Sackin index, the $\widehat{s}$-shape statistic, the $Q$-shape statistic, the Colless index, and  the total cophenetic index (see \cite{fischerBook} for a survey of indices).  All indices rate the caterpillar tree shown in blue as extremely unbalanced, and  the indices rank trees of intermediate balance in
  different orders. The sets of other minimal trees with respect to the other rankings contain the maximally balanced tree shown in green and/or the GFB tree shown in red.}
  \label{fig:plotn10}
\end{figure}

We show a general result that applies to a broad range of tree imbalance indices.  If a tree imbalance index function using subtree sizes satisfies some concavity and increasing conditions, the minimum value is achieved by \enquote{greedy from the bottom} (GFB) trees, as named in \cite{fischerBook}. We will show that these trees coincide with trees termed  \enquote{complete trees}  \cite{Fill1996}, which have occurred in other contexts.  The $\widehat{s}$-shape statistic of Blum and Fran{\c{c}}ois \cite{blum2006imbalance} satisfies this property.  The $\widehat{s}$-shape statistic
sums the logarithms of the subtree sizes across the tree, so for a rooted binary tree $T$, the  $\widehat{s}$-shape statistic is $\sum \log (n_v-1)$, where $n_v$ is the number of leaves in the subtree rooted at the internal node $v$.  Blum and Fran{\c{c}}ois \cite{blum2006imbalance} note that once a normalizing constant has been removed, the $\widehat{s}$-shape statistic corresponds to the logarithm of the probability of a tree in the uniform or equal rates model (ERM) for generating random trees (see  \cite[p.~29-30]{Semple2003}), and provides a strong tool rejecting the various tree models  against the Yule-Harding or proportional to distinguished arrangements model (PDA) (see \cite{kersting2024} and Section~\ref{sec:prelim} of the present manuscript concerning probabilistic models of phylogenetic trees).  

Our findings answer the open question of which trees, among all rooted binary trees of size $n$, minimize the  $\widehat{s}$-shape statistic as well as the question if there is an explicit formula for the minimum value of $\widehat{s}$, both of which were posed in \cite{fischerBook}.  
We further analyze a related measure of tree balance,  namely the $Q$-shape statistic, described by Fill \cite{Fill1996}. Motivated by building binary search trees from random permutations, it  can be recast as a close parallel to the $\widehat{s}$-shape statistic. We discuss how his work shows the maximal and minimal tree shapes for this statistic and the moments for it under the uniform distribution.  We consider the distribution of the relevant statistic under the Yule-Harding distribution as well.
We describe many indices built from
concave functions of subtree sizes as imbalance indices, in that they have maximal values on caterpillar trees and minimal values on GFB trees, which necessarily include fully balanced trees when the size is a power of two. We show that there are infinitely many distinct such indices.

The present manuscript is organized as follows: In Section \ref{sec:prelim}, we present all definitions and notations needed throughout the manuscript. In Section \ref{sec:prior}, we state some known results from the literature which we will use to derive our results. Section \ref{sec:results} then contains all our results, which are structured as follows: Section \ref{sec:gfbmin} gives an overview of the minimizing properties of the GFB tree. The results of this subsection are used in Section \ref{sec:bal} to show that both the $\widehat{s}$-shape statistic and the $Q$-shape statistic belong to an infinitely large family of different imbalance indices. Section \ref{sec:explicitformulas} then gives two explicit formulas for the minimum value both for $\widehat{s}$ and $Q$, both of which are derived from the GFB tree. Finally, Section \ref{sec:expected} derives some expected values for the $\widehat{s}$-shape statistic, which also answers open questions posed in \cite{fischerBook}. We conclude with a brief discussion in Section \ref{sec:discussion}.

\section{Definitions} We outline the terminology used, following the standard notions from \cite{fischerBook,king2021,Steel2016}.  
\label{sec:prelim}

\subsubsection*{Graph theoretical trees and phylogenetic trees} 

A \emph{rooted binary tree}, or simply a \emph{tree}, is a directed graph $T=(V(T),E(T))$ with vertex set $V(T)$ and edge set $E(T)$, containing precisely one vertex of in-degree 0, the \emph{root} (denoted by $\rho$), such that for every $v \in V(T)$ there exists a unique path from $\rho$ to $v$, and such that all vertices have out-degree $0$ or $2$. In particular, the edges are directed away from the root. Nodes with out-degree 2 are {\em internal nodes} and nodes with out-degree 0 are {\em leaf nodes} or {\em leaves}. We use $\mathring{V}(T)$ (or simply $\mathring{V}$) to denote the set of internal vertices of $T$ and $V_L(T)$ to denote the set of leaves of $T$, respectively. 

Tree balance is independent of any leaf labeling, but in phylogenetics, the leaf labeling plays an important role and is used when considering evolutionary models. A \emph{rooted binary phylogenetic $X$-tree} $\mathcal{T}$ (or simply \emph{phylogenetic tree}) is simply a tuple $\mathcal{T}=(T, \phi)$, where $T$ is a rooted binary tree and $\phi$ is a bijection from the set of leaves $V_L(T)$ to $X$. The (unlabeled) tree $T$ is often referred to as the \emph{topology} or \emph{tree shape} of $\mathcal{T}$ and $X$ is called the \emph{taxon set} of $\mathcal{T}$. We assume that the label sets are the numbers: $X=\{1, \ldots, n\}$.

We consider two trees $T$ and $T'$ as equal if they are isomorphic; that is, if there exists a mapping $\theta:V(T)\rightarrow V(T')$ such that for all $u,v \in V(T)$ we have $(u,v)\in E(T) \Leftrightarrow (\theta(u),\theta(v)) \in E(T')$ and with $\theta(\rho(T))=\rho(T')$. In particular, $\theta$ is a graph isomorphism which preserves the root position. We use  $\BTnstar$ to denote the space (of isomorphism classes) of (rooted binary) trees with $n$ leaves, which are unlabeled.

We use $\BTn$ to denote the space (of isomorphism classes) of rooted binary phylogenetic $X$-trees with $\vert X \vert =n$ where the leaves are labeled. 
Moreover, we recall that $\vert \mathcal{BT}_1 \vert =1$  and $\vert \BTn \vert =(2n-3)!!=(2n-3)(2n-5)\cdots1$ for $n \geq 2$ (\cite[Corollary 2.2.4]{Semple2003}).

\begin{figure}[t]
    \centering  
    \begin{tabular}{ccccc} 
        \\
            \includegraphics[height=1in]{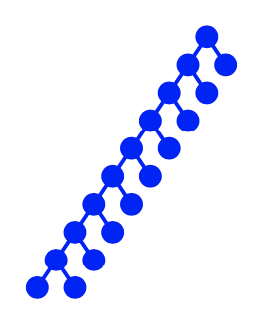} & &
            \includegraphics[height=1in]{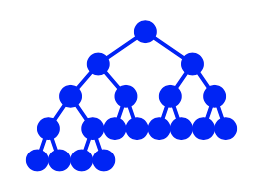} & &
            \includegraphics[height=1in]{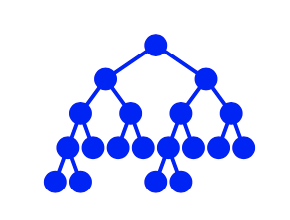} 
        \\
        (a) caterpillar && (b) greedy-from-the-bottom && (c) maximally-balanced\\
        \\
    \end{tabular}
    \caption{Three tree shapes on $10$ leaves:  
    (a) a caterpillar with 
    $\widehat{s}$-shape of $\sum_{i=2}^9 \log i = \log(9!)  
    ~ \sim 12.8 $, 
    (b) a greedy-from-the-bottom (GFB) tree with 
    $\widehat{s}$-shape of $\log(9 \cdot 5^1 \cdot 3^2) 
    \sim 6.0$, and
    (c) a maximally-balanced  tree with 
    $\widehat{s}$-shape of $\log(9 \cdot 4^2 \cdot 2^2) 
    \sim 6.4$, with the middle GFB tree  (b) being the
    unique minimizer for the $\widehat{s}$-shape statistic among all trees with 10 leaves.
    \label{fig:trees}}
\end{figure}

\subsubsection*{Vertices and subtrees}
We now define some properties of vertices and subtrees, which apply both for trees as well as phylogenetic trees. 
Throughout, for a given tree $T$, we denote the number of leaves by $n$, with $n=|V_L(T)|$. In the special case of $n=1$, the tree consists of only one vertex, which is at the same time the root and its only leaf. The \emph{size} of a tree $T$ is defined as the number of leaf nodes and is sometimes also  indicated as $|T|$.

Whenever there is a directed path $\Path$ from a vertex $u$ to a vertex $v$ in $T$, we call $u$ an \emph{ancestor} of $v$ and $v$ a \emph{descendant} of $u$. The \emph{depth} $\delta_T(v)$ 
of a vertex $v$ in $T$ denotes the number of edges on the unique path $\Path$ from the root $\rho$ of $T$ to $v$. 
If $\Path$ consists of a single edge, $u$ is the \emph{parent} of $v$ and $v$ is the \emph{child} of $u$.  If two leaves $u$ and $v$ have the same parent $w$, $u$ and $v$ form a {\em sibling pair} or \emph{cherry}, which we denote by $[u,v]$.

Every internal vertex $v$ of $T$ induces a  \emph{pending subtree} $T_v$, that is, the subtree of $T$ which has $v$ as its root and contains all descendants of $v$. The number of leaves of this subtree will be referred to as the \emph{size} of $T_v$ and will be denoted by $n_v$. Note that $n_{\rho}=n$. An internal vertex $u$ with children $v$ and $w$ is called \emph{balanced} if $|n_v-n_w|\leq 1$. Finally, for a (phylogenetic) tree $T$, we denote its \emph{standard decomposition}  into its maximal pending subtrees (that is, the two subtrees rooted at the children of the root $\rho$ of $T$) $T_a$ and $T_b$ by $T=(T_a,T_b)$. Note that the \emph{height} $h(T)$ of a tree $T$ is defined as 
$h(T)=\max_{v \in V(T)} \delta_T(v)$; that is, the height of a trees coincides with the maximum depth of its vertices.

\subsubsection*{Special trees}

We describe some trees which play important roles in tree balance. The first tree is the  \emph{caterpillar tree}, denoted by $\Tcat$, which is a rooted binary tree with $n$ leaves which either consists of only one vertex or it contains precisely one cherry (see Figure~\ref{fig:trees}(a)).  

In considering a tree of size $2^h$, the \emph{fully balanced tree of height $h$}, $\Tfb$, is the tree where every node has two children and all 
leaves have depth exactly $h$. Note that for $n \geq 2$ both maximal pending subtrees of a fully balanced tree are again fully balanced trees, and we have $\Tfb=\left(T^{\mathit{fb}}_{h-1}, T^{\mathit{fb}}_{h-1}\right)$.
The \emph{maximally balanced (MB) tree}  with $n$ leaves, denoted by $\Tmb$, is the unique rooted binary tree with $n$ leaves in which all internal vertices are balanced. Recursively, a rooted binary tree with $n \geq 2$ leaves is maximally balanced if its root is balanced and its two maximal pending subtrees are maximally balanced, with $\Tmb = \left(T_{\lceil n/2 \rceil}^{\mathit{mb}}, T_{\lfloor n/2 \rfloor}^{\mathit{mb}}\right)$ (see Figure~\ref{fig:trees}(c)). 

The \emph{greedy from the bottom (GFB) tree} 
for $n$ leaves, denoted by $\Tgfb$, is a rooted binary tree with $n$ leaves that results from greedily clustering trees of minimal sizes from an initial forest in the following manner.  We start with a forest of $n$ trees each consisting of a single vertex and proceed by successively joining two of the smallest remaining trees in the forest until the forest is resolved into a single tree, with the resulting tree of the shape depicted in Figure~\ref{fig:trees}(b).  Coronado {\em et al.} describe this construction in \cite[Algorithm 2]{Coronado2020a}. 

The \emph{complete tree} (as defined in Fill \cite{Fill1996}) for $n$ leaves, denoted by $T^c_n$, is a rooted binary tree with $n$ leaves that results from creating the fully balanced tree of size $2^{\lfloor \log_2(n)\rfloor}$, the largest fully balanced tree with $n$ leaves or fewer, ordering the leaves from the left to the right and then attaching sibling pairs on the leaves from left to right until a total of $n$ leaves are obtained. We will show below that the complete tree coincides with the GFB tree for all sizes, cf. Lemma \ref{lem:Tgfb_altDef}. In order to simplify notation, throughout we use the shorthand $\log$ instead of $\log_2$ whenever we refer to a base-2 logarithm.

Note that if $n=2^h$, we have $\Tgfb=\Tfb=\Tmb$. The latter equality follows from the fact that in the case where the number of leaves is a power of 2, $\Tfb$ is the unique tree all of whose internal vertices have a balance value of zero. The first equality holds, because if $n=2^h$, $n\mod 2^i\equiv 0$ for all $i=0,\ldots,h-1$ and during the greedy clustering procedure, every tree with $2^i$ leaves clusters with another tree of $2^i$ leaves for all $i=0,\ldots,h-1$.  This process continues until the single remaining tree is the fully-balanced tree.

\subsubsection*{Tree shape statistics and (im)balance indices} 

We focus on \emph{tree imbalance indices}. Following \cite{fischerBook}, we call a function $t:\mathcal{BT}_n^\ast\rightarrow\mathbb{R}$ a \textit{tree shape statistic (TSS)} if $t(T)$ depends only on the shape of $T$ and not on the labeling of vertices or the lengths of edges. Such a tree shape statistic $t$ is called an \emph{imbalance index} if
\begin{itemize}
    \item the caterpillar tree\index{tree!caterpillar} $\Tcat$ is the unique tree maximizing $t$ on $\BTnstar$ for all $n\geq 1$, and
    \item the fully balanced tree\index{tree!fully balanced} $\Tfb$ is the unique tree minimizing $t$ on $\BTnstar$ for all $n=2^h$  for natural $h$.
\end{itemize}

The choice of base for the logarithm only affects the following indices by a multiplicative factor, and so we presume base 2 for all evaluations below.
We first focus on several well-known and frequently used tree imbalance indices that are defined in terms of leaf counts of subtrees:

\begin{definition}\label{def:TSS} Let $T\in \BTnstar$. 
  \begin{itemize}
    \itemsep 7pt
    \item \emph{Sackin index} (cf. \cite{Fischer2021,fischerBook,Shao1990}):  
    $S(T) =\sum\limits_{v\in\mathring{V}(T)} n_v.$ 
    
    \item $\widehat{s}$-shape statistic  \cite{blum2006imbalance,fischerBook}: 
    $\widehat{s}(T) := \sum\limits_{v\in\mathring{V}(T)} \log(n_v-1) =  \log \left( \prod \limits_{v\in\mathring{V}(T)} (n_v-1) \right).$ 
    
    \item $Q$-shape statistic (related to Fill \cite{Fill1996}): 
    $Q(T) := \sum\limits_{v\in\mathring{V}(T)} \log(n_v) =  \log \left( \prod \limits_{v\in\mathring{V}(T)} n_v \right).$

    \end{itemize}
\end{definition}

Note that the logarithm base was originally
not stated \cite{blum2006imbalance}, however, it is common to use base 2 in binary (phylogenetic) trees, and we follow this convention here just as the authors of \cite{fischerBook}. In fact, we only consider logarithms of base 2 here; this is not limited to Definition \ref{def:TSS}.

Moreover, while we are mainly concerned with two imbalance indices, the $\widehat{s}$-shape statistic and the $Q$-shape statistic, we also consider broader families containing both indices.
The $Q$-shape statistic is related to measures introduced by Fill \cite{Fill1996} and differs from the $\widehat{s}$-shape statistic by a difference of one before taking the logarithms for each term in the product.  Despite the similarity, the statistics yield different rankings (see  Figures 
\ref{fig:plotn10} and \ref{fig:n10cvariable}).
Note that Fill  \cite{Fill1996} uses $Q(T)$ for the reciprocal of the quantity of which we are taking the logarithm,
and defines $L_n= -\ln(Q(T))$ for the negative of its natural logarithm.  $L_n(T)$  differs from how we choose to define the $Q$-shape statistic here in the base of logarithm, but we proceed as above to make the $Q$-shape statistic more closely parallel to the $\widehat{s}$-shape statistic.
Fill \cite{Fill1996} shows that the complete (GFB) tree minimizes the $Q$-shape statistic and the caterpillar tree maximizes it, and further computes the moments under the uniform distribution as well as the random permutation model.  King and Rosenberg \cite{king2021} employ a parallel structure to Fill's methods for similar results on the Sackin index.

\subsubsection*{Probabilistic models of phylogenetic trees} 
We consider two popular models of evolution:  the Yule-Harding and the uniform models.  
The \emph{Yule-Harding model} is a pure birth process in which species are born but do not go extinct. It is a forward process generating a tree $T$ as follows. The process starts with a single vertex and, at each step, chooses a leaf uniformly at random from those present and subsequently replaces that leaf by a cherry. As soon as the desired number $n$ of leaves is reached, leaf labels $X=\{1,\ldots,n\}$ are assigned uniformly at random to the leaves. The probability $P_{Y,n}(T)$ of generating a phylogenetic $X$-tree $\mathcal{T}=(T,\phi)$ under the Yule-Harding model is then given by Steel \cite[Proposition 3.2]{Steel2016}:
\begin{align}
     P_{Y,n}(T) &= \frac{2^{n-1}}{n!} \cdot \prod\limits_{v \in \mathring{V}(T)} \frac{1}{n_v-1}.
\end{align}

The \emph{uniform model} selects a phylogenetic $X$-tree uniformly at random from the set of all possible phylogenetic trees (\cite{Rosen1978}). 
As $\vert \mathcal{BT}_n \vert=(2n-3)!!$ for every $n \geq 1$ (with the convention that $(-1)!!=1$; see, for instance, \cite[Corollary 2.2.4]{Semple2003}), the probability $P_{U,n}(T)$ of generating a phylogenetic $X$-tree $T$ under the uniform model is thus given by 
\begin{align}
     P_{U,n}(T) &= \frac{1}{(2n-3)!!}.
\end{align}

\section{Prior results}\label{sec:prior} In order to investigate the GFB tree and its relevance for the $\widehat{s}$-shape statistic more deeply, we use the following earlier results. 

\begin{lemma}[Lemma 5 in \cite{Coronado2020a}]\label{lem:GFBmaxsubtrees} If $T = (T_a, T_b)$ is a GFB tree, then $T_a$ and $T_b$ are also GFB trees.
\end{lemma}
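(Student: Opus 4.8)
The plan is to prove the statement directly from the greedy-from-the-bottom construction, without computing the sizes of $T_a$ and $T_b$ explicitly and essentially without induction. Recall that building a tree on $n$ leaves from the initial forest of $n$ singletons takes exactly $n-1$ merges, one for each internal vertex created. I would set up the argument around a single global run of the greedy process that produces $T=(T_a,T_b)$, and then show that this run, when restricted to each side, is itself a legitimate greedy run on that side's leaves.

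The first key observation is that every merge other than the very last one takes place entirely within $T_a$ or entirely within $T_b$. Indeed, each merge creates one internal vertex, and the internal vertices of $T$ are the root $\rho$ together with those of $T_a$ and of $T_b$. The root is created by the final merge, while any internal vertex $v\neq\rho$ lies in, say, $T_a$, and the merge creating $v$ joins two trees whose combined leaf set is exactly the set of leaves below $v$ --- a subset of the leaves of $T_a$. Hence that merge is \enquote{internal to side $a$}. Counting, there are $|T_a|-1$ such merges on side $a$ and $|T_b|-1$ on side $b$, totalling $n-2$, with the remaining single merge combining the final two trees $T_a$ and $T_b$ into $T$.

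The second, and central, step is to argue that the subsequence of side-$a$ merges, taken in their global order, is a valid greedy-from-the-bottom run on the $|T_a|$ leaves of $T_a$ (and symmetrically for $T_b$). Here I would use that each side-$a$ merge joins two of the globally smallest remaining trees; since the side-$a$ trees form a subset of the whole forest, these two trees are \enquote{globally smallest implies locally smallest}, i.e.\ they are also two of the smallest trees in the current side-$a$ sub-forest. Because merges on the other side leave side $a$'s sub-forest untouched, the side-$a$ sub-forest just before each side-$a$ merge is exactly what applying the greedy rule to the previous side-$a$ merges would produce. Thus the side-$a$ subsequence starts from $|T_a|$ singletons and repeatedly joins two of the smallest remaining trees, which is by definition a greedy-from-the-bottom construction; so $T_a$ is a GFB tree, and likewise $T_b$.

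I expect the main obstacle to be careful bookkeeping rather than deep difficulty: one must check that the single-sidedness of all non-final merges and the subset reduction interact correctly, and one should note that the greedy rule permits arbitrary tie-breaking among equal-sized trees, so that the side-$a$ subsequence qualifies as a greedy run however the global ties were resolved --- which is also what makes \enquote{the GFB tree} well defined as a shape. An alternative, more computational route would be to read off the exact split $(|T_a|,|T_b|)$ from $n=2^h+r$ by analysing the process level by level and then induct on $n$; I would keep that in reserve as a cross-check, but prefer the direct argument above since it avoids the case distinction on whether $r\le 2^{h-1}$.
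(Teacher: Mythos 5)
Your argument is correct, but note that the paper does not prove this statement at all: it is imported verbatim as Lemma~5 of Coronado et al.\ and used as a black box, so there is no in-paper proof to compare against (the cited source proves it by induction via the explicit size formulas for the two maximal pending subtrees, essentially the \enquote{computational route} you mention keeping in reserve). Your direct argument is self-contained and cleaner: the observation that every tree present in the forest at any stage is a pending subtree of the final $T$, hence lies entirely on one side of the root except at the very last merge, is exactly right, and the \enquote{globally smallest implies locally smallest} step is sound because the validity condition for a greedy pick $\{S_1,S_2\}$ (every other tree has size at least $\max(|S_1|,|S_2|)$) is inherited by any sub-forest containing $S_1$ and $S_2$. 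What this buys over the route via Proposition~\ref{prop:coronado2cases} is that it avoids the case split on where $n$ sits between consecutive powers of two and generalizes immediately to Corollary~\ref{cor:GFBsubtrees} without iteration. The one genuine dependency you should make explicit rather than parenthetical is the well-definedness of the GFB \emph{shape} under arbitrary tie-breaking: your proof shows that the side-$a$ merges constitute \emph{some} valid greedy run producing $T_a$, and concluding $T_a=T^{\mathit{gfb}}_{|T_a|}$ requires that all valid greedy runs on $|T_a|$ leaves yield isomorphic trees. The paper's definition of $\Tgfb$ silently assumes this uniqueness, so you are entitled to it in context, but it is itself a nontrivial fact (usually proved by exactly the kind of induction on subtree sizes you are trying to avoid), so your proof is best read as reducing the lemma to that uniqueness statement rather than eliminating induction altogether.
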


Note that recursively applying Lemma \ref{lem:GFBmaxsubtrees} to the maximum pending subtrees of $T$ and again their maximum pending subtrees and so forth, we easily derive the following corollary.

\begin{corollary}\label{cor:GFBsubtrees} If $T$ is a GFB tree and $v$ is a vertex of $T$ inducing the pending subtree $T_v$, then $T_v$ is also a GFB tree.
\end{corollary}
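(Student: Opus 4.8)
The plan is to prove Corollary~\ref{cor:GFBsubtrees} by induction on the depth $\delta_T(v)$ of the vertex $v$ in the GFB tree $T$, using Lemma~\ref{lem:GFBmaxsubtrees} as the single inductive building block. The statement to establish is that every pending subtree $T_v$ of a GFB tree $T$ is itself a GFB tree, and the natural way to reach an arbitrary internal vertex $v$ is to walk down the tree one edge at a time from the root, invoking the lemma at each step.

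First I would handle the base case $\delta_T(v)=0$, where $v=\rho$ is the root and $T_v = T$ is a GFB tree by hypothesis, so there is nothing to show. For the inductive step, I would assume the claim holds for all vertices of depth at most $k$ and let $v$ be a vertex with $\delta_T(v)=k+1$. Let $u$ be the parent of $v$, so that $\delta_T(u)=k$ and, by the induction hypothesis, $T_u$ is a GFB tree. The key observation is that $v$ is one of the two children of $u$, so $T_v$ is one of the two maximal pending subtrees in the standard decomposition $T_u=(T_{u,a},T_{u,b})$. Applying Lemma~\ref{lem:GFBmaxsubtrees} to the GFB tree $T_u$ shows that both of its maximal pending subtrees are GFB trees; in particular $T_v$ is a GFB tree, completing the induction.

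I do not anticipate a genuine obstacle here, since the corollary is essentially a packaging of Lemma~\ref{lem:GFBmaxsubtrees} applied repeatedly along a root-to-$v$ path. The one point that warrants a line of care is the compatibility of the two notions of subtree: Lemma~\ref{lem:GFBmaxsubtrees} speaks of the two \emph{maximal} pending subtrees of a tree (the subtrees rooted at the children of its root), whereas the corollary concerns the pending subtree $T_v$ at an \emph{arbitrary} internal vertex. The induction bridges this gap precisely because the pending subtree of $T$ rooted at $u$ has as its own maximal pending subtrees exactly the pending subtrees $T_v$ rooted at the children $v$ of $u$; that is, $T_u = (T_v, T_{v'})$ where $v,v'$ are the two children of $u$. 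Making this identification explicit is all that is needed to let the lemma fire at each step of the descent.

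As an alternative framing, one could dispense with the depth induction and argue along the unique directed path $\rho = v_0, v_1, \ldots, v_m = v$ from the root to $v$: since $T = T_{v_0}$ is a GFB tree and each $T_{v_{i+1}}$ is a maximal pending subtree of $T_{v_i}$, a straightforward induction on $i$ using Lemma~\ref{lem:GFBmaxsubtrees} shows that every $T_{v_i}$, and in particular $T_{v_m}=T_v$, is a GFB tree. This is the same argument phrased in terms of the path rather than the depth, and either presentation yields the result in a few lines.
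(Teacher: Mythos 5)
Your proof is correct and follows essentially the same route as the paper, which simply notes that the corollary follows by recursively applying Lemma~\ref{lem:GFBmaxsubtrees} to the maximal pending subtrees of $T$, then to theirs, and so on down to $v$. Your depth induction (or equivalently the root-to-$v$ path formulation) is just a careful formalization of that same recursion.
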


The following proposition, which has been adapted from Proposition 5 in \cite{Coronado2020a},  characterizes the sizes of the maximal pending subtrees of GFB trees.

\begin{proposition}[adapted from Proposition 5 in \cite{Coronado2020a}]\label{prop:coronado2cases} For $n \geq 2$, we let $\Tgfb=(T_a,T_b)$, where $n_a$ and $n_b$ denote the sizes of $T_a$ and $T_b$, respectively. Let $\ell_{n} =\lfloor\log(n)\rfloor$. Then, we have:
\begin{enumerate}
\item If $2^{\ell_n} \leq n \leq 3\cdot 2^{{\ell_{n}}-1}$, then $n_a=n-2^{\ell_{n}-1}$, $n_b=2^{{\ell_{n}}-1}$ and $T_b=T_{\ell_n-1}^{fb}$.
\item If $3 \cdot 2^{{\ell_{n}}-1}\leq n<2^{\ell_{n}+1}-1$, then $ n_a=2^{\ell_{n}}$, $n_b=n-2^{\ell_{n}}$ and $T_a=T_{\ell_n}^{fb}$.
\end{enumerate}
\end{proposition}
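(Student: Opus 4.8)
The plan is to prove the proposition by induction on $n$, unwinding the greedy clustering procedure that defines $\Tgfb$ and tracking the sizes of the two subtrees that merge at the root. The key observation is that GFB clustering always joins two smallest trees in the forest, so at the moment the final merge occurs, the forest consists of exactly two trees whose sizes $n_a$ and $n_b$ are determined by how many ``rounds'' of doubling have completed. I would first establish, as a lemma or directly from Corollary~\ref{cor:GFBsubtrees} and the power-of-two base case, that during the construction every tree present in the forest has a size that is a power of $2$ until all the small singletons have been exhausted; in other words, the greedy process proceeds by fully building up blocks of size $2^0, 2^1, \dots$ before combining larger blocks. This gives tight control on the sizes available at the final step.

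Next I would split into the two stated ranges according to where $n$ sits relative to $3\cdot 2^{\ell_n-1}$, the midpoint between consecutive powers of two. Writing $m = 2^{\ell_n}$ for the largest power of two with $m \le n$, the idea is that the greedy process first assembles one fully balanced tree $\Tfb$ of height $\ell_n$ (using $m$ of the leaves) whenever enough leaves are available, and the remaining $n-m$ leaves form the smaller subtree. When $n$ is small relative to $m$ (the first case), there are not enough leftover leaves to complete a second block of size $2^{\ell_n-1}$ on equal footing, so the smaller maximal subtree is exactly the fully balanced tree $T^{fb}_{\ell_n-1}$ of size $2^{\ell_n-1}$ and the larger subtree absorbs the surplus, giving $n_a = n - 2^{\ell_n-1}$ and $n_b = 2^{\ell_n-1}$. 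In the second case, past the midpoint, a full block of size $2^{\ell_n}$ has been completed as $T^{fb}_{\ell_n}$, yielding $n_a = 2^{\ell_n}$ and $n_b = n - 2^{\ell_n}$. The boundary value $n = 3\cdot 2^{\ell_n-1}$ belongs to both descriptions consistently, since there both give the split $(2^{\ell_n}, 2^{\ell_n-1})$, so the overlap is harmless.

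The engine of the argument is to show that at the final merge the larger of the two remaining trees is itself a fully balanced tree of the appropriate height. For this I would argue that, by the greedy rule, a fully balanced tree $T^{fb}_k$ is never merged with anything until every tree of size strictly less than $2^k$ has been consumed into a tree of size at least $2^k$; combined with Lemma~\ref{lem:GFBmaxsubtrees} asserting that the maximal pending subtrees of a GFB tree are themselves GFB, and the earlier remark that $\Tgfb = \Tfb$ when $n$ is a power of two, this pins down the fully balanced factor in each case. I would make the counting precise by comparing $n - 2^{\ell_n-1}$ (resp. $n - 2^{\ell_n}$) against the relevant power of two to confirm it is the size of the larger subtree and that this subtree's own GFB structure is consistent.

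The main obstacle I anticipate is the bookkeeping at the transition point and justifying rigorously that the greedy process completes the fully balanced block \emph{before} starting to combine it with leftover material, rather than interleaving merges in some other order. Because ``two smallest trees'' can be ambiguous when several trees share the minimal size, I would need to verify that all valid tie-breaking orders yield the same final tree shape (up to isomorphism), which is implicitly guaranteed by $\Tgfb$ being well-defined but deserves an explicit remark. Since this is an adaptation of Proposition~5 in \cite{Coronado2020a}, the cleanest route may well be to derive the statement directly from that reference's size formula by a change of variables, reserving the inductive argument above as the self-contained alternative; I would present the reduction to \cite{Coronado2020a} as the primary proof and note the inductive sketch as confirmation.
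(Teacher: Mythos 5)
The paper never proves this proposition: it is imported, modulo notation, from Proposition~5 of Coronado et al.\ and for that reason sits in the ``Prior results'' section. Your closing suggestion --- present the reduction to that reference as the primary proof and keep the inductive argument only as a self-contained alternative --- therefore coincides exactly with what the paper does.

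Your self-contained sketch, however, has a genuine gap. The invariant you propose --- every tree in the forest has power-of-two size until the singletons are exhausted --- is too weak: it expires at precisely the moment it is needed, since for odd $n$ the step that consumes the last singleton already produces a tree of size $3$. What the argument actually requires is the stronger invariant that \emph{at every stage} the forest contains at most one tree whose size is not a power of two, combined with the count that once all trees of size less than $2^k$ have been absorbed, the forest holds $\lfloor n/2^k\rfloor$ or $\lfloor n/2^k\rfloor-1$ trees of size exactly $2^k$ plus possibly one leftover; only that pins down the two sizes present at the final merge. The paper does prove such an invariant (Lemma~\ref{lem:AlgOnly1Non2tree}), but you cannot borrow it here: its proof invokes Corollary~\ref{cor:coronado2cases}, i.e.\ this very proposition, so you would have to establish the invariant from scratch by induction on the phase. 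The tie-breaking concern you raise needs the same invariant (all ties except possibly one involve isomorphic fully balanced trees, so the shape is well defined). None of this is unfixable --- the phase bookkeeping in the appendix proof of Theorem~\ref{thm:GFBsubtreesizes} shows the required style --- but as written your inductive alternative does not yet close.
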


We mainly work with $k_n=\lceil\log(n)\rceil$ rather than $\ell_{n} =\lfloor\log(n)\rfloor$ to simplify some proofs later.

\begin{corollary}
\label{cor:coronado2cases}
For $n \geq 2$, we let $\Tgfb=(T_a,T_b)$, where $n_a$ and $n_b$ denote the sizes of $T_a$ and $T_b$, respectively. Let $k_n=\lceil\log(n)\rceil$. Then, we have:
\begin{enumerate}
\item If $2^{k_n-1} < n \leq 3\cdot 2^{k_{n}-2}$, then $n_a=n-2^{k_n-2}$, $n_b=2^{k_n-2}$ and $T_b=T_{k_n-2}^{fb}$.
\item If $3 \cdot 2^{{k_{n}}-2}\leq n\leq 2^{k_{n}}$, then $ n_a=2^{k_{n}-1}$, $n_b=n-2^{k_{n}-1}$ and $T_a=T_{k_n-1}^{fb}$.
\end{enumerate}
\end{corollary}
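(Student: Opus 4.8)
The plan is to deduce the corollary from Proposition \ref{prop:coronado2cases} by rewriting every hypothesis and conclusion in terms of $k_n = \lceil \log(n) \rceil$ in place of $\ell_n = \lfloor \log(n) \rfloor$, splitting on whether or not $n$ is a power of $2$. The governing identity is that $k_n = \ell_n + 1$ whenever $n$ is not a power of $2$, while $k_n = \ell_n$ when $n = 2^{\ell_n}$. Thus for the overwhelming majority of values the corollary is the literal substitution $\ell_n = k_n - 1$, and the real content lives at the two endpoints where the floor- and ceiling-based case boundaries fall differently.

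First I would treat $n$ not a power of $2$. Then $2^{\ell_n} < n < 2^{\ell_n+1}$ with both inequalities strict, and substituting $\ell_n = k_n - 1$ turns the hypothesis $2^{\ell_n} \le n \le 3\cdot 2^{\ell_n-1}$ of Proposition Case 1 into $2^{k_n-1} < n \le 3\cdot 2^{k_n-2}$ (the left inequality becoming strict precisely because $n \neq 2^{\ell_n}$), which is exactly the hypothesis of Corollary Case 1; the three conclusions $n_a = n - 2^{k_n-2}$, $n_b = 2^{k_n-2}$, $T_b = T_{k_n-2}^{fb}$ then read off verbatim. Likewise the hypothesis $3\cdot 2^{\ell_n-1} \le n < 2^{\ell_n+1}-1$ of Proposition Case 2 becomes $3\cdot 2^{k_n-2} \le n < 2^{k_n}-1$, with matching conclusions $n_a = 2^{k_n-1}$, $n_b = n - 2^{k_n-1}$, $T_a = T_{k_n-1}^{fb}$. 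I would also verify that the two descriptions agree on the overlap point $n = 3\cdot 2^{k_n-2}$ (each yields $n_a = 2^{k_n-1}$, $n_b = 2^{k_n-2}$), so that the non-strict lower boundary of Corollary Case 2 is harmless.

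The genuine gap is the single value $n = 2^{k_n}-1$: it is a non-power of $2$ with $\ell_n = k_n - 1$, it satisfies the Corollary Case 2 hypothesis $3\cdot 2^{k_n-2} \le n \le 2^{k_n}$ (for $k_n \ge 3$ it lies strictly inside; for $k_n \le 2$ it already falls in the Case 1 / overlap range), yet it is excluded by the strict upper bound $n < 2^{\ell_n+1}-1$ of Proposition Case 2. For this value I would argue directly from the greedy construction: starting from $2^{k_n}-1$ singletons, the bottom-up clustering assembles a fully balanced tree on $2^{k_n-1}$ of the leaves while the residual $2^{k_n-1}-1$ leaves recursively form a GFB tree, so $\Tgfb = (T_a,T_b)$ with $n_a = 2^{k_n-1}$, $T_a = T_{k_n-1}^{fb}$ and $n_b = 2^{k_n-1}-1$ --- exactly the Corollary Case 2 prediction. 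Equivalently, one notes $2^{k_n}-1 \ge 3\cdot 2^{k_n-2}$ for $k_n \ge 2$, places the value in the Case 2 regime, and invokes the subtree-size computation underlying Proposition 5 of \cite{Coronado2020a}, whose argument covers this endpoint even though the restated range truncates it. This endpoint is the main obstacle, since it is the one place where more than a substitution is required.

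Finally I would handle $n = 2^{k_n}$, a power of $2$, where $k_n = \ell_n$. Here the remark following the definition of the special trees gives $\Tgfb = (T_{k_n-1}^{fb}, T_{k_n-1}^{fb})$, so $n_a = n_b = 2^{k_n-1}$ and $T_a = T_{k_n-1}^{fb}$; since $2^{k_n} > 3\cdot 2^{k_n-2}$, only Corollary Case 2 applies, and its conclusions match. It is worth flagging that this is exactly the value whose classification migrates from Case 1 (in the floor formulation) to Case 2 (in the ceiling formulation); the two statements remain consistent only because for a power of $2$ both maximal pending subtrees are fully balanced, so reporting $T_a = T_{k_n-1}^{fb}$ in place of $T_b = T_{k_n-1}^{fb}$ loses no information. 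Once the two endpoints $n \in \{2^{k_n}-1,\, 2^{k_n}\}$ are checked, all remaining values follow from the substitution $\ell_n = k_n - 1$, completing the case analysis.
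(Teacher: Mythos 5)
Your proposal is correct and follows the same basic route as the paper: substitute $\ell_n = k_n-1$ for non-powers of $2$ and treat $n=2^{k_n}$ separately via $k_n=\ell_n$. The one place you go beyond the paper is the endpoint $n=2^{k_n}-1$, and you are right to do so: Proposition~\ref{prop:coronado2cases} as stated has the strict bound $n<2^{\ell_n+1}-1$ in its Case~2, so the value $n=2^{\ell_n+1}-1$ is not literally covered by either case of the proposition, whereas the corollary's Case~2 range $3\cdot 2^{k_n-2}\leq n\leq 2^{k_n}$ does include it. The paper's proof asserts that for non-powers of $2$ the substitution ``directly leads to the required claims'' and never addresses this value; your direct verification from the greedy construction (that $\Tgfb=\left(T_{k_n-1}^{fb},\,T^{gfb}_{2^{k_n-1}-1}\right)$ for $n=2^{k_n}-1$) closes that small gap. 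You also explicitly check $T_a=T_{k_n-1}^{fb}$ when $n=2^{k_n}$, which the paper's proof leaves implicit (it only computes $n_a$ and $n_b$ there, relying on the fact that for powers of two both maximal pending subtrees of $\Tgfb=\Tfb$ are fully balanced). In short: same decomposition, but your version is the more complete one.
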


\begin{proof} For all cases in which $n$ is not a power of 2, we have $k_n=\ell_n+1$.  
For these cases, substituting $\ell_n$ in Proposition \ref{prop:coronado2cases} by $k_n-1$ directly leads to the required claims. 

So it only remains to consider the case in which $n$ is a power of 2. Note that in this case, $k_n=\lceil \log(n) \rceil =\lfloor \log(n) \rfloor=\ell_n$, so we have $n=2^{\ell_n}=2^{k_n}$. Proposition \ref{prop:coronado2cases} covers this case in Case 1, which says that $n_a=n-2^{\ell_n-1}$ and $n_b=2^{\ell_n-1}$. Using $n_a+n_b=n$ and $k_n=\ell_n$ and $n=2^{k_n}$ in this case, we get: $n_a=n-2^{\ell_n-1}=2^{k_n}-2^{k_n-1}=2^{k_n-1}$ and $n_b=2^{\ell_n-1}=2^{k_n-1}$.

 Corollary  \ref{cor:coronado2cases} covers this case in Case 2, which implies that $n_a=2^{k_n-1}$ and $n_b=2^{\ell_n-1}=2^{k_n-1}$.

\end{proof}

\section{Results} \label{sec:results}

We show that the GFB tree plays a fundamental role for the $\widehat{s}$-shape statistic, even more so than it does for other balance indices like the Sackin index, for which it is known to be contained in the set of minimal trees (cf. \cite{Fischer2021,fischerBook}). Note that the GFB tree plays a similar role for other imbalance indices like the well-known  \emph{Colless index}, for which it is also known to be minimal (\cite{Coronado2020a,fischerBook}). Figure \ref{fig:plotn10} depicts the role of the GFB tree for various indices. Notably, there are indices like the so-called total cophenetic index, which do not assume their minimal values at the GFB tree, but at the MB tree instead \cite{Mir2013}, cf. Figure \ref{fig:plotn10}.

We now start to investigate the GFB tree further. 

\subsection{Minimizing properties of the GFB tree}\label{sec:gfbmin} 
The aim of this section is to show that $T_n^{gfb}$ is the unique minimizer of all functions of the form $\Phi_f$, which we define for any rooted binary tree $T$ as follows: 

\begin{align*}\Phi_f(T)&= \sum\limits_{v \in \mathring{V}(T)} f(n_v), \end{align*}
where $f$ is any strictly monotonically increasing and strictly concave function.
Moreover, we will show that this implies that the GFB tree is the unique minimizer of the \emph{product function}, which we define as follows:  \begin{align*}\pi_c(T)&= \prod\limits_{v \in \mathring{V}(T)} (n_v+c), \end{align*}
where $T$ is a rooted binary tree and $c \in \mathbb{R}_{>-2}$ is a constant. Note that the choice of $c>-2$ guarantees that $n_v+c>0$ for all $v\in \mathring{V}(T)$ (as the smallest pending subtree size is  $n_v=2$, when $v$ is the parent of a cherry). The fact that all factors of the product function are strictly larger than 0 leads to meaningful properties of the product including the existence of the logarithm of the product function.  

In Section \ref{sec:bal}, we will subsequently show that the above minimizing properties have significant implications on tree balance, as they lead to two new families of tree (im)balance indices, one of which can be shown to be a subfamily of the other one. Moreover, our results will lead to answers to open questions concerning existing imbalance indices. 

We start with the following theorem, parts of which are based on the ideas of \cite[Theorem 4]{FischerLiebscher2021}.

\begin{theorem}\label{thm:minphif} Let $n\in \mathbb{N} $ with $n \geq 2$ and let $f:\mathbb{R}_{\geq 2}\rightarrow \mathbb{R}$ a strictly monotonically increasing and strictly concave function. That is, we have $f(n_1)>f(n_2)$ if and only if $n_1>n_2$, and we also have $f(\lambda x + (1-\lambda) y) > \lambda f(x) + (1-\lambda) f(y)$ for all $\lambda \in (0,1)$ and all $x, \ y \in \mathbb{R}_{\geq 2}$ with $x\neq y$. We consider $\Phi_f(T)= \sum\limits_{v \in \mathring{V}(T)} f(n_v)$. Then, $T_n^{gfb}$ is the unique tree in $\BTnstar$ minimizing $\Phi_f$.  
\end{theorem}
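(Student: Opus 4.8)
The plan is to prove the theorem by strong induction on $n$, using the recursive structure of the GFB tree (Lemma~\ref{lem:GFBmaxsubtrees}, Corollary~\ref{cor:GFBsubtrees}, and the explicit subtree sizes in Corollary~\ref{cor:coronado2cases}). The base case $n=2$ is immediate, since there is only one tree shape on two leaves. For the inductive step, note that $\Phi_f$ decomposes over the standard decomposition: if $T=(T_a,T_b)$ with sizes $n_a+n_b=n$, then
\begin{align*}
\Phi_f(T) = f(n) + \Phi_f(T_a) + \Phi_f(T_b),
\end{align*}
since the root contributes $f(n_\rho)=f(n)$ and the internal vertices of $T$ other than the root are exactly the internal vertices of $T_a$ and of $T_b$. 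The term $f(n)$ is fixed regardless of the tree shape, so minimizing $\Phi_f(T)$ amounts to minimizing $\Phi_f(T_a)+\Phi_f(T_b)$ over all admissible splits $n_a+n_b=n$ and all shapes of $T_a,T_b$ of those sizes.

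First I would use the inductive hypothesis to reduce to optimizing only over the \emph{split} $(n_a,n_b)$: for any fixed split, the inner minima $\Phi_f(T_a)$ and $\Phi_f(T_b)$ are achieved uniquely by $T_{n_a}^{gfb}$ and $T_{n_b}^{gfb}$. Define $g(m) := \Phi_f(T_m^{gfb})$, the minimum value of $\Phi_f$ over shapes of size $m$. The problem then collapses to showing that the function $m \mapsto g(m)+g(n-m)$, over integer splits $1\le m \le n-1$, is uniquely minimized at exactly the split dictated by Corollary~\ref{cor:coronado2cases} (namely $\{n-2^{k_n-2},\,2^{k_n-2}\}$ or $\{2^{k_n-1},\,n-2^{k_n-1}\}$, the two descriptions coinciding with the GFB split), and that the resulting tree $(T_{n_a}^{gfb},T_{n_b}^{gfb})$ equals $T_n^{gfb}$ (again by Lemma~\ref{lem:GFBmaxsubtrees}, which guarantees the GFB tree has GFB maximal subtrees). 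To establish the split optimality I expect to need a \emph{convexity/superadditivity} property of $g$ derived from the strict concavity of $f$: intuitively, strict concavity of $f$ makes $g$ grow in a way that penalizes balanced splits less than unbalanced ones, pushing the optimum toward the most balanced \emph{power-of-two-compatible} split, which is precisely what GFB produces. I would aim to show a discrete second-difference inequality for $g$ (something like $g(m+1)-g(m)$ being monotone, or a strict exchange inequality showing that moving a leaf from the larger side to the smaller side strictly decreases $g(n_a)+g(n_b)$ as long as the split is not the GFB split) and combine it with the explicit GFB subtree sizes.

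The main obstacle will be this split-optimality step, because the correct GFB split is \emph{not} simply the most balanced split $\{\lceil n/2\rceil,\lfloor n/2\rfloor\}$ (that is the MB tree, which Figure~\ref{fig:trees} shows can differ from GFB and indeed have a strictly larger $\widehat{s}$-value). The subtlety is that $g$ itself reflects the recursive tradeoff between balancing at the root and balancing deeper down, so the one-step greedy split must be justified against the global optimum. I expect the cleanest route is a direct \emph{exchange argument}: assume for contradiction an optimal tree whose root split differs from the GFB split, locate two pending subtrees (or a leaf and a subtree) whose sizes can be rearranged, and use strict concavity of $f$ together with the inductive characterization of $g$ to exhibit a strict decrease in $\Phi_f$, contradicting optimality. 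Handling the boundary between the two cases of Corollary~\ref{cor:coronado2cases} (the threshold $n = 3\cdot 2^{k_n-2}$, where both descriptions must agree) and verifying strictness throughout—so as to obtain \emph{uniqueness} of the minimizer and not merely that GFB attains the minimum—will require the most careful bookkeeping; the strict concavity inequality (with the $x\neq y$ hypothesis) is exactly what delivers uniqueness rather than a tie.
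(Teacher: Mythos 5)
Your overall plan---strong induction via the root decomposition $\Phi_f(T)=f(n)+\Phi_f(T_a)+\Phi_f(T_b)$, reduction to the GFB shapes on each side, and then optimization over the split $(n_a,n_b)$---is a legitimate skeleton, and you correctly identify that everything hinges on showing $m\mapsto g(m)+g(n-m)$ with $g(m)=\Phi_f(T_m^{gfb})$ is uniquely minimized at the GFB split. But that step is exactly the theorem's content in disguise, and you do not prove it; you only say you ``expect'' a convexity/superadditivity property of $g$ or an exchange inequality to work. Worse, the specific inequality you float---that moving a leaf from the larger side to the smaller side strictly decreases $g(n_a)+g(n_b)$ whenever the split is not the GFB split---points in the wrong direction. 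For $n=10$ with $f=\log(\cdot-1)$ one computes $g(6)+g(4)=\log 45$ while $g(5)+g(5)=\log 64$, so the GFB split $(6,4)$ strictly beats the balanced split $(5,5)$: starting from $(5,5)$ the improving move makes the split \emph{less} balanced, and no monotone ``second-difference'' or leaf-transfer argument of the kind you sketch will select the power-of-two-governed GFB split. So as written the proposal has a genuine gap at its central step, and the tool proposed to fill it would fail.

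For contrast, the paper avoids the root-split optimization entirely. It takes a minimal counterexample $T$ and exploits the greedy clustering order: since $T$ is not GFB, there exist two subtrees $T_a,T_b$ present in both $T$ and $T_n^{gfb}$ that are siblings in $T_n^{gfb}$ but not in $T$, chosen minimal. Depending on their relative position in $T$, either a single subtree swap strictly decreases $\Phi_f$ using only strict monotonicity of $f$, or (when $T_a$ and $T_b$ sit in disjoint positions with siblings $T_c,T_d$ satisfying $n_c>n_b$ and $n_d>n_a$) one forms two swapped trees $T',T''$ and a weight $\lambda=\frac{n_d-n_a}{n_d-n_a+n_c-n_b}$ for which every affected subtree size of $T$ is the $\lambda$-convex combination of the corresponding sizes in $T'$ and $T''$; strict concavity then gives $\Phi_f(T)-\lambda\Phi_f(T')-(1-\lambda)\Phi_f(T'')>0$, contradicting minimality. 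If you want to salvage your inductive route, you would need to prove the split-optimality lemma by some comparable exchange argument anyway, so the paper's direct approach is the more economical one.
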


\begin{proof} Towards a contradiction, we assume $n$ is the smallest number where the minimizing tree is not the GFB tree. We let $T$ be a rooted binary tree with $n$ leaves minimizing $\Phi_f$, such that $T$ is not a GFB tree. Since there is only one tree when $n=1$, we can assume $n>1$.

By Lemma~\ref{lem:GFBmaxsubtrees}, the subtrees of a GFB tree are also GFB trees.  $T$ is by assumption not a GFB tree, so there will be at least two pending subtrees of $T$ that would form a GFB tree but do not have a common parent in $T$. This is due to the fact that $T$, just like every rooted binary tree, can be obtained from clustering, starting with a forest of one-leaf trees and clustering two trees at a time until a single tree is obtained. We do this to build $T$ by using the two smallest available trees at any point in time (just as in the GFB construction), until no such further clustering is possible. So there must be two trees which are present both in $T$ and $\Tgfb$,  that have a common parent in $\Tgfb$ but do not in $T$.  Let $T_a$ and $T_b$ be the smallest such subtrees of $T$. 

We note that as all previous trees of $T_n^{gfb}$ were also formed when building $T$, only two situations are possible: 

\begin{enumerate}
\item One of the two trees $T_a$, $T_b$, say $T_a$, is contained in $T$ as a sibling tree to some tree $T_c$, and the other one, say $T_b$, is a sibling to a subtree of $T$ containing the first tree, say $T_a$. This situation is depicted in Figure \ref{fig:case1} at the left-hand side.
\item The tree $T_a$ is a sibling to some tree $T_c$ not containing $T_b$, and $T_b$ is a sibling to some tree $T_d$ not containing $T_a$. This situation is depicted in Figure \ref{fig:case2} at the right-hand side.
\end{enumerate}

\par \vspace{0.2cm} 

We now consider these two cases separately.

\begin{figure}
  \centering
  \begin{tabular}{c}
    \begin{tikzpicture}
    [
    level distance=10mm,
    level/.style={sibling distance=16mm}
    ]

    \node { } 
        child { edge from parent[dashed]
            child { node {$\rho_{ab}$} edge from parent[dashed]
                child{ node [fill=black,name=n1]{ } edge from parent[solid]
            child{ node [diamond,fill = lightgray,label=above left:$\Path$]{ } edge from parent [dashed,draw,
                    preaction={
                    draw,lightgray,-,
                    double=lightgray,
                    double distance=20\pgflinewidth,
                    }] 
                child{ node [fill = black]{ } edge from parent [draw,solid,
                    preaction={
                    draw,solid,lightgray,-,
                    double=lightgray,
                    double distance=20\pgflinewidth,
                    }]
                    child{ node { } 
                        child{ node[itri, xshift=1mm,yshift=3mm]{$T_{a}$} } }
                    child{ node { } 
                        child{ node[itri, xshift=-.85mm,yshift=3mm]{$T_{c}$} } }
                }                
                child{ node { } edge from parent [dashed] 
                    child{ node[itri, xshift=-1mm, yshift=3mm,dashed]{$T_{1}$} } 
                }
            }
            child{ node { } edge from parent  [dashed] 
                child{ node[itri, xshift=-1mm, yshift=2.75mm,dashed]{$T_{k}$} } 
            }
        }
                child{ node { } edge from parent[solid]
                child{ node[itri, xshift=-1mm, yshift=2.875mm,solid]{$T_{b}$} }    
            }
            }
            child { edge from parent[dashed] 
                node[itri, xshift=4mm, yshift=-9mm, dashed]{ } 
            }
        }
        child { edge from parent[dashed]
            node[itri, xshift=4mm, yshift=-9mm, dashed]{ } 
        };
    \end{tikzpicture}
    \\
    Case 1 \\
    \\
    \begin{tikzpicture}
    [
    level distance=10mm,
    level/.style={sibling distance={9cm/max(3,#1)}},
    ]

    \node { } 
        child { edge from parent[dashed]
            child { node {$\rho_{ab}$} edge from parent[dashed]
                child{ node [fill=black,name=n1]{ } edge from parent[solid]
            child{ node [diamond,fill = lightgray,label=above left:$\Pone$]{ } edge from parent [dashed,draw,
                    preaction={draw,lightgray,-,
                    double=lightgray,
                    double distance=20\pgflinewidth,
                    }] 
                child{ node [fill = black]{ } edge from parent [draw,solid,
                    preaction={
                    draw,solid,lightgray,-,
                    double=lightgray,
                    double distance=20\pgflinewidth,
                    }]
                    child{ node { } 
                        child{ node[itri, xshift=1mm,yshift=3mm]{$T_{a}$} } }
                    child{ node { } 
                        child{ node[itri, xshift=-.85mm,yshift=3mm]{$T_{c}$} } }
                }                
                child{ node { } edge from parent [dashed] 
                    child{ node[itri, xshift=-1mm, yshift=3mm,dashed]{$T_{1}$} } 
                }
            }
            child{ node { } edge from parent  [dashed] 
                child{ node[itri, xshift=-1mm, yshift=2.75mm,dashed]{$T_{k}$} } 
            }
        }
            child{ 
                node[fill=black] { } edge from parent[solid]
                child{ node { } edge from parent[dashed]
                child{ node[itri, xshift=1mm, yshift=3mm,dashed]{$\hat{T}_{l}$}
                edge from parent[dashed]
                }}        
            child{ 
                node[diamond,fill = lightgray,label=above right:$\Ptwo$] { } edge from parent
                [dashed,draw,
                    preaction={draw,lightgray,-,
                    double=lightgray,
                    double distance=20\pgflinewidth,
                    }]
                child{ node { } edge from parent[dashed]
                child{ node[itri, xshift=1mm, yshift=3mm,dashed]{$\hat{T}_{1}$}
                edge from parent[dashed]
                }}
                child{
            node [fill=black] { } edge from parent
            [draw,solid,
                    preaction={
                    draw,solid,lightgray,-,
                    double=lightgray,
                    double distance=20\pgflinewidth,
                    }]
                child{ node { } edge from parent[solid]
                child{ node[itri, xshift=1mm, yshift=3mm,solid]{$T_{b}$} }    
                }
                child{ node { } edge from parent[solid]
                child{ node[itri, xshift=-1mm, yshift=2.875mm,solid]{$T_{d}$} }    
                }
                }
            }}}
            child { edge from parent[dashed] 
                node[itri, xshift=8mm, yshift=-9mm, dashed]{ } 
            }
        }
        child { edge from parent[dashed]
            node[itri, xshift=7mm, yshift=-8mm, dashed]{ } 
        };
    \end{tikzpicture}
    \\
    Case 2 \\
  \end{tabular}

  \caption{Top:  Case 1 of the proof of Theorem \ref{thm:minphif}. Here, $T_a$ is a sibling of $T_c$ in $T$, with size strictly larger than that of $T_b$, and $T_b$ is a sibling of a subtree of $T$ containing $T_a$. The highlighted path $\Path$ contains all vertices whose induced subtree sizes change when subtrees $T_b$ and $T_c$ are swapped. The dotted edges and subtrees may or may not exist in $T$. 
  Bottom:  Case 2 of the proof of Theorem \ref{thm:minphif}. Here, $T_a$ is a sibling of $T_c$ in $T$, with size strictly larger than that of $T_b$,  but not containing $T_b$. Similarly, $T_b$ is a sibling of $T_d$ in $T$, with size  strictly larger than that of $T_a$, but not  containing $T_a$. The highlighted paths $\Pone$ and $\Ptwo$ contain all vertices whose induced subtree sizes  change when $T_b$ and $T_c$ or $T_a$ and $T_d$ are swapped. The dotted edges and subtrees may or may not exist in $T$.}
  \label{fig:case1}
  \label{fig:case2}
\end{figure}

\begin{enumerate}
\item We start with Case 1 as depicted in Figure~\ref{fig:case1}. We construct a new tree $T'$ as follows: $T'$ is like $T$, but the subtrees $T_b$ and $T_c$ are interchanged. We let $\Path$ be the path highlighted in Figure~\ref{fig:case1}, where $\Path$ is the path containing all vertices $v$ for which we have $n_v(T)\neq n_v(T')$, where $n_v(T)$ and $n_v(T')$ denote the induced subtree sizes of $v$ in $T$ and $T'$, respectively. We note that for all vertices $v \in \mathring{V}(T)\setminus {\Path} $ we have $n_v(T)=n_v(T')$, as these vertices' subtree sizes are not affected by the subtree swap of $T_b$ and $T_c$. This reasoning leads to the following observation:

\begin{align}\label{eq:phifcase1}
\Phi_f(T')
&= \Phi_f(T)-\sum\limits_{v \in \Path} \left(f(n_v(T))- f(n_v(T)-n_c+n_b)\right),
\end{align}
where $n_b$ and $n_c$ are the subtree sizes of $T_b$ and $T_c$, respectively. 

Now, we know that $n_c>n_b$, because if $n_c< n_b$, the GFB algorithm would not have merged $T_a$ and $T_b$ (but $T_a$ and $T_c$ instead). Moreover, if we had $n_b=n_c$, then $T_b$ and $T_c$ would be isomorphic, as all steps prior to merging $T_a$ and $T_b$ worked in $T$ in the same way as in $T_n^{gfb}$ (by choice of $(T_a,T_b)$ as the minimal subtree of $T_n^{gfb}$ which could not be formed to build $T$).

In this case, however, $(T_a,T_b)$ would be isomorphic to $(T_a,T_c)$ and thus be contained in $T$, a contradiction to our choice of $T_a$ and $T_b$. Thus, $n_c>n_b$ and therefore $n_v(T)-n_c+n_b<n_v(T)$ for all $v \in \mathcal{P}$. As $f$ is strictly increasing, this implies that $f(n_v(T))> f(n_v(T)-n_c+n_b)$. With Equation \eqref{eq:phifcase1}, this directly leads to:

\begin{align}\
\Phi_f(T')
&= \Phi_f(T)-\sum\limits_{v \in \Path} \underbrace{\left(f(n_v(T))- f(n_v(T)-n_c+n_b)\right)}_{>0}<\Phi_f(T).
\end{align}

Thus, we have $\Phi_f(T')<\Phi_f(T)$, which contradicts the minimality of $T$ and thus completes this case.

\item Next, we let $T$ be as depicted as Case 2 in Figure \ref{fig:case2}, with $T_a$ a sibling to some tree $T_c$ not containing $T_b$, and $T_b$  a sibling to some tree $T_d$ not containing $T_a$. We now construct two trees $T'$ and $T''$ as follows: $T'$ is like $T$, but subtrees $T_b$ and $T_c$ are swapped. Similarly, $T''$ is like $T$, but subtrees $T_a$ and $T_d$ are swapped. We denote by $n_a$, $n_b$, $n_c$ and $n_d$ the number of leaves in $T_a$, $T_b$, $T_c$ and $T_d$, respectively. As in the first case, we necessarily have $n_d>n_a$ and  $n_c>n_b$ by choice of $T_a$ and $T_b$. 

We let $T_1,\ldots, T_k$ be the subtrees pending on path $\Pone$ as depicted in Figure \ref{fig:case2} if these subtrees exist in $T$. Similarly, we let $\widehat{T}_1,\ldots, \widehat{T}_l$ be the subtrees pending on path $\Ptwo$ as depicted in Figure \ref{fig:case2} if these subtrees exist in $T$. We denote by $n_i$ the number of leaves in $T_i$ for $i=1,,\ldots,k$, and by $\widehat{n}_i$ the number of leaves in $\widehat{T}_i$ for $i=1,\ldots,l$. Moreover, we define $n_0=\widehat{n}_0=0$. With these definitions, we can now introduce variables $t_i$ and $\widehat{t}_i$ defined as follows: $t_i=\sum\limits_{j=0}^i n_j$ for $i=0,\ldots, k$ and $\widehat{t}_i=\sum\limits_{j=0}^i \widehat{n}_j$ for $i=0,\ldots, l$. 

\begin{itemize}
\item We enumerate the vertices of $\Pone$ such that $v_0$ is the parent of $T_a$ and $T_c$ in $T$ and such that $v_i$ is the parent of $v_{i-1}$ for each $i=1,\ldots,k$ if $k>0$; that is, if trees $T_1,\ldots,T_k$ exist in $T$. Then, for the subtree sizes $n_{v_i}$ we derive: \begin{itemize} \item $n_{v_i}(T) = n_a+n_c+t_i$ for $i=0,\ldots,k$, 
\item $n_{v_i}(T') = n_a+n_b+t_i$ for $i=0,\ldots,k$,
\item $n_{v_i}(T'') = n_c+n_d+t_i$ for $i=0,\ldots,k$.
\end{itemize}

\item Similarly, we enumerate the vertices of $\Ptwo$ such that $w_0$ is the parent of $T_b$ and $T_d$ in $T$ and such that $w_i$ is the parent of $w_{i-1}$ for each $i=1,\ldots,l$ if $l>0$; that is, if trees $\widehat{T}_1,\ldots,\widehat{T}_l$ exist in $T$. Then, for the subtree sizes $n_{w_i}$ we derive: \begin{itemize} \item $n_{w_i}(T) = n_b+n_d+\widehat{t}_i$ for $i=0,\ldots,l$, 
\item $n_{w_i}(T') = n_c+n_d+\widehat{t}_i$ for $i=0,\ldots,l$,
\item $n_{w_i}(T'') = n_a+n_b+\widehat{t}_i$ for $i=0,\ldots,l$.
\end{itemize}
\end{itemize}

\par\vspace{0.2cm}
We now set $\lambda=\frac{n_d-n_a}{n_d-n_a+n_c-n_b}$. Since we have $n_d>n_a$ and $n_c>n_b$, we have  $\lambda \in (0,1)$. We now show that this choice of $\lambda$ has two additional properties, which will be useful regarding the concavity of $f$:

\begin{itemize}
\item We have $n_b+n_d+\widehat{t}_i = \lambda(n_c+n_d+\widehat{t}_i)+(1-\lambda)(n_a+n_b+\widehat{t}_i)$ for all $i=0,\ldots,l$:

\begin{align*}
&\lambda(n_c+n_d+\widehat{t}_i)+(1-\lambda)(n_a+n_b+\widehat{t}_i)\\&=\lambda n_c+\lambda n_d -\lambda n_a - \lambda n_b + n_a + n_b + \widehat{t}_i \\
&= n_a+n_b+\widehat{t}_i+\lambda(n_d-n_a+n_c-n_b)\\
&= n_a+n_b+\widehat{t}_i+\frac{n_d-n_a}{n_d-n_a+n_c-n_b} (n_d-n_a+n_c-n_b)\\
&=n_a+n_b+\widehat{t}_i+n_d-n_a\\
&=n_b+n_d+\widehat{t}_i.
\end{align*}

\item Analogously, we have $n_a+n_c+t_i = \lambda(n_a+n_b+t_i)+(1-\lambda)(n_c+n_d+t_i)$ for all $i=0,\ldots,k$.
\end{itemize}

\par\vspace{0.2cm}

The first one of these two  points above shows that there exists  $\lambda \in [0,1]$ with $n_b+n_d+\widehat{t}_i = \lambda(n_c+n_d+\widehat{t}_i)+(1-\lambda)(n_a+n_b+\widehat{t}_i)$, so that we have for all $i=0,\ldots, l$:  \begin{align}\label{eq:concave1}
f(n_b+n_d+\widehat{t}_i)&= f(\lambda(n_c+n_d+\widehat{t}_i)+(1-\lambda)(n_a+n_b+\widehat{t}_i))\\ &>\lambda f(n_c+n_d+\widehat{t}_i) +(1-\lambda)f(n_a+n_b+\widehat{t}_i)\nonumber,\end{align}
where the inequality holds due to the strict concavity of $f$. Analogously, by the second point, we have for all $i=0,\ldots, k$: 

\begin{align}\label{eq:concave2}
f(n_a+n_c+t_i)&= f(\lambda(n_a+n_b+t_i)+(1-\lambda)(n_c+n_d+t_i))\\ &>\lambda f(n_a+n_b+t_i) +(1-\lambda)f(n_c+n_d+t_i).\nonumber\end{align}

Now we are finally in a position to derive a contradiction, namely by investigating the term $\Phi_f(T)-\lambda \Phi_f(T')-(1-\lambda)\Phi_f(T'')$ in two different ways.

\begin{enumerate}
\item By assumption, $T$ is a minimizer of $\Phi_f$, so we  have that $\Phi_f(T')\geq \Phi_f(T)$ as well as $\Phi_f(T'')\geq \Phi_f(T)$. Thus: 

\begin{align}\label{eq:easycase}\Phi_f(T)-\lambda \Phi_f(T')-(1-\lambda)\Phi_f(T'') & \leq \Phi_f(T)-\lambda \Phi_f(T)-(1-\lambda)\Phi_f(T)=0.
\end{align}

\item We now split the sum of $\Phi_f(T)=\sum\limits_{v \in \mathring{V}(T)}f(n_v)$ into three partial sums, namely the inner vertices belonging to $\mathcal{P}_1$, the ones belonging to $\mathcal{P}_2$ and the ones belonging to neither one of the paths. Note that as all vertices that are not contained in any one of the paths are not affected by the swaps leading from $T$ to $T'$ or $T''$, respectively, the last sum is the same for $\Phi_f(T)$, $\Phi_f(T')$ and $\Phi_f(T'')$. From our above observations concerning the subtree sizes $n_{v_i}$ of  $\mathcal{P}_1$ and $n_{w_i}$ of $\mathcal{P}_2$, we derive:

\begin{align}
&\Phi_f(T)-\lambda \Phi_f(T')-(1-\lambda)\Phi_f(T'') \\&=
\sum\limits_{v \in \Pone} f(n_v(T)) + \sum\limits_{v \in \Ptwo} f(n_v(T)) +\sum\limits_{v \in \mathring{V}\setminus \{\Pone,\Ptwo\}} f(n_v(T)) \nonumber \\
&-\lambda \cdot \sum\limits_{v \in \Pone} f(n_v(T')) -\lambda \cdot  \sum\limits_{v \in \Ptwo} f(n_v(T')) \nonumber \\ &-\lambda \cdot \sum\limits_{v \in \mathring{V}\setminus \{\Pone,\Ptwo\}} f(n_v(T'))  \nonumber -(1-\lambda) \cdot \sum\limits_{v \in \Pone} f(n_v(T'')) \nonumber \\ 
& -(1-\lambda) \cdot \sum\limits_{v \in \Ptwo} f(n_v(T'')) \nonumber \\ &-(1-\lambda)\cdot \sum\limits_{v \in \mathring{V}\setminus{\{\Pone,\Ptwo\}}} f(n_v(T'')) \nonumber 
\end{align}

\begin{align}
&=\sum\limits_{i=0}^k \underbrace{\left( f(n_a+n_c+t_i)-\lambda f(n_a+n_b+t_i) -(1-\lambda) f(n_c+n_d+t_i) \right)}_{>0 \mbox{ by Eq. \eqref{eq:concave2}}}   \nonumber \\
&+\sum\limits_{i=0}^l \underbrace{\left( f(n_b+n_d+\widehat{t}_i)-\lambda f(n_c+n_d+\widehat{t}_i) -(1-\lambda) f(n_a+n_b+\widehat{t}_i) \right)}_{>0 \mbox{ by Eq. \eqref{eq:concave1}}} &  \nonumber \\
&>0 \label{eq:contradiction}.
\end{align}

The obvious contradiction between Inequalities \eqref{eq:easycase}, which states that $\Phi_f(T)-\lambda \Phi_f(T')-(1-\lambda)\Phi_f(T'') \leq 0$ and \eqref{eq:contradiction}, which states that $\Phi_f(T)-\lambda \Phi_f(T')-(1-\lambda)\Phi_f(T'') > 0$, shows that our assumption concerning the existence of $T$ must have been wrong. In fact, this contradiction shows that at least one of the two trees $T'$ and $T''$ must have a lower $\Phi_f$ value than $T$. This completes the proof and thus shows that $T_n^{gfb}$ is the unique tree minimizing $\Phi_f$. 
\end{enumerate}

\end{enumerate}

\end{proof}

Before we investigate the implications of Theorem \ref{thm:minphif} on imbalance indices, we derive the following corollary.

\begin{corollary}\label{cor:prodmin} Let $n\in \mathbb{N}$, $n \geq 2$ and let $c \in \mathbb{R}$ with $c > -2$. Let $f:\mathbb{R}_{\geq 2} \rightarrow \mathbb{R}$ be a strictly increasing function. Then, we have that $T_n^{gfb}$ is the unique minimizer of $f(\pi_c(T))$ among all rooted binary trees $T$ with $n$ leaves, where 
$\pi_c(T)= \prod\limits_{v \in \mathring{V}(T)} (n_v+c)$. In particular, this holds for the identity function, $f(x)=x$ for all $x \in \mathbb{R}_{\geq 2}$.
\end{corollary}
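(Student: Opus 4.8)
The plan is to reduce the claim to a direct application of Theorem~\ref{thm:minphif} through two order-preserving reparametrizations. First I would record the positivity that drives everything: since $c > -2$ and every internal vertex $v$ satisfies $n_v \geq 2$, each factor $n_v + c$ is strictly positive, so $\pi_c(T) > 0$ for every tree $T$ and $\log \pi_c(T)$ is well defined. Because $f$ is strictly increasing (on the range where $\pi_c$ takes its values), for any two trees $T, T'$ we have $f(\pi_c(T)) < f(\pi_c(T'))$ if and only if $\pi_c(T) < \pi_c(T')$. Hence $T_n^{gfb}$ is the unique minimizer of $f \circ \pi_c$ exactly when it is the unique minimizer of $\pi_c$ itself, and it suffices to prove the latter.

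Next I would pass to logarithms. Since $\log$ is strictly increasing and $\pi_c(T) > 0$, the tree $T_n^{gfb}$ uniquely minimizes $\pi_c$ if and only if it uniquely minimizes
\[
\log \pi_c(T) = \sum_{v \in \mathring{V}(T)} \log(n_v + c).
\]
This final expression is exactly of the form $\Phi_g(T) = \sum_{v \in \mathring{V}(T)} g(n_v)$ covered by Theorem~\ref{thm:minphif}, with $g(x) = \log(x + c)$.

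The only substantive step is then to verify that $g$ satisfies the hypotheses of Theorem~\ref{thm:minphif} on the domain $\mathbb{R}_{\geq 2}$. For $x \geq 2$ and $c > -2$ we have $x + c > 0$, so $g$ is well defined on all of $\mathbb{R}_{\geq 2}$; it is strictly increasing there because it is the composition of the strictly increasing map $x \mapsto x + c$ with the strictly increasing $\log$; and it is strictly concave because $g''(x) = -(x+c)^{-2} < 0$ on the domain (equivalently, $g$ is the composition of an affine map with the strictly concave $\log$). With these two properties in hand, Theorem~\ref{thm:minphif} applies verbatim and gives that $T_n^{gfb}$ is the unique minimizer of $\Phi_g = \log \pi_c$, which by the two equivalences above is the unique minimizer of $f \circ \pi_c$. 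The identity function is one instance of a strictly increasing $f$, so the final sentence follows immediately.

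I do not expect genuine difficulty here: the heavy combinatorial work, namely the subtree-swapping argument, has already been carried out in Theorem~\ref{thm:minphif}, and the corollary is essentially a packaging result. The two points that must be stated carefully are the positivity $n_v + c > 0$ guaranteed precisely by $c > -2$ (which legitimizes both the logarithm and the second-derivative computation), and the elementary but necessary observation that composing with a strictly increasing function preserves unique minimizers.
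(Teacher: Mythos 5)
Your proposal is correct and follows essentially the same route as the paper: establish positivity of $n_v+c$ from $c>-2$, pass to $\log\pi_c(T)=\sum_{v}\log(n_v+c)$, apply Theorem~\ref{thm:minphif} to the strictly increasing and strictly concave function $x\mapsto\log(x+c)$, and then undo the two order-preserving compositions. Your explicit verification of strict concavity via the second derivative is slightly more careful than the paper's wording, but the argument is the same.
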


\begin{proof} We start with considering the product function $\pi_c$. We let $c$ and $n$ be as described in the corollary. We have  $n_v\geq 2$ for all inner nodes $v$ of a rooted binary tree $T$, as the smallest possible subtree size is 2 (which is the case in which $v$ is the parent of a cherry). Thus, we have $n_v+c>0$ for all $v \in \mathring{V}$, as $c > -2$ by assumption. This, however, means that all factors in $\pi_c(T)$ are strictly larger than 0, which shows that $\pi_c(T)>0$. This, in turn, means that $\log(\pi_c(T))$ is defined. 

Now we consider this term further:  $$ \log(\pi_c(T))=\log\left( \prod\limits_{v \in \mathring{V}(T)} (n_v+c) \right) = \sum\limits_{v \in \mathring{V}(T)} \log(n_v+c).$$
As the logarithm is strictly concave and strictly monotonically increasing, we know by Theorem \ref{thm:minphif} that the latter sum is uniquely minimized by $T_n^{gfb}$. Thus, the same applies to $\log(\pi_c(T))$. However, by the strict monotonicity of $\log$, the minimum of $\log(\pi_c(T))$ is reached precisely when the minimum of $\pi_c(T)$ is reached, which shows that $T_n^{gfb}$ is also the unique tree minimizing $\pi_c(T)$. 

Now, for any strictly increasing function $f:\mathbb{R}_{\geq 2}\rightarrow \mathbb{R}$, we have that $T_n^{gfb}$ is also the unique minimizer of $f(\pi_c(T))$ due to the monotonicity of $f$. This completes the proof. 
\end{proof}

\subsection{Implications of the extremal GFB properties on measures of tree balance} \label{sec:bal} The main aim of this section is two-fold: First, we want to show that both functions $\Phi_f$ and $\pi_c$ as defined in the previous section form families of imbalance indices for certain choices of $f$ and $c$, respectively. We will continue to show that the imbalance index family based on the product function and a constant $c$ is merely a subfamily of the imbalance index family based on strictly increasing and strictly concave functions $f$. 

Then, we want to use our  findings to characterize all trees minimizing the $\widehat{s}$-shape and $Q$-shape statistics, thus answering several open questions from \cite{fischerBook}. 

However, in order to show that a function is an imbalance index, analyzing the minimum as in the previous section is not sufficient. Instead, we also need to investigate the caterpillar in order to investigate the maximum. We start with $\Phi_f$. Note that the following theorem can already be found in \cite[Theorem 4.7]{HamannThesis}, albeit with a different proof.

\begin{theorem}\label{thm:maxphif} Let $n\in \mathbb{N}$, $ n \geq 2$ and let $f:\mathbb{R}_{\geq 2}\rightarrow \mathbb{R}$ be strictly monotonically increasing, with $f(n_1)>f(n_2)$ if and only if $n_1>n_2$. We consider $\Phi_f(T)= \sum\limits_{v \in \mathring{V}(T)} f(n_v)$. Then, $T_n^{cat}$ is the unique tree maximizing $\Phi_f$.  
\end{theorem}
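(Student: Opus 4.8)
The plan is to argue by strong induction on the number of leaves $n$, exploiting the standard decomposition $T=(T_a,T_b)$ together with the additive identity $\Phi_f(T)=f(n)+\Phi_f(T_a)+\Phi_f(T_b)$ (the root contributes $f(n_\rho)=f(n)$, and a single leaf contributes the empty sum $0$). First I record that the internal subtree sizes of the caterpillar are exactly $2,3,\dots,n$, so that writing $g(m):=\Phi_f(T_m^{cat})=\sum_{j=2}^{m}f(j)$ (with $g(1)=0$) one has the recursion $g(m)=f(m)+g(m-1)$. The base case $n=2$ is immediate, since the cherry is the only tree on two leaves and equals $T_2^{cat}$.

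For the inductive step, take any $T=(T_a,T_b)$ on $n\ge 3$ leaves with $n_a+n_b=n$ and, without loss of generality, $n_a\ge n_b\ge 1$; note that then $n_a\ge \lceil n/2\rceil\ge 2$. Applying the induction hypothesis to each maximal pending subtree gives $\Phi_f(T)\le f(n)+g(n_a)+g(n_b)$, with equality precisely when $T_a=T_{n_a}^{cat}$ and $T_b=T_{n_b}^{cat}$. It then suffices to show that the right-hand side, as a function of the split $(n_a,n_b)$, is maximized only by the caterpillar split $(n-1,1)$, for which it equals $g(n)$, and to track the equality cases to obtain uniqueness.

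The heart of the argument — and the step I expect to be the main obstacle — is this comparison of splits. For a split with $n_b\ge 2$ I would compare $g(n-1)$ against $g(n_a)+g(n_b)$ by cancelling the common prefix $\sum_{j=2}^{n_a}f(j)$ and pairing the remaining terms: the tail $\sum_{j=n_a+1}^{n-1}f(j)$ and the block $\sum_{j=2}^{n_b}f(j)$ each contain exactly $n_b-1$ summands, and matching the $i$-th term of each reduces to comparing $f(n_a+i)$ with $f(1+i)$ for $i=1,\dots,n_b-1$. Since $n_a\ge n_b\ge 2$, we have $n_a+i>1+i$ with both arguments at least $2$, so strict monotonicity of $f$ gives $f(n_a+i)>f(1+i)$ for every $i$; summing yields $g(n-1)>g(n_a)+g(n_b)$, hence $g(n)=f(n)+g(n-1)>f(n)+g(n_a)+g(n_b)$. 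Crucially this uses only monotonicity and no concavity, exactly matching the hypotheses and explaining why the maximizer statement holds more broadly than the minimizer statement of Theorem \ref{thm:minphif}.

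Combining the pieces gives uniqueness: if $T$ has the split $(n-1,1)$ and $T_a=T_{n-1}^{cat}$, then $T=T_n^{cat}$ and $\Phi_f(T)=g(n)$; in every other case at least one of the two inequalities above is strict, so $\Phi_f(T)<g(n)$. Hence $T_n^{cat}$ is the unique maximizer, completing the induction. I would also note in passing that the whole computation can be repackaged as a single domination statement — for any $T$ on $n$ leaves, the sorted multiset of internal subtree sizes is termwise dominated by $(2,3,\dots,n)$ — but since proving that lemma directly requires essentially the same charging, I prefer to fold it into the induction above.
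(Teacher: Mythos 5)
Your proposal is correct and follows essentially the same route as the paper: both arguments use the additive identity $\Phi_f(T)=f(n)+\Phi_f(T_a)+\Phi_f(T_b)$ together with induction on $n$ (the paper phrases it as a minimal counterexample) to reduce to the case where both maximal pending subtrees are caterpillars, and both rely only on strict monotonicity of $f$. The sole difference is the finishing computation: the paper transfers a single leaf from the shorter chain to the longer one and compares $f(k+3)$ with $f(l+2)$, whereas you compare the caterpillar split $(n-1,1)$ against an arbitrary split by cancelling the common prefix of $g(n-1)$ and $g(n_a)$ and pairing $f(n_a+i)$ with $f(1+i)$ termwise -- a valid step, since all arguments involved lie in $\mathbb{R}_{\geq 2}$.
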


\begin{proof} 
We prove the statement by contradiction. We suppose that there is a non-caterpillar tree $T$ maximizing $\Phi_f$. We choose the smallest possible $n$ for which such a tree $T$ with $n$ leaves exists.  Thus, for all numbers smaller than $n$, the unique maximizer of $\Phi_f$ is the caterpillar. In particular, this shows that $n\geq 4$, because for any value smaller than 4, there is only one rooted binary tree.

We let $T=(T_1,T_2)$ be the standard decomposition of $T$. Then, $\Phi_f(T)=\Phi_f(T_1)+\Phi_f(T_2)+ f(n)$, where the last summand $f(n)$ results from the root $\rho$ of $T$. This equality shows that $T$ can only  maximize $\Phi_f$ among all trees with $n$ leaves if $T_1$ and $T_2$ maximize $\Phi_f$ among all trees with $n_1$ and $n_2$ leaves, respectively, where $n_1$ is the number of leaves of $T_1$ and $n_2$ is the number of leaves of $T_2$. Thus, as we chose $T$ to be a counterexample of minimal size concerning the statement of the theorem, we know by assumption that $T_1$ and $T_2$ must be caterpillars. Note that this implies that $n_1\geq 2$ and $n_2 \geq 2$ (which is possible as $n_1+n_2=n\geq 4$), because if we had $n_2=1$ and $T_1$ is a caterpillar, then $T$ would be a caterpillar also. The same would happen if $n_1=1$.

Thus, we know that, as $T_1$ and $T_2$ are caterpillars with at least two leaves each, each of them has precisely one cherry. Let $[a_1,b_1]$ denote the cherry of $T_1$ and $[a_2,b_2]$ denote the cherry of $T_2$. The parents of $[a_1,b_1]$ and $[a_2,b_2]$ are denoted by $v_0$ and $w_0$, respectively. Note that on the path from $v_0$ to the root $\rho$ of $T$, there might be more vertices $v_1,\ldots,v_k$, all of which -- if they exist -- are adjacent to a leaf as $T_1$ is a caterpillar. Analogously, there might be more vertices $w_1,\ldots,w_l$ on the path from $w_0$ to $\rho$, all of which -- if they exist -- are adjacent to a leaf as $T_2$ is a caterpillar. Note that this means that $T$ looks as depicted on the left-hand side of Figure \ref{fig:catmax}. We denote the leaves adjacent to $v_i$ with $x_i$ for $i=1,\ldots,k$ (if they exist), and the leaves adjacent to $w_i$ with $y_i$ for $i=1,\ldots, l$ (if they exist).

\begin{figure}
  \centering
  \begin{tabular}{ccc}
    \begin{tikzpicture}[
      level distance=.75cm,
      sibling distance=1.5cm,
      level/.style={sibling distance={2.5cm/max(1,#1)}}
    ]
    \node {$\rho$} 
      child { node {$v_k$} edge from parent
          child { node {$v_1$} edge from parent [dotted] 
              child { node {$v_0$} edge from parent [solid] 
                  child { node {$a_1$} edge from parent }
                  child { node {$b_1$} edge from parent }
              }
              child { node {$x_1$} edge from parent [dotted] }
          }
          child { node {$x_k$} edge from parent [dotted] }
      }
      child { node[draw=black,rectangle,thick,inner sep=2pt] {$\mathbf{w_l}$} edge from parent
          child { node{$y_l$} edge from parent [dotted]}
          child { node{$w_{l-1}$} edge from parent [solid]
              child { node{$y_{l-1}$} edge from parent [dotted]}
              child { node {$w_1$} edge from parent [dotted]
                  child { node{$y_{1}$} edge from parent [dotted]}
                  child { node {$w_0$} edge from parent [solid]
                      child { node {$a_2$} edge from parent }
                      child { node {$b_2$} edge from parent }
                  }
              }        
          }
      };
    \end{tikzpicture} 
    &&
    \begin{tikzpicture}[
      level distance=.75cm,
      sibling distance=1.5cm,
      level/.style={sibling distance={2.5cm/max(1,#1)}}
    ]
    \node {$\rho$} 
      child { node[draw=black,rectangle,thick,inner sep=2pt] {$\mathbf{v_{k+1}}$} edge from parent
          child { node {$v_k$} edge from parent
              child { node {$v_1$} edge from parent [dotted] 
                  child { node {$v_0$} edge from parent [solid] 
                      child { node {$a_1$} edge from parent }
                      child { node {$b_1$} edge from parent }
                  }
                  child { node {$x_1$} edge from parent [dotted] }
              }
              child { node {$x_k$} edge from parent [dotted] }
          }
          child {node{$y_l$} edge from parent [dotted] }
      }
      child { node{$w_{l-1}$} edge from parent [solid]
          child { node{$y_{l-1}$} edge from parent [dotted]}
          child { node {$w_1$} edge from parent [dotted]
              child { node{$y_{1}$} edge from parent [dotted]}
              child { node {$w_0$} edge from parent [solid]
                  child { node {$a_2$} edge from parent }
                  child { node {$b_2$} edge from parent }
              }
          }        
      };
    \end{tikzpicture}   
    \\
    $T$ && $T'$\\
  \end{tabular}

  \caption{Trees $T$ and $T'$ as described in the proof of Theorem \ref{thm:maxphif}. The only differences between the subtree sizes are at nodes $w_l$ and $v_{k+1}$, highlighted with a box.}
  \label{fig:catmax}
\end{figure}

Now we assume without loss of generality that $k\geq l$ and consider $w_l$ (note that $l$ might be 0) and consider a leaf $z_l$ adjacent to $w_l$ (note that $z_l$ might be either $a_2$ or $b_2$ if $l=0$; otherwise we have $z_l=y_l$). We now create a tree $T'$ by deleting edge $(w_l,z_l)$, subdividing edge $(\rho,v_k)$ by introducing a new degree-2 vertex $v_{k+1}$ and then adding a new edge $(v_{k+1},z_l)$ and suppressing $w_l$. The resulting tree $T'$ is  depicted on the right-hand side of Figure \ref{fig:catmax}. Note that between $T$ and $T'$, all subtree sizes are equal except for that of $v_{k+1}$, which equals $k+3$ in $T'$ as above but does not exist in $T$, and that of $w_l$, which equals $l+2$ (stemming from the fact that $w_l$ is ancestral to $a_2$, $b_2$ and $y_1,\ldots,y_l$ if the latter leaves exist). Thus, we derive for $\Phi_f(T')$:
$$ \Phi_f(T') =  \Phi_f(T) + f(k+3)-f(l+2),$$
which immediately shows that $\Phi_f(T')>\Phi_f(T)$, as we assumed that $k\geq l$. This contradicts the maximality of $T$ and thus completes the proof. 
\end{proof}

The maximality of the caterpillar now immediately leads to the following corollary, which shows that each $\Phi_f$  is indeed an imbalance index.

\begin{corollary}\label{cor:phifIndex} Let $n \in \mathbb{N}, n \geq 2$ and let  $f:\mathbb{R}_{\geq 2}\rightarrow \mathbb{R}$ be a strictly increasing and strictly concave function. Then,  $\Phi_f$ is an imbalance index. Moreover, the GFB tree $T_n^{gfb}$ is the only minimizer of this function in $\BTnstar$.
\end{corollary}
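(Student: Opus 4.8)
The plan is to verify directly the two defining properties of an imbalance index, drawing the maximizer property from Theorem \ref{thm:maxphif} and the minimizer property from Theorem \ref{thm:minphif}, together with the identity $\Tgfb=\Tfb$ that holds whenever $n=2^h$ (recorded in the preliminaries). The corollary is essentially a repackaging of the two preceding theorems, so the work is to check that their hypotheses are met and that the two extremal statements line up exactly with the two clauses of the imbalance-index definition.

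For the maximality clause, I would argue as follows. Since $f$ is strictly increasing by hypothesis, Theorem \ref{thm:maxphif} applies and yields that $\Tcat$ is the unique tree maximizing $\Phi_f$ on $\BTnstar$ for every $n\geq 2$. The only case not covered by that range but required by the definition is $n=1$, where $\BTnstar$ contains a single tree; this tree is by definition the caterpillar and is trivially the unique maximizer (here $\Phi_f$ is the empty sum). Hence $\Tcat$ is the unique maximizer of $\Phi_f$ for all $n\geq 1$, which is the first condition.

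For the minimality clause, note that $f$ is moreover strictly concave, so the full hypothesis of Theorem \ref{thm:minphif} is satisfied and $\Tgfb$ is the unique minimizer of $\Phi_f$ on $\BTnstar$ for every $n\geq 2$. Restricting attention to $n=2^h$, the preliminaries give $\Tgfb=\Tfb$, so the fully balanced tree is the unique minimizer of $\Phi_f$ in precisely the power-of-two cases demanded by the definition (with the $n=1$ instance, if counted among these, again immediate from the single-tree observation). Combining the two clauses shows that $\Phi_f$ is an imbalance index, and the final assertion that $\Tgfb$ is the unique minimizer on all of $\BTnstar$ is simply a restatement of Theorem \ref{thm:minphif}.

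I expect no genuine obstacle here, as the corollary merely glues together the two theorems. The only points requiring any care are bookkeeping ones: confirming that strict monotonicity alone suffices to invoke Theorem \ref{thm:maxphif} while strict monotonicity \emph{and} strict concavity are both needed for Theorem \ref{thm:minphif}, and treating the boundary case $n=1$ separately, since it falls outside the range $n\geq 2$ of both theorems but is still implicated by the maximality clause of the imbalance-index definition.
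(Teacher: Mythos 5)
Your proposal is correct and follows essentially the same route as the paper, which also derives the corollary directly from Theorems \ref{thm:minphif} and \ref{thm:maxphif} together with the identity $\Tfb=\Tgfb$ for $n=2^h$. Your extra attention to the trivial $n=1$ case and to which hypotheses each theorem requires is harmless bookkeeping that the paper leaves implicit.
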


\begin{proof} This is a direct consequence of the definition of an imbalance index in combination with Theorems \ref{thm:minphif} and \ref{thm:maxphif} and the fact that $T_h^{fb}=T_n^{gfb}$ if $n=2^h$.
\end{proof}

We now use Corollary~\ref{cor:phifIndex} to answer several open problems from the literature, most notably from \cite{fischerBook}. Note that while it is already known that the $\widehat{s}$-shape statistic is an imbalance index \cite{fischerBook}, this has not been formally proven yet for the $Q$-shape statistic. However, while for the $\widehat{s}$-shape statistic the minima have already been known for the cases in which $n=2^h$ (it is $T_n^{fb}$ as $\widehat{s}$ is an imbalance index), it has not been known what the minimal trees are in cases in which $n$ is not a power of two \cite{fischerBook}. The following corollary fully characterizes these minima both for $\widehat{s}$ and $Q$ by showing that in both cases, $T_n^{gfb}$ is the unique minimizer.

\begin{corollary}\label{cor:sshape} 
The $\widehat{s}$-shape statistic and the $Q$-shape statistic are both tree imbalance indices with the property that the GFB tree $T_n^{gfb}$ is their only minimizer in $\BTnstar$ for any value of $n \in \mathbb{N}, n \geq 2$.
\end{corollary}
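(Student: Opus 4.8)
The plan is to recognize both statistics as members of the family $\Phi_f$ and then invoke Corollary~\ref{cor:phifIndex}, which already establishes that every such $\Phi_f$ is an imbalance index with $T_n^{gfb}$ as its unique minimizer in $\BTnstar$. Concretely, I would set $f_1(x)=\log(x-1)$, so that $\widehat{s}(T)=\sum_{v\in\mathring{V}(T)}\log(n_v-1)=\Phi_{f_1}(T)$, and $f_2(x)=\log(x)$, so that $Q(T)=\sum_{v\in\mathring{V}(T)}\log(n_v)=\Phi_{f_2}(T)$. Once both statistics are written in this form, the entire content of the corollary follows by quoting the earlier results; no new combinatorial argument is needed.

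The only genuine verification is that $f_1$ and $f_2$ meet the hypotheses of Corollary~\ref{cor:phifIndex} on the domain $\mathbb{R}_{\geq 2}$ on which they are actually evaluated (recall that the smallest possible subtree size is $n_v=2$, attained at the parent of a cherry). For $f_2(x)=\log x$ this is standard: the first derivative $\tfrac{1}{x\ln 2}$ is positive and the second derivative $-\tfrac{1}{x^2\ln 2}$ is negative for all $x\geq 2$, so $f_2$ is strictly increasing and strictly concave. For $f_1(x)=\log(x-1)$, which is well-defined and real for all $x\geq 2$ (indeed for every $x>1$), the first derivative $\tfrac{1}{(x-1)\ln 2}$ is positive and the second derivative $-\tfrac{1}{(x-1)^2\ln 2}$ is negative, so $f_1$ is likewise strictly increasing and strictly concave. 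With both functions checked, Corollary~\ref{cor:phifIndex} applies verbatim and yields that $\widehat{s}=\Phi_{f_1}$ and $Q=\Phi_{f_2}$ are imbalance indices whose unique minimizer is $T_n^{gfb}$ for every $n\geq 2$; for $n=1$ the single tree is simultaneously the caterpillar and the GFB tree, so the defining conditions hold trivially.

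I expect no deep obstacle, since the heavy lifting resides in Theorems~\ref{thm:minphif} and~\ref{thm:maxphif} and their Corollary~\ref{cor:phifIndex}. The only points requiring care are matching the domains — in the $\widehat{s}$ case one must note that $n_v-1>0$ because $n_v\geq 2$ for every internal vertex, so that $f_1$ is actually defined where it is used — and confirming \emph{strict} (rather than merely weak) concavity, which is exactly what preserves uniqueness of the minimizer. One could alternatively route $\widehat{s}$ and $Q$ through the product function, writing $\widehat{s}=\log\pi_{-1}$ and $Q=\log\pi_{0}$ and applying Corollary~\ref{cor:prodmin}; I would note, however, that this product route only delivers the minimizer and not the caterpillar maximizer, so it does not by itself certify the imbalance-index property, making the direct appeal to Corollary~\ref{cor:phifIndex} the cleaner choice.
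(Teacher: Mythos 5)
Your proposal is correct and follows exactly the paper's own argument: the paper likewise defines $f_{\widehat{s}}(i)=\log(i-1)$ and $f_Q(i)=\log(i)$ on $\mathbb{R}_{\geq 2}$, observes both are strictly increasing and strictly concave, identifies $\widehat{s}=\Phi_{f_{\widehat{s}}}$ and $Q=\Phi_{f_Q}$, and applies Corollary~\ref{cor:phifIndex}. Your additional explicit derivative checks and the remark on the alternative $\pi_c$ route are fine but not needed.
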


\begin{proof} Let $n \in \mathbb{N}, n \geq 2$. We define $f_{\widehat{s}}(i)=\log(i-1)$ and $f_Q(i)=\log(i)$ for 
$i\in \mathbb{R}_{\geq 2}$. Note that $f_{\widehat{s}}$ and $f_Q$ are both strictly increasing and strictly concave. Now, by definition of $\widehat{s}$ and $Q$, we have for all rooted binary trees $T$ with $n$ leaves: 

\begin{align*}
\Phi_{f_{\widehat{s}}(T)} &= \sum\limits_{v \in \mathring{V}(T)} f_{\widehat{s}}(n_v)=\sum\limits_{v \in \mathring{V}(T)}\log(n_v-1)=\widehat{s}(T), 
\end{align*}

as well as

\begin{align*}
\Phi_{f_Q(T)} &= \sum\limits_{v \in \mathring{V}(T)} f_{Q}(n_v)=\sum\limits_{v \in \mathring{V}(T)}\log(n_v)=Q(T). 
\end{align*}

Applying Corollary \ref{cor:phifIndex} completes the proof.  
\end{proof}

So, we now know that for all values of $n$, there is only one tree minimizing the $\widehat{s}$-shape statistic (thus answering the question concerning the number of minima posed in \cite[Chapter 9]{fischerBook}), and we have fully characterized this unique minimum as $T_n^{gfb}$. In Section \ref{sec:explicitformulas}, we will also deliver explicit formulas to calculate the minimal value of $\widehat{s}$ for all $n$. 

We now turn our attention to the product function to show that functions of this type also form a family of tree imbalance indices. We again start with considering the caterpillar.

\begin{corollary}\label{cor:prodmaxcat} 
Let $n\in \mathbb{N}, n \geq 2$ and let $c \in \mathbb{R}$, $c> -2$. Then, we have that $T_n^{cat}$ is the unique maximizer of $\pi_c(T)$ in $\BTnstar$, where 
$\pi_c(T)= \prod\limits_{v \in \mathring{V}(T)} (n_v+c)$. 
\end{corollary}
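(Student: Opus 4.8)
The plan is to prove Corollary~\ref{cor:prodmaxcat} by reducing it to the already-established maximality of the caterpillar for sum-functions $\Phi_f$ in Theorem~\ref{thm:maxphif}. The key observation is that $\pi_c(T) = \prod_{v \in \mathring{V}(T)} (n_v + c)$ is a product of strictly positive terms (since $c > -2$ guarantees $n_v + c \geq 2 + c > 0$ for every internal vertex, exactly as noted in the proof of Corollary~\ref{cor:prodmin}), so $\log(\pi_c(T))$ is well-defined and we may pass to logarithms without changing the maximizer, by the strict monotonicity of $\log$.

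Concretely, I would first define $f_c(i) = \log(i + c)$ for $i \in \mathbb{R}_{\geq 2}$, and verify that $f_c$ is strictly monotonically increasing on $\mathbb{R}_{\geq 2}$ (its argument $i + c$ is positive and increasing, and $\log$ is increasing). This is all that Theorem~\ref{thm:maxphif} requires of $f$ — note that the maximality theorem, unlike the minimality theorem, does \emph{not} demand concavity, only strict monotonicity. Then I would write
\begin{align*}
\log(\pi_c(T)) = \log\left( \prod_{v \in \mathring{V}(T)} (n_v + c) \right) = \sum_{v \in \mathring{V}(T)} \log(n_v + c) = \sum_{v \in \mathring{V}(T)} f_c(n_v) = \Phi_{f_c}(T).
\end{align*}

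By Theorem~\ref{thm:maxphif} applied to the strictly increasing function $f_c$, the caterpillar $T_n^{cat}$ is the unique tree in $\BTnstar$ maximizing $\Phi_{f_c}$, hence the unique maximizer of $\log(\pi_c(T))$. Finally, since $\log$ is strictly increasing, the maximum of $\log(\pi_c(T))$ is attained at exactly the same tree as the maximum of $\pi_c(T)$, so $T_n^{cat}$ is the unique maximizer of $\pi_c$ as well. This mirrors precisely the logarithm-transfer argument already used in the proof of Corollary~\ref{cor:prodmin} for the minimizing case, just invoking Theorem~\ref{thm:maxphif} in place of Theorem~\ref{thm:minphif}. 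I do not anticipate any genuine obstacle here; the only point demanding a moment's care is confirming that positivity of the factors (from $c > -2$) is what licenses passing to the logarithm, and that Theorem~\ref{thm:maxphif} needs only monotonicity — so we are not accidentally relying on a concavity hypothesis that $f_c = \log(\cdot + c)$ does in fact satisfy but which is unnecessary for the maximum.
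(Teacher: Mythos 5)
Your proof is correct and follows essentially the same route as the paper: both define $f_c(i)=\log(i+c)$, identify $\Phi_{f_c}(T)=\log(\pi_c(T))$, invoke Theorem~\ref{thm:maxphif}, and transfer the conclusion back through the strictly increasing logarithm. Your observation that only strict monotonicity (not concavity) of $f_c$ is needed here is a small but accurate refinement of the paper's phrasing, which mentions both properties.
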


\begin{proof} Let $c \in \mathbb{R}$, $c> -2$. 
We can set $f_c(i)=\log(i+c)$ for $i\in \mathbb{R}_{\geq 2}$. Then, $f_c$ is both strictly concave and strictly increasing (as the logarithm with base 2 has these properties). With this function $f_c$, we have for any rooted binary tree $T$ with $n$ leaves: 

\begin{align}\label{eq:phifc}\Phi_{f_c}(T)=\sum\limits_{v \in \mathring{V}(T)}f_c(n_v)=\sum\limits_{v \in \mathring{V}(T)}\log(n_v+c)=\log\left(\prod\limits_{v \in \mathring{V}(T)}(n_v+c)\right)=\log(\pi_c(T)).\end{align} 

By Theorem \ref{thm:maxphif} we know that $T_n^{cat}$ is the unique maximizer of $\Phi_{f_c}$, and, thus, we can conclude  that $\log(\pi_c(T_n^{cat}))>\log(\pi_c(T))$ for all rooted binary trees $T$ with $n$ leaves. By the monotonicity of $\log$, this directly implies $\pi_c(T_n^{cat})>\pi_c(T)$ for all such trees $T$. This concludes the proof.
\end{proof}

\par\vspace{0.2cm}

We now use Corollary~\ref{cor:prodmaxcat} to show that the product function leads to a family of imbalance indices.

\begin{corollary}\label{cor:ClassProd2} Let $n \in \mathbb{N}$, $n\geq 2$ and let $c \in \mathbb{R}, c> -2$. Moreover, let $f:\mathbb{R}_{\geq 2}\rightarrow \mathbb{R}$ be a strictly increasing function. Then,  $f(\pi_c)$ is an imbalance index. Moreover, the GFB tree $T_n^{gfb}$ is the only minimizer of this function.
\end{corollary}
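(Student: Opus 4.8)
The plan is to show that $f(\pi_c)$ satisfies both defining properties of an imbalance index (unique maximizer at the caterpillar, unique minimizer at the fully balanced tree when $n=2^h$) and simultaneously establish the stronger statement that $T_n^{gfb}$ is the unique minimizer for all $n$. Both directions follow almost immediately from results already proven, so the work is one of assembly rather than fresh argument. First I would handle the minimization claim: Corollary~\ref{cor:prodmin} states directly that $T_n^{gfb}$ is the unique minimizer of $f(\pi_c(T))$ for any strictly increasing $f$ and any $c>-2$, which is exactly the hypothesis here. This gives the unique minimizer assertion outright, and in particular, since $T_h^{fb}=T_n^{gfb}$ whenever $n=2^h$, it yields the second bullet in the definition of an imbalance index.

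Next I would handle the maximization claim. By Corollary~\ref{cor:prodmaxcat}, $T_n^{cat}$ is the unique maximizer of $\pi_c(T)$ over $\BTnstar$. Since $f$ is strictly increasing, applying $f$ preserves the strict ordering of values: for any tree $T\neq T_n^{cat}$ we have $\pi_c(T_n^{cat})>\pi_c(T)$, and therefore $f(\pi_c(T_n^{cat}))>f(\pi_c(T))$. This shows that $T_n^{cat}$ is the unique maximizer of $f(\pi_c)$, giving the first bullet in the definition of an imbalance index.

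Combining the two observations with the definition of an imbalance index completes the proof. Concretely, the argument runs:

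\begin{proof}
Let $n\geq 2$, $c>-2$, and let $f:\mathbb{R}_{\geq 2}\rightarrow\mathbb{R}$ be strictly increasing. By Corollary~\ref{cor:prodmaxcat}, $T_n^{cat}$ is the unique maximizer of $\pi_c$ in $\BTnstar$, so $\pi_c(T_n^{cat})>\pi_c(T)$ for every tree $T\neq T_n^{cat}$. Since $f$ is strictly increasing, this yields $f(\pi_c(T_n^{cat}))>f(\pi_c(T))$ for every such $T$, so $T_n^{cat}$ is the unique maximizer of $f(\pi_c)$. By Corollary~\ref{cor:prodmin}, $T_n^{gfb}$ is the unique minimizer of $f(\pi_c)$ in $\BTnstar$ for all $n$; in particular, when $n=2^h$ we have $T_n^{gfb}=T_h^{fb}$, so $T_h^{fb}$ is the unique minimizer in that case. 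Both conditions in the definition of an imbalance index are therefore satisfied, so $f(\pi_c)$ is an imbalance index, and $T_n^{gfb}$ is its unique minimizer.
\end{proof}

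There is essentially no genuine obstacle here, since the two nontrivial inputs (Corollaries~\ref{cor:prodmin} and \ref{cor:prodmaxcat}) have already been established; the only point requiring any care is the routine observation that a strictly increasing outer function $f$ preserves the strict inequality defining the unique maximizer, which is what lets the maximality of the caterpillar for $\pi_c$ transfer to $f(\pi_c)$.
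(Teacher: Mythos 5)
Your proof is correct and follows essentially the same route as the paper's: it invokes Corollary~\ref{cor:prodmaxcat} plus the strict monotonicity of $f$ for the unique maximizer, Corollary~\ref{cor:prodmin} for the unique minimizer, and the identity $T_h^{fb}=T_n^{gfb}$ for $n=2^h$ to verify the imbalance-index definition. No issues.
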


\begin{proof} The fact that $T_n^{cat}$ is the unique maximizer of $\pi_c$ follows from  Corollary \ref{cor:prodmaxcat}, which in turn shows by the monotonicity of $f$ that $T_n^{cat}$ is the unique maximizer of $f(\pi_c)$. The fact that $T_n^{gfb}$ is the unique minimizer of $f(\pi_c)$ was shown in Corollary \ref{cor:prodmin}. Using the fact that $T_h^{fb}=T_n^{gfb}$ if $n=2^h$ and the definition of an imbalance index thus completes the proof.
\end{proof}

Corollary \ref{cor:ClassProd2} shows that the product functions are a family of tree imbalance indices.  We further classify this family as merely a subfamily of the family of tree imbalance indices $\Phi_f$ in the sense that the tree rankings from balanced to imbalanced induced by these indices coincide with rankings induced by members of the $\Phi_f$ family, as the following proposition shows.

\begin{proposition}\label{prop:subfam} Let $c\in \mathbb{R}, c> -2$ and $n\in \mathbb{N}$, $n\geq 2$, let $f$ be a strictly increasing function. We consider the imbalance index $f(\pi_c)$ and its induced ranking of trees $T_n^{gfb},\ldots,T_n^{cat}$ from balanced to imbalanced. Then, there is a strictly increasing and strictly concave function $f_c:\mathbb{R}_{\geq 2}\rightarrow \mathbb{R}$ such that $\Phi_{f_c}$ induces the same ranking as $f(\pi_c)$.
\end{proposition}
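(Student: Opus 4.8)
The plan is to exhibit an explicit strictly increasing and strictly concave function $f_c$ for which $\Phi_{f_c}$ produces exactly the same balance-to-imbalance ordering as $f(\pi_c)$. The natural candidate is precisely the function already used in the proof of Corollary~\ref{cor:prodmaxcat}, namely $f_c(i) = \log(i+c)$ for $i \in \mathbb{R}_{\geq 2}$. By Equation~\eqref{eq:phifc}, this function satisfies $\Phi_{f_c}(T) = \log(\pi_c(T))$ for every rooted binary tree $T$ with $n$ leaves, and the excerpt already observes that $f_c$ is strictly increasing and strictly concave because the base-$2$ logarithm has these properties (and adding the constant $c$ preserves them on the relevant domain, where $i+c > 0$ since $i \geq 2$ and $c > -2$).

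First I would recall what it means for two indices to induce the same ranking: an imbalance index $g$ ranks a tree $T$ as at least as balanced as $T'$ whenever $g(T) \leq g(T')$. So to show $\Phi_{f_c}$ and $f(\pi_c)$ induce the same ordering, it suffices to show that for all trees $T, T'$ with $n$ leaves, $\Phi_{f_c}(T) \leq \Phi_{f_c}(T')$ if and only if $f(\pi_c(T)) \leq f(\pi_c(T'))$. The chain is short: since $\Phi_{f_c}(T) = \log(\pi_c(T))$ and $\log$ is strictly increasing, $\Phi_{f_c}(T) \leq \Phi_{f_c}(T')$ is equivalent to $\pi_c(T) \leq \pi_c(T')$; and since $f$ is strictly increasing, $\pi_c(T) \leq \pi_c(T')$ is equivalent to $f(\pi_c(T)) \leq f(\pi_c(T'))$. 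Both equivalences are immediate from strict monotonicity, so composing them gives the desired equivalence of orderings.

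The main thing to be careful about—more a matter of precision than a genuine obstacle—is that $\pi_c(T)$ is guaranteed strictly positive (so that $\log(\pi_c(T))$ is defined), which I would cite from the argument already given in Corollary~\ref{cor:prodmin}: every internal subtree size satisfies $n_v \geq 2$, hence $n_v + c > 0$ since $c > -2$, and therefore the product is positive. With that in hand, the logarithm is well-defined on the whole range of $\pi_c$ over $\BTnstar$, and the equivalence above is rigorous. I would also note that both $\Phi_{f_c}$ and $f(\pi_c)$ are genuine imbalance indices (by Corollary~\ref{cor:phifIndex} and Corollary~\ref{cor:ClassProd2} respectively) sharing the GFB tree as unique minimizer and the caterpillar as unique maximizer, so the endpoints of the two rankings coincide as well, consistent with the claim. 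The proof is therefore a two-line monotonicity argument anchored on the identity $\Phi_{f_c} = \log \circ\, \pi_c$; there is no hard step, only the bookkeeping of verifying $f_c$ has the required structural properties and that the logarithm is defined.
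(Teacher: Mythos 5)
Your proposal is correct and follows essentially the same route as the paper: both set $f_c(i)=\log(i+c)$, invoke the identity $\Phi_{f_c}=\log\circ\,\pi_c$ from Equation~\eqref{eq:phifc}, and conclude via the strict monotonicity of $\log$ and $f$ that the induced rankings agree. The extra care you take about positivity of $\pi_c(T)$ is a sensible (if already-established) detail and does not change the argument.
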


\begin{proof}
First we note that by the monotonicity of $f$, $f(\pi_c)$ induces the exact same ranking as $\pi_c$, which in turn induces the exact same ranking as $\log(\pi_c)$ by the monotonicity of the logarithm. 

Now, we set  $f_c(i)=\log(i+c)$ for  $i\in \mathbb{R}_{\geq 2}$. Then, as in the proof of Corollary \ref{cor:prodmaxcat}, $f_c$ is both strictly concave and strictly increasing, and just as in Equation \eqref{eq:phifc} we can conclude $\Phi_{f_c}(T)=\log(\pi_c(T))$ for all rooted binary trees $T$ with $n$ leaves. This shows that $\Phi_{f_c}$ and $f(\pi_c)$ induce the same rankings of all rooted binary trees with $n$ leaves and thus completes the proof.
\end{proof}

Before we turn our attention to deriving explicit formulas for the minimal values of $\widehat{s}$ and $Q$ in Section \ref{sec:explicitformulas}, we investigate the family of tree imbalance indices $\pi_c$ further. We first show that the family contains infinitely many different members in the sense that for choices of real $c_1,c_2$ larger than $-2$ with $c_1\neq c_2$, we can find two trees $T$ and $T'$ such that $\pi_{c_1}$ will consider $T$ as more imbalanced than $T'$ and $\pi_{c_2}$ gives the opposite ranking. Thus, there is an uncountably infinite family of genuinely distinct imbalance indices.

\begin{proposition}\label{prop:inf} Let $c_1, c_2 \in \mathbb{R}$ be distinct with $c_1,c_2>-2$. Then there exist two trees $T$ and $T'$ such that $\pi_{c_1}(T)>\pi_{c_1}(T')$ and $\pi_{c_2}(T)<\pi_{c_2}(T')$; that is, the imbalance indices $\pi_{c_1}$ and $\pi_{c_2}$ rank $T$ and $T'$ differently.
\end{proposition}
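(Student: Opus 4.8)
```latex
The plan is to prove that the indices $\pi_{c_1}$ and $\pi_{c_2}$ are genuinely distinct by exhibiting an explicit pair of trees whose relative ranking is reversed. Since $\pi_c(T)=\prod_{v\in\mathring{V}(T)}(n_v+c)$, comparing two trees $T$ and $T'$ with the same number of leaves $n$ reduces to comparing the multisets of interior subtree sizes $\{n_v : v\in\mathring{V}(T)\}$ and $\{n_v : v\in\mathring{V}(T')\}$. Both trees have exactly $n-1$ interior vertices, and the root always contributes the factor $(n+c)$ in both, so we may cancel that common factor. The strategy is to choose $T$ and $T'$ so that their interior-size multisets differ in a controlled way, and then show that the resulting ratio $\pi_c(T)/\pi_c(T')$, viewed as a function of $c$, changes sign (or rather, crosses the value $1$) between $c_1$ and $c_2$.

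\textbf{Choosing the trees.} I would look for the smallest $n$ admitting two trees whose interior-size multisets are distinct; $n=4$ already works if we allow different shapes, but to get a clean comparison I expect to need $n$ large enough that two non-isomorphic shapes share most of their subtree sizes. A convenient device is to take $T$ and $T'$ to be identical everywhere except in one local region, so that their multisets of interior sizes agree except on a small controlled set of values. For instance, one can arrange that the multisets differ by replacing a pair of sizes $\{a,b\}$ in $T$ with a pair $\{a',b'\}$ in $T'$ where $a+b=a'+b'$ (so that ancestral sizes above the swap are unaffected). Then
\begin{align*}
\frac{\pi_c(T)}{\pi_c(T')}=\frac{(a+c)(b+c)}{(a'+c)(b'+c)}.
\end{align*}
With $a+b=a'+b'=:s$ fixed, the numerator and denominator are quadratics in $c$ with the same leading and linear coefficients; they differ only in the constant term ($ab$ versus $a'b'$). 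Hence the sign of $\pi_c(T)-\pi_c(T')$ is governed by a linear expression in $c$, which makes the sign reversal transparent.

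\textbf{Forcing the sign change.} Concretely, writing the difference of the two quadratics $(a+c)(b+c)-(a'+c)(b'+c)=(ab-a'b')$ as a constant (the $c$ and $c^2$ terms cancel once $a+b=a'+b'$), this alone is not enough—it would give a ranking independent of $c$. So instead I would let the two multisets differ in \emph{more than one} factor, chosen so that the ratio $\pi_{c_1}(T)/\pi_{c_1}(T')$ and $\pi_{c_2}(T)/\pi_{c_2}(T')$ straddle $1$. The cleanest route is to consider the function $g(c)=\log\pi_c(T)-\log\pi_c(T')=\sum_v \log(n_v(T)+c)-\sum_v \log(n_v(T')+c)$ and show it is non-constant in $c$; since $g$ is continuous (indeed real-analytic) on $(-2,\infty)$ and is not identically zero (the two multisets differ), it is not constant unless the two multisets coincide. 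A non-constant analytic function takes different values, and by intermediate considerations one can locate $c_1,c_2$ where $g(c_1)$ and $g(c_2)$ have opposite sign, or—if the proposition fixes $c_1,c_2$ in advance—one selects the tree pair \emph{after} seeing $c_1,c_2$, tuning the differing sizes so that the linear-in-$c$ leading behavior of $g$ near the relevant scale produces opposite signs at $c_1$ and $c_2$.

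\textbf{Main obstacle.} The crux is that the proposition quantifies over arbitrary given distinct $c_1,c_2>-2$, so the witnessing trees must be chosen \emph{depending on} $c_1$ and $c_2$; the difficulty is guaranteeing that for every such pair a suitable $T,T'$ exist, rather than just for some pairs. I expect the right approach is to fix a family of candidate tree pairs parametrized by their differing subtree sizes and show that the map $c\mapsto \operatorname{sign}\bigl(\pi_c(T)-\pi_c(T')\bigr)$ is non-constant on $(-2,\infty)$ for a suitable member of the family, then argue that by varying the free size-parameters one can place the unique sign-change threshold $c^\ast$ strictly between any prescribed $c_1$ and $c_2$. Establishing that this threshold can be tuned to lie in an arbitrary interval—equivalently, that the roots of the relevant polynomial difference densely cover $(-2,\infty)$ as the tree pair varies—is the step requiring the most care.
```
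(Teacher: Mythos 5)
There is a genuine gap. Your plan correctly identifies the right reduction — for a pair of trees differing in a controlled set of interior subtree sizes, the comparison $\pi_c(T)\gtrless\pi_c(T')$ becomes a comparison of $c$ against a threshold determined by those sizes, and the proposition follows once one shows such thresholds can be placed strictly between any prescribed $c_1$ and $c_2$. You even correctly discard the "sum-preserving swap" idea (where $a+b=a'+b'$ makes the ranking $c$-independent; this is essentially Proposition~\ref{prop:equal} of the paper). But you then stop exactly at the crux: you write that tuning the threshold into an arbitrary interval "is the step requiring the most care" and leave it undone. That step \emph{is} the proof. The detour through analyticity of $g(c)=\log\pi_c(T)-\log\pi_c(T')$ does not help here: for a \emph{fixed} pair of trees, non-constancy of $g$ only shows that \emph{some} pair of parameters $c$ is distinguished, not the given $c_1,c_2$; and since the proposition quantifies over arbitrary distinct $c_1,c_2>-2$, you must exhibit, for each such pair, concrete integer subtree sizes realizing a threshold in $(c_1,c_2)$ — a number-theoretic construction, not a continuity argument.

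For comparison, the paper's proof uses the two trees of Figure~\ref{fig:inf}, whose interior-size multisets differ in exactly two entries, $\{n_{11}+n_{12},\,n_{21}+n_{22}\}$ versus $\{n_{11}+n_{21},\,n_{11}+n_{12}+n_{21}\}$ (note the two sums are \emph{not} equal in general, which is what makes the comparison genuinely linear in $c$). The resulting threshold is the explicit fraction
\[
\frac{n_{11}n_{22}+n_{12}n_{22}-n_{11}^2-n_{11}n_{12}-n_{11}n_{21}-n_{21}^2}{n_{11}+n_{21}-n_{22}},
\]
and the key computation is that setting $n_{11}=1$, $n_{22}=2$, $n_{21}=2k+1$, $n_{12}=4k^2+6k+2+2m$ makes this equal $\tfrac{2m+1}{2k}$; choosing $k$ with $k(c_2-c_1)>2$ guarantees an integer $m$ (with $m>-2k$, so all sizes are positive) for which this rational lies strictly in $(c_1,c_2)$. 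Supplying an explicit family of this kind, together with the verification that the sizes are admissible, is what your proposal is missing.
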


\begin{proof} We will construct $T$ and $T'$ as depicted in Figure \ref{fig:inf} with suitable choices of sizes of subtrees: $n_{11}$ of $T_{11}$, $n_{12}$ of $T_{12}$, $n_{21}$ of $T_{21}$ and $n_{22}$ of $T_{22}$, respectively. Subtrees $T_{11}$, $T_{12}$, $T_{21}$ and $T_{22}$ can then be chosen arbitrarily as long as they have the respective numbers of leaves.

\begin{figure}[ht]
  \centering

  \begin{minipage}[t]{0.45\textwidth}
    \centering
    \begin{tikzpicture}[
      level distance=.875cm,
      level/.style={sibling distance={2.75cm/max(1.125,#1)}}
    ]
      \node {$\rho$}
        child{ node [star, fill = black]{}
          edge from parent
          child{ node {} child{ node[itri,xshift=1mm]{$T_{11}$} } }
          child{ node {} child{ node[itri,xshift=-1mm]{$T_{12}$} } }
        }
        child{ node [star, fill = black]{}
          child{ node {} child{ node[itri,xshift=1mm]{$T_{21}$} } }
          child{ node {} child{ node[itri,xshift=-1mm]{$T_{22}$} } }
        };
    \end{tikzpicture}

    \vspace{0.5em}
    $T$
  \end{minipage}
  \hfill
  \begin{minipage}[t]{0.45\textwidth}
    \centering
    \begin{tikzpicture}[
      level distance=.85cm,
      level/.style={sibling distance={3.875cm/max(2,#1)}}
    ]
      \node {$\rho$}
        child{ node [star, fill = black]{} edge from parent
          child{ node [star, fill = black]{}
            child{ node {} child{ node[itri, xshift=1mm]{$T_{11}$} } }
            child{ node {} child{ node[itri, xshift=-1mm]{$T_{21}$} } }
          }                
          child{ node {} child{ node[itri, xshift=-1mm, yshift=-.25mm]{$T_{12}$} } }
        }
        child{ node {} child{ node[itri, xshift=-1mm, yshift=-.5mm]{$T_{22}$} } };
    \end{tikzpicture}

    \vspace{0.5em}
    $T'$
  \end{minipage}

  \vspace{1.5em}

  \begin{minipage}[t]{0.6\textwidth}
    \centering
    \begin{tikzpicture}[
      level distance=.875cm,
      level/.style={sibling distance={2.75cm/max(1.125,#1)}}
    ]
      \node {$\rho$}
        child{ node [star, fill = black]{} edge from parent
          child{ node {} child{ node[itri, xshift=1mm]{$T_{11}$} } }
          child{ node {} child{ node[itri, xshift=-1mm]{$T_{21}$} } }
        }
        child{ node [star, fill = black]{}
          child{ node {} child{ node[itri,xshift=1mm]{$T_{12}$} } }
          child{ node {} child{ node[itri,xshift=-1mm]{$T_{22}$} } }
        };
    \end{tikzpicture}

    \vspace{0.5em}
    $T''$
  \end{minipage}

  \caption{Trees $T$, $T'$, and $T''$ as needed in the proofs of Propositions \ref{prop:inf} and \ref{prop:equal}. Note that all three trees share the same subtrees $T_{11}$, $T_{12}$, $T_{21}$, and $T_{22}$, which are depicted schematically as triangles. The stars depict the inner vertices that play an important role in the proofs.}
  \label{fig:inf}
\end{figure}

Since the subtree sizes induced by the two trees $T$ and $T'$ only differ in two nodes, we can easily express $\pi_c(T')$ using $\pi_c(T)$ for any  choice of $c \in \mathbb{R}$ as follows:

\begin{align*}
\pi_c(T')=\pi_c(T) \cdot \frac{(n_{11}+n_{21}+c)\cdot(n_{11}+n_{12}+n_{21}+c)}{(n_{11}+n_{12}+c)\cdot (n_{21}+n_{22}+c)}.
\end{align*}
This implies:  \begin{align}\label{eq:inf}&\hspace{1.2cm} \pi_c(T')\gtrless \pi_c(T) \\& \Longleftrightarrow (n_{11}+n_{21}+c)\cdot(n_{11}+n_{12}+n_{21}+c) \gtrless (n_{11}+n_{12}+c)\cdot (n_{21}+n_{22}+c) \nonumber \\ &\Longleftrightarrow cn_{11} + cn_{21}-cn_{22}\gtrless n_{11}n_{22}+n_{12}n_{22}-n_{11}^2-n_{11}n_{12}-n_{11}n_{21}-n_{21}^2 \nonumber\\
&\Longleftrightarrow c \gtrless \frac{n_{11}n_{22}+n_{12}n_{22}-n_{11}^2-n_{11}n_{12}-n_{11}n_{21}-n_{21}^2}{n_{11} + n_{21}-n_{22}},
\end{align} 
(where the $\gtrless$-symbol stands for either $>$ or $<$ consistently throughout). 

The proof strategy now is to show that for any choice of $c_1,c_2$ larger than $-2$ we can choose $n_{11}$, $n_{12}$, $n_{12}$, $n_{21}$ and $n_{22}$ such that the fraction of Equation \eqref{eq:inf} lies between $c_1$ and $c_2$. By Equation \eqref{eq:inf}, this will show that $\pi_{c_1}(T')<\pi_{c_1}(T)$ and $\pi_{c_2}(T')>\pi_{c_2}(T)$ and thus conclude the proof. In the following, we assume without loss of generality that $c_2>c_1$ by exchanging $T$ and $T'$ if needed.

So now we let $c_1,c_2\in \mathbb{R}$ with $c_1 >-2$ and $c_2>c_1$. We let $k \in \mathbb{N}$ be such that $k\cdot (c_2-c_1)>2$. This guarantees that the open interval $(k\cdot c_1,k\cdot c_2)$ contains two consecutive integers $m$ and $m+1$, where $m \in \mathbb{Z}$. This implies that $\frac{m}{k}$ and $\frac{m+1}{k}$ are contained in the open interval $(c_1,c_2)$. We have two rational numbers contained in $(c_1,c_2)$. We now consider their mean $\frac{2m+1}{2k}$. We have:

$$ c_1 < \frac{m}{k} < \frac{2m+1}{2k} < \frac{m+1}{k}< c_2.$$
Following our proof strategy, the proof is thus complete if we can show that we can choose $n_{11}$, $n_{12}$,  $n_{21}$ and $n_{22}$ such that the fraction of Equation \eqref{eq:inf} equals $\frac{2m+1}{2k}$.

We now set $n_{11}=1$, $n_{22}=2$, $n_{21}=2k+1$, $n_{12}=4k^2+2+6k+2m$. We first verify that these are all valid leaf numbers: that all of these numbers are natural. Clearly, this holds for $n_{11}$ and $n_{22}$. Moreover, recall that $k \in \mathbb{N}$, so $n_{21}$ is also natural. But as $m \in \mathbb{Z}$, $m<0$ could be possible. So we need to verify that $n_{12}$ is positive. However, we know that $\frac{m}{k}>c_1>-2$ by the choice of $c_1$ and $\frac{m}{k}$, respectively. This shows that $m>-2k$ and thus $2m>-4k$, which leads to $n_{12}=4k^2+2+6k+2m>4k^2+2+6k-4k=4k^2+2+2k \in \mathbb{N}$. So our choices of $n_{11}$, $n_{12}$, $n_{21}$ and $n_{22}$ result in four positive integers and can be realized as subtree sizes in trees. We show that with these choices, we indeed get that the fraction of Equation \eqref{eq:inf} equals $\frac{2m+1}{2k}$:

\begin{align*}&\frac{n_{11}n_{22}+n_{12}n_{22}-n_{11}^2-n_{11}n_{12}-n_{11}n_{21}-n_{21}^2}{n_{11} + n_{21}-n_{22}}\\ &=
\frac{2+2(4k^2+2+6k+2m)-1-(4k^2+2+6k+2m)-(2k+1)-(2k+1)^2}{1+(2k+1)-2} \\
&=\frac{1+(4k^2+2+6k+2m)-(2k+1)-(4k^2+4k+1)}{2k}\\
&=\frac{2m+1}{2k}.
\end{align*}
This completes the proof.
\end{proof}

Note that unsurprisingly, it is easier to make $\pi_{c_1}$ and $\pi_{c_2}$ disagree concerning the ranking of $T$ and $T'$ if $c_2-c_1$ is large. If the difference is larger than one, the value of $k$ chosen in the proof can be 1, which is as small as possible. As our choice of subtree sizes in the proof was $n_{11}=1$, $n_{22}=2$, $n_{21}=2k+1$, and  $n_{12}$ such that $n_{12}>4k^2+2+2k$, this shows that even if $k=1$, we already need $n=1+2+3+8=14$ leaves for our construction. It may be possible to have smaller examples showing different rankings, but it is clear that
no two indices $\pi_{c_1}$ and $\pi_{c_2}$ will rank all trees in the same order if $c_1\neq c_2$.

We note $T$ and $T'$ as used in the proof of Proposition \ref{prop:inf} only differ in two subtree sizes, yet they will be ranked differently by certain members of the $\pi_c$ family. On the other hand, there are always pairs of trees that are ranked identically for all choices of $c$. Most prominently, this is of course the case for $T_n^{cat}$ and $T_n^{gfb}$, but these differ in most subtree sizes. The next proposition shows that for all $n$ at least 6, there are pairs of $T$ and $T'$ such that these trees differ only in two subtree sizes and we have $\pi_c(T)>\pi_c(T')$ for all $c\in \mathbb{R}, c>-2$.

\begin{proposition}\label{prop:equal} Let $n\in \mathbb{N}$ with $n \geq 6$. Then there exist two trees $T$ and $T''$ which only differ in two subtree sizes such that $\pi_{c}(T)>\pi_{c}(T'')$ for all $c \in \mathbb{R}$, with $c>-2$.
\end{proposition}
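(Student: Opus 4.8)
The plan is to reuse the trees $T$ and $T''$ from Figure~\ref{fig:inf}, so that $T$ has standard decomposition into the two pending subtrees $(T_{11},T_{12})$ and $(T_{21},T_{22})$, while $T''$ decomposes into $(T_{11},T_{21})$ and $(T_{12},T_{22})$. The observation driving the whole argument is that $T$ and $T''$ share the root together with every copy of $T_{11},T_{12},T_{21},T_{22}$, and differ only at the two children of the root: in $T$ these induce subtree sizes $n_{11}+n_{12}$ and $n_{21}+n_{22}$, whereas in $T''$ they induce $n_{11}+n_{21}$ and $n_{12}+n_{22}$. All remaining internal vertices (those inside the $T_{ij}$ together with the root) have identical induced sizes in the two trees and hence contribute the same, strictly positive (since $c>-2$), factors to both $\pi_c(T)$ and $\pi_c(T'')$.

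First I would cancel these common positive factors, reducing the desired inequality $\pi_c(T)>\pi_c(T'')$ to the single inequality
\begin{align}\label{eq:equalkey}
(n_{11}+n_{12}+c)(n_{21}+n_{22}+c) > (n_{11}+n_{21}+c)(n_{12}+n_{22}+c).
\end{align}
Expanding both sides, the terms $c^2$ and $(n_{11}+n_{12}+n_{21}+n_{22})\,c$ occur identically on each side and cancel, so that the difference of the two sides equals the constant
\begin{align*}
(n_{11}+n_{12})(n_{21}+n_{22})-(n_{11}+n_{21})(n_{12}+n_{22})=(n_{11}-n_{22})(n_{21}-n_{12}),
\end{align*}
which is \emph{independent} of $c$. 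This cancellation is the crux of the proof and the precise feature distinguishing this pair from the pair $T,T'$ of Proposition~\ref{prop:inf}: there the analogous difference was an affine function of $c$ with a genuine root, which is exactly what let the ranking flip, whereas here it is a $c$-free constant, forcing one fixed ranking for all admissible $c$.

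It then remains to choose the subtree sizes so that $(n_{11}-n_{22})(n_{21}-n_{12})>0$ while realizing an arbitrary prescribed $n\geq 6$. I would take $n_{12}=n_{22}=1$, $n_{21}=2$, and $n_{11}=n-4$; for $n\geq 6$ these are four positive integers with $n_{11}\geq 2$, and $(n_{11}-n_{22})(n_{21}-n_{12})=(n-5)\cdot 1>0$. Choosing $T_{11},T_{12},T_{21},T_{22}$ to be any rooted binary trees of these sizes makes $T$ and $T''$ genuine (and, by the strict inequality, non-isomorphic) trees on $n$ leaves, and \eqref{eq:equalkey} then yields $\pi_c(T)>\pi_c(T'')$ for every $c>-2$. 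The computation itself is routine; the only point needing care is the bound $n\geq 6$, which I would justify by noting that forcing both factors to be strictly positive requires $n_{11}\geq 2$, $n_{21}\geq 2$, $n_{12}\geq 1$, $n_{22}\geq 1$ (or the symmetric choice with both factors negative), whose sizes sum to exactly $6$, so no construction of this form works for smaller $n$.
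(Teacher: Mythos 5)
Your proposal is correct and follows essentially the same route as the paper: the same pair $T,T''$ from Figure~\ref{fig:inf}, cancellation of the common factors, and the observation that the $c$ and $c^2$ terms cancel so the comparison reduces to the sign of $(n_{11}-n_{22})(n_{21}-n_{12})$, which is independent of $c$. The only (harmless) difference is that you fix a concrete choice of the $n_{ij}$ realizing every $n\geq 6$, whereas the paper keeps the sizes general subject to $n_{11}<n_{22}$ and $n_{12}<n_{21}$.
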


\begin{proof} We give an explicit construction for two non-isomorphic trees $T$ and $T''$ which fulfill the condition. We begin by choosing $n_{ij}\in \mathbb{N}$ such that $n_{11}< n_{22}$ and $n_{12}< n_{21}$ and such that $n_{11}+n_{12}+n_{21}+n_{22}=n\geq 6$. These values of  $n_{ij}$ (with $i,j \in \{1,2\}$) will be used as leaf numbers for the subtrees $T_{ij}$ for $T$ and $T''$ as depicted in Figure \ref{fig:inf}. Note that the conditions on $n_{ij}$ guarantee that $T$ and $T''$ are not isomorphic. 

Moreover, note that as the subtree sizes induced by the two trees $T$ and $T''$ from Figure \ref{fig:inf} only differ in two nodes, we can easily express $\pi_c(T'')$ using $\pi_c(T)$ for any  choice of $c \in \mathbb{R}$:

\begin{align*}
\pi_c(T'')=\pi_c(T) \cdot \frac{(n_{11}+n_{21}+c)\cdot(n_{12}+n_{22}+c)}{(n_{11}+n_{12}+c)\cdot (n_{21}+n_{22}+c)}.
\end{align*}

This shows that we have:  \begin{align*}\pi_c(T'')\gtrless \pi_c(T) & \Longleftrightarrow (n_{11}+n_{21}+c)\cdot(n_{12}+n_{22}+c) \gtrless (n_{11}+n_{12}+c)\cdot (n_{21}+n_{22}+c)  \\
& \Longleftrightarrow n_{11}n_{12}+n_{21}n_{22} \gtrless n_{11}n_{21}+n_{12}n_{22} 
\end{align*} 
(where again the  $\gtrless$-symbol is a consistent inequality throughout). As the latter term is completely independent of $c$, this shows that all indices in the $\pi_c$ family will agree on how to rank $T$ and $T''$, and this will only be determined by these trees' subtree sizes. This completes the proof.
\end{proof}

Figure \ref{fig:n10cvariable} illustrates the differences of rankings within the $\pi_c$ family by considering the range of values $c=-1.99, -1.5, -1, -0.5, 0, 0.5, 1, 1.5, 2$ and comparing the rankings induced by these values for $n=10$. 

\begin{figure}[ht]
  \centering
  \includegraphics[scale=.3]{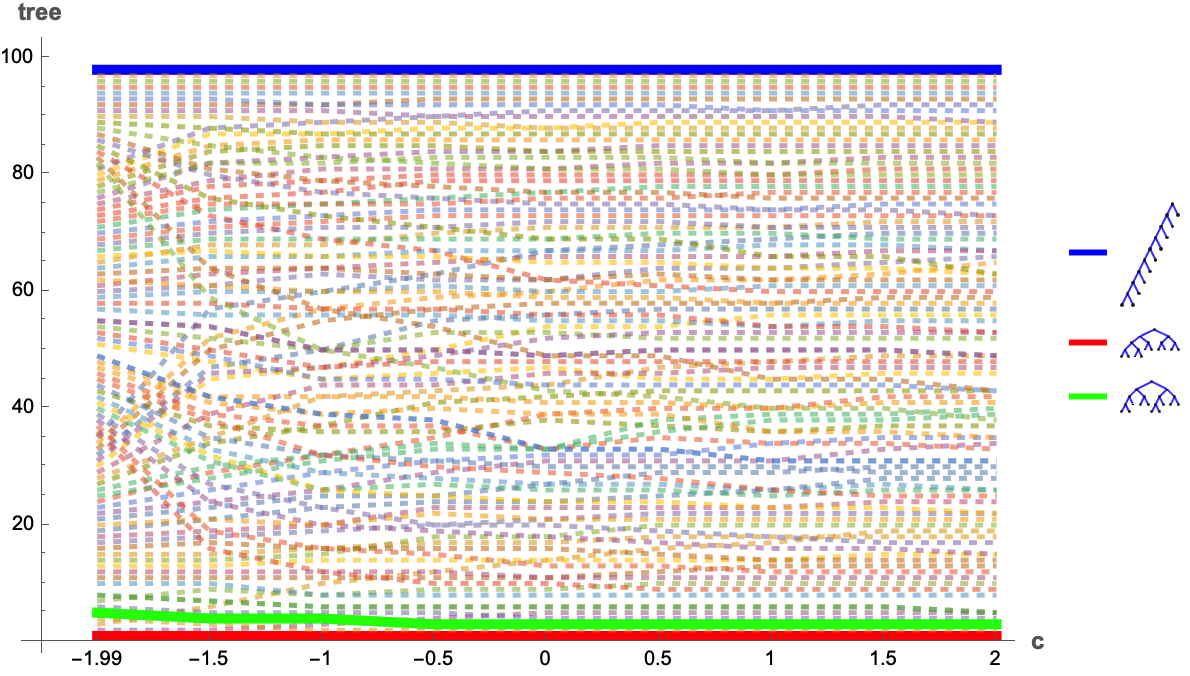}
  \caption{Ranking differences for the imbalance indices from product functions $\pi_c$ for nine different choices of $c$ with $c >-2$. The $x$-axis represents indices for values $c=-1.99, -1.5, -1, -0.5, 0, 0.5, 1, 1.5, 2$, and the $y$-axis shows the 98 rooted binary trees with 10 leaves. Each tree has an associated line showing how high (imbalanced) or low (balanced) it is ranked for the different choices of $c$, with the  rankings of three trees (caterpillar, GFB, and maximally balanced) highlighted in bold.}
  \label{fig:n10cvariable}
\end{figure}

\subsection{Explicit formulas for \texorpdfstring{$\pi_{-1}\left(T^{gfb}_n\right)$}{pi(Tgfb,-1)}  and \texorpdfstring{$\pi_0\left(T^{gfb}_n\right)$}{pi(Tgfb,0)}}\label{sec:explicitformulas} The first aim of this section is to provide two alternative direct (non-recursive) formulas to calculate the values both of $\pi_{-1}\left(T^{gfb}_n\right)$ as well as of $\pi_{0}\left(T^{gfb}_n\right)$, corresponding to the $\widehat{s}$-shape and the $Q$-shape statistic, respectively. It turns out that the sequence $\left( \pi_{-1}\left(T^{gfb}_n\right)\right)_{n \in \mathbb{N}}$ can already be found in the Online Encyclopedia of Integers Sequences OEIS \cite{OEIS}, namely under reference number A132862. However, the OEIS only contained a recursive formula to calculate $\left( \pi_{-1}\left(T^{gfb}_n\right)\right)_{n \in \mathbb{N}}$. Within the scope of the present manuscript, we have submitted our two explicit formulas for $\pi_{-1}\left(T^{gfb}_n\right)$ to the OEIS for addition to their database. Moreover,  the sequence $\pi_{0}\left(T^{gfb}_n\right)$ of the $Q$ statistic was added to the OEIS under the identifier A386912 \cite[Sequence A386912]{OEIS}.

The second aim of this section is to use our new formulas to derive explicit formulas for the minimum value of the $\widehat{s}$-shape statistic (which answers an open question \cite[Chapter 9]{fischerBook}) as well as for the $Q$-shape statistic. 

We first prove the following lemma to give insight into the greedy clustering algorithm defining $\Tgfb$.

\begin{lemma}\label{lem:AlgOnly1Non2tree}
Let $n\in \mathbb{N}$. Let $\mathcal{A}$ be the greedy clustering algorithm constructing $\Tgfb$. Then, at any point during the run of this algorithm, there is at most one tree contained in the present set of trees whose size is \emph{not} a power of two.
\end{lemma}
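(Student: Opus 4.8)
The plan is to prove this by induction on the number of clustering steps, tracking the multiset of subtree sizes present in the forest. Initially the forest consists of $n$ single-vertex trees, each of size $1=2^0$, so \emph{every} tree is a power of two and the claim holds trivially at the start. At each step the greedy algorithm merges two smallest remaining trees, and I would show that the invariant ``at most one tree in the forest has size that is not a power of two'' is preserved by every such merge.

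First I would set up the invariant carefully. Suppose at some point the forest has exactly one non-power-of-two tree (the case of zero such trees being easier). A key auxiliary observation I would establish is that the sizes present in the forest are tightly controlled: because the algorithm always joins two smallest trees, at any moment the power-of-two trees present can take at most two distinct consecutive values $2^j$ and $2^{j+1}$, and any single non-power-of-two tree has size strictly between two such powers. I would argue this by noting that smaller trees are always consumed before larger ones, so the forest is, in size, ``bottom-heavy'': all trees of the minimal size get paired off before any larger tree is touched. This structural fact is what makes the single exceptional tree behave predictably.

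The main case analysis then concerns which two trees get merged. If the two smallest trees are both of size $2^j$ (powers of two), their merge produces a tree of size $2^{j+1}$, again a power of two, so the number of non-powers is unchanged. The delicate case is when the unique non-power-of-two tree, of size $s$ with $2^j < s < 2^{j+1}$, is among the two smallest and gets merged: here I must show that after this merge we again have at most one non-power. Since $s$ is strictly the smallest (as any power-of-two tree present is at least the next power $2^j$ available, and $s$ lies just above the smallest surviving power), the greedy rule pairs $s$ with a power-of-two tree of size $2^{j}$, yielding $s+2^j$. I would verify that this new tree is either itself a power of two (resolving the exception) or the new unique exception, and that no \emph{second} exceptional tree can coexist, invoking Corollary~\ref{cor:GFBsubtrees} and Proposition~\ref{prop:coronado2cases} to control the sizes produced.

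The step I expect to be the main obstacle is making the ``at most two consecutive powers of two are present'' structural claim fully rigorous, since it requires arguing that the greedy process exhausts each size level before moving up, and correctly interleaving the lone non-power-of-two tree into this picture. In particular, I must rule out the scenario where merging the exceptional tree with a power of two creates a size that collides awkwardly with existing trees and forces a \emph{second} non-power to appear at a later step. I anticipate that the cleanest route is to phrase the invariant in terms of the binary representation of the total number of leaves currently ``below'' the forest structure, relating the non-power-of-two tree's size to the remaining unpaired leaves; the number of exceptional trees should correspond to something like the count of ``carries'' pending in the greedy accumulation, which is at most one at any stage. Establishing that correspondence precisely is where the real work lies.
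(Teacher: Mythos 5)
There is a genuine gap: the entire argument rests on the structural invariant that the forest is ``bottom-heavy'' (all trees of the minimal size are consumed before larger ones are touched, at most two consecutive powers of two are present, and the lone non-power-of-two tree sits strictly between them), but you never prove this invariant --- you explicitly defer it twice (``the main obstacle,'' ``where the real work lies''). This is not a technicality one can wave at: the dangerous scenario your induction must exclude is precisely a forest of the form $\{s,\,2^{j},\,2^{j+1},\dots\}$ with one non-power $s>2^{j+1}$ and a lone tree of size $2^{j}$, in which the greedy rule would merge $2^{j}$ with $2^{j+1}$ and create a \emph{second} non-power while $s$ survives. Ruling that out requires showing that whenever a single odd-one-out tree of size $2^{j}$ remains, the non-power tree (if any) has size in $(2^{j},2^{j+1})$ so that it absorbs it --- exactly the ``carry'' correspondence you name but do not establish. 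Your case analysis also contains an internal inconsistency: you assert that the exceptional tree $s$ with $2^{j}<s<2^{j+1}$ is ``strictly the smallest'' and yet gets paired with a tree of size $2^{j}<s$; in fact $s$ is the \emph{larger} of the pair in that case, and moreover $s$ may instead be paired with a tree of size $2^{j+1}$ when no $2^{j}$-trees remain (e.g.\ the forest $\{3,4,4\}$ arising for $n=11$), a subcase your sketch omits.

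For comparison, the paper sidesteps the need for a full forest invariant by arguing by contradiction: it takes the \emph{first} step at which two non-power-of-two trees $T$ and $T'$ coexist, notes that each must have been formed by merging two powers of two of different sizes, and uses the greedy rule (all other trees present at the earlier merge must be at least as large as the larger constituent) together with height comparisons via Corollaries~\ref{cor:GFBsubtrees} and~\ref{cor:coronado2cases} to show that a strictly smaller tree was still available when the first merge occurred --- contradicting greediness. Your induction route is viable in principle (the appendix proof of Theorem~\ref{thm:GFBsubtreesizes} carries out a phase-by-phase analysis in that spirit, using the present lemma), but as written the proposal assumes the conclusion's hard part rather than proving it.
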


\begin{proof}
Towards a contradiction, we assume at some point
during the course of the algorithm, there exists more than one tree in the present set of trees whose size is \emph{not} a power of two.  Consider the first step, $j$, where two such trees exist, and call them $T$ and $T'$.  We assume that  $T = (T_1,T_2)$ (where $(T_1,T_2)$ denotes the standard decomposition of $T$ as defined above) was constructed from the trees of minimal size, $T_1$ and $T_2$ at step $i<j$. Further assume that at some later step $j$,   $T' = (T_1',T'_2)$ was constructed from the trees of minimal size, $T'_1$ and $T'_2$.  By hypothesis, at the previous step, $i-1$, all trees, have size a power of $2$.  And at step $j-1$, all trees, besides $T$, have size a power of $2$.
Since both $T_1$ and $T_2$ have size a power of $2$ and the result of joining them, $T$, does not, we know that $T_1$ and $T_2$ have different heights, and without loss of generality, we have that $|T_1| < |T_2| < |T|$.  Further, by construction, $h(T_2) + 1 = h(T)$, and, by Corollary~\ref{cor:coronado2cases}, $h(T_1) + 1 = h(T_2)$.   
By a similar argument, $|T'_1| < |T'_2| < |T'|$, $h(T'_2) + 1 = h(T')$, and $h(T'_1) + 1 = h(T'_2)$.

At each step, algorithm $\mathcal{A}$ joins the two trees of the present set of the smallest size, so, all other trees at step $i-1$ must have at least the size of $T_2$, and all other trees at step $j-1$ must have at least the size of $T'_2$.
In particular, $T$ must be as least as large as $T'_2$.  Since $T$ does not have size a power of $2$, but $T'_2$ does, we conclude from $|T|\geq |T_2'|$ that in fact $|T|> |T_2'|$. Moreover, as a power of two, $|T_2'|$ as a subtree of the GFB tree must be a fully balanced tree by Corollary \ref{cor:GFBsubtrees}, so it has minimal height for any tree with $|T_2'|$ leaves. Thus, as $|T|> |T_2'|$, we conclude that we additionally have $h(T_2)+1=h(T)>h(T_2')>h(T_1')$. This necessarily implies $h(T_2)\geq h(T_2') >h(T_1')$. Together with $h(T_2)>h(T_1)$, this gives a contradiction, as in step $i$, $T_1$ would not have been clustered with $T_2$ as tree $T_1'$ of strictly smaller size than $T_2$ was still available. This completes the proof.

\end{proof}

The following theorem will prove useful for calculating an explicit formula for the minimum value of the $\widehat{s}$-shape statistic. It counts the number of subtrees of $\Tgfb$ for all possible subtree sizes.

\begin{theorem}\label{thm:GFBsubtreesizes} Let $n \geq 1$, and let $a_n(i)$ denote the number of subtrees of $\Tgfb$ of size $i$ for $i=1,\ldots,n$. Let $k_i=\lceil \log(i) \rceil$. Then, we have: \begin{equation*} a_n(i)=\begin{cases} \left\lfloor \frac{n}{i}\right\rfloor &  \mbox{if $i=2^{k_i}$ and if $((n\mod i)=0 \ \ $ or $ \ \ (n \mod i)\geq 2^{k_i-1})$, } \\ 
\left\lfloor \frac{n}{i}\right\rfloor-1 & \mbox{if $i=2^{k_i}$ and if $(0< (n \mod i)< 2^{k_i-1})$, } \\ 
1 & \mbox{if $i \neq 2^{k_i}$ and $\left((n-i) \mod 2^{k_i-1}\right)=0$,  } \\ 0 & \mbox{if $i \neq 2^{k_i}$ and $\left((n-i) \mod 2^{k_i-1}\right)>0$. }\end{cases}.\end{equation*}

\end{theorem}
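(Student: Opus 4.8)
The plan is to make the tree $\Tgfb$ completely explicit through its description as a complete tree (Lemma~\ref{lem:Tgfb_altDef}) and then count its subtrees by sorting them according to the depth at which their root sits. Write $h=\lfloor\log n\rfloor$ and $r=n-2^h$, so $0\le r<2^h$. By the complete-tree construction, $\Tgfb$ is the fully balanced tree $T^{fb}_h$ in which the leftmost $r$ of its $2^h$ leaves have each been expanded into a cherry. Consequently the internal vertices of $\Tgfb$ are exactly the $2^h-1$ internal vertices of $T^{fb}_h$ together with the $r$ newly created cherry-parents, giving $2^h-1+r=n-1$ internal vertices, as required.

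Next I would organize the count by levels. For $m=0,1,\dots,h$, call an internal vertex a \emph{level-$m$ vertex} if it roots a subtree spanning a block of $2^m$ consecutive leaves of $T^{fb}_h$; equivalently, it sits at depth $h-m$ in $T^{fb}_h$ (the cherry-parents being the level-$0$ vertices). A level-$m$ vertex whose block contains $c$ of the leftmost-$r$ expanded positions induces a subtree of size $2^m+c$, since each expanded position contributes one extra leaf. For fixed $m$ the blocks partition the $2^h$ leaf positions, and since the expanded positions form the initial segment of length $r$, exactly three kinds of block occur: blocks lying entirely inside the expanded segment (size $2^{m+1}$, of which there are $\lfloor r/2^m\rfloor$), at most one \emph{straddling} block meeting both segments (size $2^m+(r\bmod 2^m)$, present iff $r\bmod 2^m\neq 0$), and blocks lying entirely outside it (size $2^m$, of which there are $2^{h-m}-\lceil r/2^m\rceil$). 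The crucial structural point, mirroring Lemma~\ref{lem:AlgOnly1Non2tree}, is that each level has \emph{at most one} straddling block, hence at most one subtree whose size is not a power of two.

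From this local picture the two regimes of the statement fall out. If $i$ is not a power of two, then $2^m<i<2^{m+1}$ with $m=\lfloor\log i\rfloor=k_i-1$, so $i$ can be realized only by a straddling block, and only at level $m$ (the open intervals $(2^{m'},2^{m'+1})$ are pairwise disjoint). That block has size $i$ iff $r\bmod 2^m=i-2^m$; since $0<i-2^m<2^m$, this is equivalent to $r\equiv i\pmod{2^m}$, and because $2^h\equiv 0\pmod{2^m}$ (as $m\le h$) it is in turn equivalent to $(n-i)\bmod 2^{k_i-1}=0$, yielding $a_n(i)=1$ or $0$ exactly as in the last two cases. If instead $i=2^{k_i}$, the only contributions are full blocks at level $k_i-1$ (size $2^{k_i}$, count $\lfloor r/2^{k_i-1}\rfloor=\lfloor 2r/i\rfloor$) and outside blocks at level $k_i$ (size $2^{k_i}$, count $2^h/i-\lceil r/i\rceil$), since straddling blocks never attain a power-of-two size. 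Substituting $r=n-2^h$, both the $h$-dependence and the term $2^h/i$ cancel, leaving the clean expression $a_n(i)=\lfloor 2n/i\rfloor-\lceil n/i\rceil$.

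It then remains to match $\lfloor 2n/i\rfloor-\lceil n/i\rceil$ to the first two cases. Writing $\rho=n\bmod i$ and using the Iverson-bracket identities $\lceil n/i\rceil=\lfloor n/i\rfloor+[\rho>0]$ and $\lfloor 2n/i\rfloor=2\lfloor n/i\rfloor+[\rho\ge i/2]$, this becomes $\lfloor n/i\rfloor+[\rho\ge i/2]-[\rho>0]$, which equals $\lfloor n/i\rfloor$ when $\rho=0$ or $\rho\ge 2^{k_i-1}=i/2$ and equals $\lfloor n/i\rfloor-1$ when $0<\rho<i/2$, exactly the stated dichotomy; the case $i=1$ (where $k_i=0$, $i=2^{k_i}$) is immediate since there are $n$ leaves, matching $\lfloor n/1\rfloor$. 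The main obstacle I anticipate is not a single deep idea but the careful bookkeeping that welds the geometry to the arithmetic: verifying that contributions from different levels neither overlap nor double-count a vertex, translating the single straddling-block condition into the congruence $(n-i)\bmod 2^{k_i-1}=0$, and confirming the degenerate extremes (namely $n$ a power of two, where $r=0$ and only power-of-two sizes occur, and $i$ close to $n$, where the unique non-power-of-two subtree of such a large size is the root itself).
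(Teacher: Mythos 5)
Your proof is correct, but it takes a genuinely different route from the paper's. The paper argues dynamically: it tracks the greedy clustering algorithm phase by phase, invokes Lemma~\ref{lem:AlgOnly1Non2tree} to guarantee at most one non-power-of-two tree in the current forest, and splits into three cases according to $b=n\bmod 2^{k_i}$. You instead argue statically: you invoke the complete-tree characterization (Lemma~\ref{lem:Tgfb_altDef}) to realize $\Tgfb$ as $T^{\mathit{fb}}_h$ with the leftmost $r=n-2^h$ leaves expanded into cherries, stratify the internal vertices by depth, and observe that at each level the leaf-blocks split into fully-expanded, at most one straddling, and untouched blocks --- the straddling blocks being exactly the subtrees of non-power-of-two size, which is the static counterpart of Lemma~\ref{lem:AlgOnly1Non2tree}. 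There is no circularity in this dependence, since the paper's proof of Lemma~\ref{lem:Tgfb_altDef} rests only on Proposition~\ref{prop:coronado2cases}. Your route buys a cleaner endgame: the power-of-two case collapses to the closed form $\left\lfloor 2n/i\right\rfloor-\left\lceil n/i\right\rceil$, and the remaining work is routine floor/ceiling arithmetic (I checked the cancellation of the $2^h$ terms and the Iverson-bracket reduction; both are right, as is the translation of the straddling condition into $(n-i)\bmod 2^{k_i-1}=0$ via $2^h\equiv 0\pmod{2^{k_i-1}}$). What the paper's approach buys is self-containedness relative to the algorithmic definition of $\Tgfb$, at the cost of a longer case analysis. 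If you write this up, the only bookkeeping points worth spelling out are that the level decomposition partitions the internal vertices (each vertex has a unique depth, and the cherry parents are exactly the level-$0$ vertices), that level-$0$ untouched blocks are leaves and hence contribute only to $a_n(1)$, and that the root is itself the level-$h$ straddling block when $r>0$.
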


The appendix contains a proof of Theorem \ref{thm:GFBsubtreesizes}, which is based on Lemma \ref{lem:AlgOnly1Non2tree}.

We now turn our attention to $\pi_c\left(T_n^{gfb}\right)$.

\begin{corollary} \label{cor:explicitprod1} Let $n \in \mathbb{N}$, $n\geq 2$ and let $c \in \mathbb{R}, c> -2$.  Then,  $\pi_c\left(T_n^{gfb}\right) = \prod\limits_{i=2}^n (i+c)^{a_n(i)}$, where $a_n(i)$ is given by Theorem \ref{thm:GFBsubtreesizes}.
\end{corollary}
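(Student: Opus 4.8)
The plan is to recognize Corollary \ref{cor:explicitprod1} as a direct bookkeeping consequence of Theorem \ref{thm:GFBsubtreesizes}, which already enumerates exactly how many internal vertices of $T_n^{gfb}$ induce a pending subtree of each size $i$. Since $\pi_c(T)=\prod_{v\in\mathring V(T)}(n_v+c)$ is a product over internal vertices and each factor depends on $v$ only through its subtree size $n_v$, the entire proof reduces to regrouping this product by subtree size rather than by vertex.

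First I would observe that for any rooted binary tree $T$ with $n$ leaves, the multiset of subtree sizes $\{n_v : v\in\mathring V(T)\}$ determines $\pi_c(T)$ completely, because the factor contributed by an internal vertex $v$ is $(n_v+c)$, independent of anything else about $v$. Applying this to $T=T_n^{gfb}$, I would partition $\mathring V(T_n^{gfb})$ according to the value $n_v$, collecting all internal vertices whose pending subtree has size $i$ into one group. By the definition of $a_n(i)$ in Theorem \ref{thm:GFBsubtreesizes}, there are exactly $a_n(i)$ such vertices, each contributing an identical factor $(i+c)$ to the product. Hence grouping yields
\begin{align*}
\pi_c\left(T_n^{gfb}\right)=\prod_{v\in\mathring V(T_n^{gfb})}(n_v+c)=\prod_{i}(i+c)^{a_n(i)},
\end{align*}
where the outer product ranges over all possible subtree sizes.

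Next I would settle the range of the index $i$. Internal vertices have $n_v\geq 2$, so sizes $i=0$ and $i=1$ contribute nothing (indeed $a_n(1)=0$ for internal vertices, and the leaf at size $1$ is not an internal vertex). The largest pending subtree is the whole tree, of size $n$, so $i$ ranges over $2,\dots,n$, matching the stated product $\prod_{i=2}^n (i+c)^{a_n(i)}$. Any size $i$ not actually occurring simply has $a_n(i)=0$ and contributes a trivial factor of $1$, so including such terms is harmless. I would also note that the hypotheses $c>-2$ and $n\geq 2$ guarantee $i+c>0$ for every $i\geq 2$, so every factor is positive and the product is well defined.

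I do not expect a genuine obstacle here: the statement is essentially the observation that $\pi_c$ is a symmetric function of the subtree-size multiset, combined with the exact count supplied by Theorem \ref{thm:GFBsubtreesizes}. The only point requiring minor care is the index bookkeeping at the endpoints (confirming that the leaf of size $1$ and any empty size classes are correctly excluded or contribute trivially), so the \enquote{hard part} is merely to state cleanly that the regrouping of the product is a bijective reindexing of $\mathring V(T_n^{gfb})$ by subtree size. Since Theorem \ref{thm:GFBsubtreesizes} is cited as established, the proof is a one-line reindexing argument once this framing is in place.
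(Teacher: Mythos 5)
Your proof is correct and takes essentially the same approach as the paper's: both regroup the product $\pi_c(T_n^{gfb})=\prod_{v\in\mathring V}(n_v+c)$ by subtree size, observe that sizes range over $2,\dots,n$, and invoke Theorem \ref{thm:GFBsubtreesizes} to count the multiplicity $a_n(i)$ of each size. Your write-up just spells out the reindexing and endpoint bookkeeping more explicitly than the paper's two-sentence argument.
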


\begin{proof} By definition, we have $\pi_c(T)=\prod\limits_{v \in \mathring{V}(T)} (n_v+c)$. Since the values of $n_v$ range between 2 and $n$, sorting the subtrees by sizes and using  Theorem \ref{thm:GFBsubtreesizes} gives the desired result. 
\end{proof}

Corollary \ref{cor:sshape} 
 together with Corollary \ref{cor:explicitprod1} leads to an explicit formula for the minimum value for the $\widehat{s}$-statistic, thus solving an open problem stated in \cite[Chapter 9]{fischerBook}.

\begin{corollary}\label{cor:minvalue1}
Let $\widehat{s}^*_n=\min\limits_{T' \in \BTnstar} \widehat{s}(T')$. Then, we have: $\widehat{s}^*_n=\log \left( \prod\limits_{i=2}^n (i-1)^{a_n(i)}\right)$.
\end{corollary}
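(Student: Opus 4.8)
The plan is to combine the two results already established, namely Corollary~\ref{cor:sshape}, which identifies $T_n^{gfb}$ as the unique minimizer of $\widehat{s}$ over $\BTnstar$, and Corollary~\ref{cor:explicitprod1}, which gives the product formula for $\pi_c\left(T_n^{gfb}\right)$ in terms of the subtree-size counts $a_n(i)$. Since Corollary~\ref{cor:sshape} tells us that the minimum value $\widehat{s}^*_n$ is attained at the GFB tree, we have $\widehat{s}^*_n=\widehat{s}\left(T_n^{gfb}\right)$, so the entire task reduces to rewriting $\widehat{s}\left(T_n^{gfb}\right)$ using the explicit product.

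First I would recall from Definition~\ref{def:TSS} that $\widehat{s}(T)=\log\left(\prod_{v\in\mathring{V}(T)}(n_v-1)\right)$, which is exactly $\log\left(\pi_{-1}(T)\right)$ in the notation of the product function with $c=-1$. Since $c=-1>-2$, Corollary~\ref{cor:explicitprod1} applies and gives $\pi_{-1}\left(T_n^{gfb}\right)=\prod_{i=2}^n (i-1)^{a_n(i)}$, where $a_n(i)$ is supplied by Theorem~\ref{thm:GFBsubtreesizes}. Substituting this into the logarithm yields
\[
\widehat{s}^*_n=\widehat{s}\left(T_n^{gfb}\right)=\log\left(\pi_{-1}\left(T_n^{gfb}\right)\right)=\log\left(\prod_{i=2}^n (i-1)^{a_n(i)}\right),
\]
which is precisely the claimed formula.

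There is essentially no obstacle here: the statement is a direct corollary that stitches together the minimizing property (Corollary~\ref{cor:sshape}) with the explicit product count (Corollary~\ref{cor:explicitprod1}), using only the observation that $\widehat{s}=\log\circ\,\pi_{-1}$. The only point requiring a moment of care is confirming that $c=-1$ falls within the admissible range $c>-2$ so that Corollary~\ref{cor:explicitprod1} is legitimately invoked, and that the product index ranges over $i=2,\ldots,n$ consistently with the fact that the smallest internal-node subtree size is $2$ (so the factor $i-1$ is at least $1$ and all terms are well-defined). Given these trivial checks, the proof is a one-line substitution.
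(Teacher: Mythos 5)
Your proposal is correct and follows essentially the same route as the paper's own proof: reduce to the GFB tree via Corollary~\ref{cor:sshape}, observe $\widehat{s}=\log\circ\,\pi_{-1}$, and apply Corollary~\ref{cor:explicitprod1} with $c=-1$. No gaps.
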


\begin{proof} By Corollary \ref{cor:sshape}, $\widehat{s}$ is uniquely minimized by $\Tgfb$, which directly implies $\widehat{s}^*_n=\widehat{s}\left(\Tgfb\right)$. By definition, this shows $\widehat{s}^*_n=\log \left( \prod\limits_{v \in \mathring{V}\left(\Tgfb\right)} (n_v-1)\right).$ Applying Corollary \ref{cor:explicitprod1} with $c=-1$ and the monotonicity of the logarithm, immediately leads to the required result and thus completes the proof.
\end{proof}

\begin{remark} The sequence $\left( \prod\limits_{i=2}^n (i-1)^{a_n(i)}\right)_{n \geq 1}$ starts with the values:
$$1, 1, 2, 3, 8, 15, 36, 63, 192, 405, 1080, 2079, 6048, 12285, 31752, 59535
193536, 433755, 1224720$$ 
has already occurred in different contexts (cf. \cite{bodini}). It is listed in the Online Encyclopedia of Integer Sequences (\cite[Sequence A132862]{OEIS}), where only recursive formulas were given.  Using the GFB tree, in particular Theorem \ref{thm:GFBsubtreesizes} and Corollary~\ref{cor:minvalue1}, now allows for a simple explicit formula to calculate this sequence.\footnote{We note that another explicit, but more complicated, formula was independently derived in \cite{theresa}.} The explicit formulas arising from the present manuscript have meanwhile been submitted to the OEIS. 
\end{remark}

Analogously, we can derive a result as given by Corollary~\ref{cor:minvalue1} also for the $Q$-shape statistic.

\begin{corollary}\label{cor:minvalue1Q}
Let $Q^*_n=\min\limits_{T' \in \BTnstar} Q(T')$. Then, we have: $Q^*_n=\log \left( \prod\limits_{i=2}^n i^{a_n(i)}\right)$.
\end{corollary}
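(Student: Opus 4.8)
The plan is to mirror the argument used for $\widehat{s}^*_n$ in Corollary \ref{cor:minvalue1} almost verbatim, substituting the shift $c=0$ for $c=-1$. First I would invoke Corollary \ref{cor:sshape}, which establishes that the $Q$-shape statistic is a tree imbalance index uniquely minimized by the GFB tree $\Tgfb$. This immediately gives $Q^*_n = Q\left(\Tgfb\right)$, reducing the problem to evaluating $Q$ on the single tree $\Tgfb$.

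Next I would unfold the definition of $Q$. Since $Q(T) = \log\left(\prod_{v \in \mathring{V}(T)} n_v\right)$ and the product function $\pi_0(T) = \prod_{v \in \mathring{V}(T)}(n_v + 0) = \prod_{v \in \mathring{V}(T)} n_v$ coincides with the argument of the logarithm, we have $Q\left(\Tgfb\right) = \log\left(\pi_0\left(\Tgfb\right)\right)$. At this point the only remaining work is to count, for each possible subtree size, how many subtrees of that size occur in $\Tgfb$, and this is exactly what Corollary \ref{cor:explicitprod1} supplies: specializing it to the admissible value $c = 0$ (which satisfies $c > -2$) yields $\pi_0\left(\Tgfb\right) = \prod_{i=2}^n i^{a_n(i)}$, where $a_n(i)$ is the subtree-size multiplicity function from Theorem \ref{thm:GFBsubtreesizes}. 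Substituting this into the previous identity and using monotonicity of $\log$ gives $Q^*_n = \log\left(\prod_{i=2}^n i^{a_n(i)}\right)$, as claimed.

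I do not expect any genuine obstacle here: the entire argument is a direct parallel of Corollary \ref{cor:minvalue1}, and all the substantive work—identifying $\Tgfb$ as the unique minimizer and computing the subtree-size multiplicities $a_n(i)$—has already been carried out in Corollary \ref{cor:sshape}, Theorem \ref{thm:GFBsubtreesizes}, and Corollary \ref{cor:explicitprod1}. The only point requiring the slightest care is the bookkeeping that $c=0$ lies in the admissible range $c > -2$, so that Corollary \ref{cor:explicitprod1} applies without modification.
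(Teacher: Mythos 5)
Your proposal is correct and follows exactly the same route as the paper, which simply states that the proof is analogous to that of Corollary \ref{cor:minvalue1} with $c=0$ in place of $c=-1$: invoke Corollary \ref{cor:sshape} to get $Q^*_n = Q\left(T_n^{\mathit{gfb}}\right)$, then apply Corollary \ref{cor:explicitprod1}. No issues.
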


\begin{proof} Using $c=0$, the proof of Corollary \ref{cor:minvalue1Q} is analogous to the proof of Corollary \ref{cor:minvalue1}.
\end{proof}

While the explicit formulas for $\pi_{-1}\left(T^{gfb}_n\right)$ and for $\pi_{0}\left(T^{gfb}_n\right)$ provided by Corollary~\ref{cor:explicitprod1} are non-recursive and thus already an improvement to the previous state of the literature, they heavily depend on Theorem~\ref{thm:GFBsubtreesizes}. The following theorem provides another direct formula independent of the values $a_n(i)$. 

\begin{theorem} \label{thm:explicitprod2}
Let $n \in \mathbb{N}$, $n\geq 2$ and let $c \in \mathbb{R}, c> -2$. Let $k_n=\lceil \log(n)\rceil$,  and let $d_n=
 n-2^{k_n-1}$ be the difference between $n$ and the next lower power of 2. Then, we have:

\begin{enumerate}
\item If $d_n=2^{k_n-1}$ (with $n=2^{k_n}$), then $\pi_c\left(T_n^{gfb}\right) =\pi_c\left(T_{k_n}^{fb}\right) = \prod\limits_{i=0}^{k_n-1} (2^{k_n-i}+c)^{2^i}$.
\item If $d_n<2^{k_n-1}$, (with $n<2^{k_n}$), then we have:  \begin{align*}\pi_c\left(T_n^{gfb}\right) = & \pi_c\left(T_{k_n-1}^{fb}\right)\cdot (2+c)^{d_n} \\ &\cdot \prod\limits_{i=1}^{k_n-1}\left( \left(\frac{2^{i+1}+c}{2^i+c}\right)^{\left\lfloor \frac{d_n}{2^i} \right\rfloor} \cdot \left(\frac{2^i+\left(d_n-2^i\cdot\left\lfloor\frac{d_n}{2^i}\right\rfloor\right)+c}{2^i+c}\right)^{\left\lceil \frac{d_n}{2^i}\right\rceil-\left\lfloor\frac{d_n}{2^i}\right\rfloor}\right)\end{align*}

\end{enumerate}
\end{theorem}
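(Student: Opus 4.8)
The plan is to work with the complete-tree description of $T_n^{gfb}$ furnished by Lemma~\ref{lem:Tgfb_altDef}, which presents $T_n^{gfb}$ as the fully balanced tree on the largest power of two not exceeding $n$, with sibling pairs attached to its leaves from left to right. I would treat the two cases of the theorem separately, according to whether $n$ is a power of two (Case~1) or not (Case~2), and I would avoid invoking Theorem~\ref{thm:GFBsubtreesizes} so that the resulting formula is genuinely independent of the counts $a_n(i)$.

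Case~1 is essentially a direct computation. When $n = 2^{k_n}$ we have $T_n^{gfb} = T_{k_n}^{fb}$ (as recorded in Section~\ref{sec:prelim}). In the fully balanced tree of height $k_n$, every internal vertex at depth $i$ induces a pending subtree of size $2^{k_n-i}$, and there are exactly $2^i$ such vertices, for $i = 0,\ldots,k_n-1$. Grouping the factors of $\pi_c$ by depth then yields $\pi_c(T_{k_n}^{fb}) = \prod_{i=0}^{k_n-1}(2^{k_n-i}+c)^{2^i}$, which is the claimed formula.

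For Case~2 ($0 < d_n < 2^{k_n-1}$), the key idea is to build $T_n^{gfb}$ from the base tree $B := T_{k_n-1}^{fb}$ by the complete-tree construction: label the $2^{k_n-1}$ leaves of $B$ from left to right and replace the leftmost $d_n$ of them by cherries, each replacement raising the leaf count by one until $n$ leaves are reached. I would then express $\pi_c(T_n^{gfb})$ in terms of $\pi_c(B)$ by accounting for two effects. First, each of the $d_n$ replaced leaves becomes a new internal vertex of subtree size $2$, contributing the factor $(2+c)^{d_n}$. Second, for every internal vertex $v$ of $B$, the size of $T_v$ grows by exactly the number of replaced leaves lying in the leaf-block spanned by $v$; factoring each new contribution $(n_v'+c)$ as $(n_v+c)$ times the correction ratio $(n_v'+c)/(n_v+c)$, where $n_v'$ is the updated size, separates $\pi_c(B)$ from a product of correction ratios taken over $\mathring{V}(B)$.

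The combinatorial heart of the proof, and the main obstacle, is evaluating this product of correction ratios level by level. The internal vertices of $B$ of induced subtree size $2^i$ (for $i = 1,\ldots,k_n-1$) are in bijection with the partition of the $2^{k_n-1}$ leaves into consecutive blocks of length $2^i$. Among these blocks, exactly $\lfloor d_n/2^i\rfloor$ lie entirely within the leftmost $d_n$ positions and are thus fully replaced, so each corresponding vertex grows from $2^i$ to $2^{i+1}$, giving the factor $\left(\frac{2^{i+1}+c}{2^i+c}\right)^{\lfloor d_n/2^i\rfloor}$. At most one further block straddles position $d_n$; it is partially replaced precisely when $d_n$ is not a multiple of $2^i$, in which case it contains $r = d_n - 2^i\lfloor d_n/2^i\rfloor$ replaced leaves and its vertex grows from $2^i$ to $2^i+r$. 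Since the indicator of ``$d_n$ is not a multiple of $2^i$'' equals $\lceil d_n/2^i\rceil - \lfloor d_n/2^i\rfloor$, this straddling block contributes $\left(\frac{2^i+r+c}{2^i+c}\right)^{\lceil d_n/2^i\rceil-\lfloor d_n/2^i\rfloor}$, while every block entirely to the right of position $d_n$ is unreplaced and contributes a ratio of $1$. Multiplying these per-level contributions over $i=1,\ldots,k_n-1$ and combining with $\pi_c(B)=\pi_c(T_{k_n-1}^{fb})$ and the cherry factor $(2+c)^{d_n}$ yields the stated formula. As a sanity check I would verify the endpoint $i=k_n-1$ (the root of $B$), where $\lfloor d_n/2^{k_n-1}\rfloor = 0$ and $r = d_n$, so the root grows from $2^{k_n-1}$ to $2^{k_n-1}+d_n = n$, producing the correct top factor $(n+c)$.
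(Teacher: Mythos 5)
Your proposal is correct and follows essentially the same route as the paper: it invokes Lemma~\ref{lem:Tgfb_altDef} to build $T_n^{gfb}$ from $T_{k_n-1}^{fb}$ by replacing the leftmost $d_n$ leaves with cherries, accounts for the new cherries via the factor $(2+c)^{d_n}$, and then evaluates the correction ratios level by level using exactly the counts $\lfloor d_n/2^i\rfloor$ for fully replaced blocks and the indicator $\lceil d_n/2^i\rceil-\lfloor d_n/2^i\rfloor$ for the at most one straddling block. Your treatment is if anything slightly more explicit than the paper's (in Case~1 and in the endpoint check at $i=k_n-1$), but the argument is the same.
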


Before we can prove Theorem \ref{thm:explicitprod2}, we briefly show that the complete tree $T_n^c$  of Fill  \cite{Fill1996} coincides with the GFB tree $\Tgfb$, which has been independently shown in Riesterer \cite[Theorem 4.2]{theresa}, albeit with a different proof. This result will enable us to use a different construction of $T_n^{gfb}$, namely to start with $T_{k_n-1}^{fb}$, fix an orientation and then replace a certain number of leaves by cherries from left to right, until $n$ leaves in total are reached.

\begin{lemma}[Theorem 4.2 in  \cite{theresa}]\label{lem:Tgfb_altDef}
Let $n \geq 1$. Then, $\Tgfb=T^c_n$.  
\end{lemma}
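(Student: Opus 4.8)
The plan is to prove that the complete tree $T^c_n$ and the GFB tree $\Tgfb$ coincide by induction on $n$, comparing their standard decompositions. Both trees have a natural recursive structure: for $\Tgfb$ this is given by Corollary~\ref{cor:coronado2cases}, which splits into two cases depending on where $n$ falls relative to $3\cdot 2^{k_n-2}$, and in each case pins down the size of one maximal pending subtree as a \emph{fully balanced} tree. For $T^c_n$, I would first extract an analogous recursive description of its standard decomposition directly from Fill's definition (build $T^{\mathit{fb}}$ of size $2^{\lfloor \log_2 n\rfloor}$, order the leaves left to right, and attach $n - 2^{\lfloor \log_2 n\rfloor}$ cherries from the left). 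The base cases $n=1,2$ are immediate since there is a unique tree on one or two leaves.

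First I would make the recursive description of $T^c_n$ precise. Write $\ell = \lfloor \log_2 n\rfloor$ and $d = n - 2^\ell$, so the complete tree is obtained from $T^{\mathit{fb}}_\ell$ by replacing the leftmost $d$ leaves by cherries. The key observation is how these $d$ cherries distribute between the left and right maximal pending subtrees of $T^{\mathit{fb}}_\ell$, each of which is $T^{\mathit{fb}}_{\ell-1}$ with $2^{\ell-1}$ leaves. If $d \leq 2^{\ell-1}$, all $d$ cherries land in the left subtree (which becomes a complete tree on $2^{\ell-1}+d$ leaves) while the right subtree stays $T^{\mathit{fb}}_{\ell-1}$; if $d > 2^{\ell-1}$, the left subtree becomes fully saturated (it is $T^{\mathit{fb}}_\ell$ on $2^\ell$ leaves) and the overflow $d - 2^{\ell-1}$ cherries spill into the right subtree. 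These two regimes correspond exactly, after translating $\ell$ via $k_n = \lceil \log n\rceil$, to the two cases of Corollary~\ref{cor:coronado2cases}: in the first regime one maximal subtree is the fully balanced tree $T^{\mathit{fb}}_{k_n-2}$, and in the second regime one maximal subtree is $T^{\mathit{fb}}_{k_n-1}$.

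The heart of the argument is then to match sizes. In each regime I would verify that the sizes $(n_a, n_b)$ of the maximal pending subtrees of $T^c_n$ agree with those prescribed for $\Tgfb$ by Corollary~\ref{cor:coronado2cases}, and that the subtree forced to be fully balanced is fully balanced in both trees. Once the maximal subtree sizes coincide and the smaller (fully balanced) subtree is literally identical, the induction hypothesis applied to the remaining subtree — which is itself a complete tree on fewer leaves, and whose GFB counterpart is a GFB tree on the same number of leaves by Lemma~\ref{lem:GFBmaxsubtrees} — closes the argument via $T = (T_a, T_b)$. Care is needed at the boundary $d = 2^{\ell-1}$ (equivalently $n = 3\cdot 2^{k_n-2}$), where both cases of Corollary~\ref{cor:coronado2cases} apply; here the two descriptions must be checked to agree, which they do since both yield a balanced split into two fully balanced subtrees of size $2^{\ell-1}$.

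The main obstacle I anticipate is bookkeeping the floor/ceiling arithmetic cleanly, in particular translating between Fill's $\ell = \lfloor \log_2 n\rfloor$ and the paper's preferred $k_n = \lceil \log n\rceil$, and handling the powers-of-two case $n = 2^{k_n}$ separately (where $\ell = k_n$ rather than $k_n - 1$), exactly as in the proof of Corollary~\ref{cor:coronado2cases}. A cleaner alternative, which I would consider if the decomposition matching proves fiddly, is to argue directly that the complete-tree construction produces, at every size, a tree all of whose pending subtrees have sizes that are powers of two except possibly along a single root-to-leaf spine — a structural invariant that Lemma~\ref{lem:AlgOnly1Non2tree} shows also characterizes the GFB tree at every stage of its greedy construction. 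Matching these invariants would give the isomorphism without repeatedly invoking the two-case split, but the inductive decomposition approach is the most direct and I would present that as the primary proof.
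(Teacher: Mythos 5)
Your proposal is correct and follows essentially the same route as the paper: induction on $n$, a two-case recursive description of how the $d = n - 2^{\lfloor\log n\rfloor}$ attached cherries distribute between the two maximal pending subtrees of the underlying fully balanced tree, a size-match against the Coronado et al.\ characterization of the GFB decomposition (the paper uses Proposition~\ref{prop:coronado2cases} with $\ell_n$ where you use Corollary~\ref{cor:coronado2cases} with $k_n$, an immaterial difference), and closure by the induction hypothesis on the maximal pending subtrees. The alternative invariant-matching route you sketch is not what the paper does, but you correctly identify the decomposition argument as the direct one.
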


\begin{proof} We prove the statement by induction on $n$. In the following, let $\Tgfb=(T_a,T_b)$, where $T_a$ has size $n_a$ and $T_b$ has size $n_b$ with $n_a \geq n_b$. Similarly, let $T_n^c=(T_a^c,T_b^c)$, where $T_a^c$ has size $n_a^c$ and $T_b^c$ has size $n_b^c$ with $n_a^c \geq n_b^c$.

For $n= 1$ there is only one tree, so there is nothing to show. We now assume the statement holds for all trees with up to $n-1$ leaves and consider $n \geq 2$. Let $\ell_{n} =\lfloor \log(n) \rfloor$, that is $n \in \{2^{\ell_{n}},\ldots,2^{\ell_{n}+1}-1\}$. Then, by construction of $T_n^c$, we know that if $n\leq 3 \cdot 2^{\ell_{n}-1}$, $n_b^c=2^{\ell_{n}-1}$ (as in this case, no leaves of $T_b$ get replaced by cherries), and $n_a^c=n-n_b^c=n-2^{\ell_{n}-1}$. By Part (1) of Proposition \ref{prop:coronado2cases}, we can then conclude $n_a=n_a^c$ and $n_b=n_b^c$. 

If, on the other hand,  $n>3 \cdot 2^{\ell_{n}-1}$, we know by construction that $T_a^c=T_{\ell_{n}}^{fb}$ and thus $n_a^c=2^{\ell_{n}}$ (as \emph{all} leaves of the left subtree have been replaced by cherries) and thus $n_b^c=n-n_a^c=n-2^{\ell_{n}}$. By Part (2) of Proposition \ref{prop:coronado2cases}, we can then conclude $n_a=n_a^c$ and $n_b=n_b^c$. 

Thus, in both cases, the sizes of $T_a$ and $T_a^c$ as well as the sizes of $T_b$ and $T_b^c$, respectively, coincide, which by induction shows that $T_a=T_a^c$ and $T_b=T_b^c$. Thus, $\Tgfb=(T_a,T_b)=(T_a^c,T_b^c)=T_n^c$, which completes the proof.
\end{proof}

We are now in the position to prove Theorem \ref{thm:explicitprod2}. 
\begin{proof}[Proof of Theorem \ref{thm:explicitprod2}] For the case in which $d_n=2^{k_n-1}$, the statement is a direct conclusion of the fact that for $n=2^{k_n}$, the GFB tree and the fully balanced tree coincide.

It remains to consider the case $d_n<2^{k_n-1}$. In this case, by Lemma \ref{lem:Tgfb_altDef}, we can derive the GFB tree from $T_{k_n-1}^{fb}$ by replacing $d_n$ leaves by cherries from left to right. Thus, we can start with $\pi_c\left(T_{k_n-1}^{fb}\right)$ and modify this product by dividing out all factors of $\pi_c\left(T_{k_n-1}^{fb}\right)$ that are no longer present in $\pi_c\left(T_{n}^{gfb}\right)$ and multiypling in factors newly occuring in the latter term. We do this using the following considerations: 
\begin{itemize}
\item The factor $(2+c)^{d_n}$ stems from the $d_n$ newly formed cherries, each of which contributes a factor of $2+c$. For this new factor, nothing needs to be divided out as the former leaves did not occur in the term $\pi_c\left(T_{k_n-1}^{fb}\right)$. 
\item Next, for each $i=1,\ldots,k_n-1$, we need to check how many subtrees of size $2^i$ in $T_{k_n-1}^{fb}$ get replaced by subtrees of size $2^{i+1}$ when $T_n^{gfb}$ is formed. These are precisely the subtrees of size $2^i$ in $T_{k_n-1}^{fb}$ all of whose leaves get replaced by cherries. As we fill the tree up from left to right, it can be easily seen that there are $\left\lfloor \frac{d_n}{2^i} \right\rfloor$ such trees, explaining the term $\left(\frac{2^{i+1}+c}{2^i+c}\right)^{\left\lfloor \frac{d_n}{2^i} \right\rfloor}$.
\item Last, for each $i=1,\ldots,k_n-1$, there may be at most one subtree of size $2^i$ in $T_{k_n-1}^{fb}$ some of whose leaves but not all get replaced by cherries. This depends on whether $\frac{d_n}{2^i}$ is an integer. If it is, then the $d_n$ new cherries completely fill up all trees of size $2^i$ to which they were added, (that is, all these trees have already been considered in the previous term). This is the case when $\left\lceil \frac{d_n}{2^i}\right\rceil-\left\lfloor\frac{d_n}{2^i}\right\rfloor=0$, which implies that in this case, the latter term in the equation equals 1 and thus does not contribute to the overall product.

If, however, $\frac{d_n}{2^i}$ is not an integer and thus $\left\lceil \frac{d_n}{2^i}\right\rceil-\left\lfloor\frac{d_n}{2^i}\right\rfloor=1$, then there is a tree of size $2^i$ to which $d_n-2^i \cdot \left\lfloor\frac{d_n}{2^i} \right\rfloor$ leaves get added, namely precisely the \enquote{left over} leaves after filling up $\left\lfloor\frac{d_n}{2^i} \right\rfloor$ many subtrees of $T_{k_n-1}^{fb}$ with $2^i$ new leaves each. This explains the last factor and thus completes the proof.
\end{itemize}
\end{proof}

Next, we again turn our attention to the $\widehat{s}$-shape statistic.

\subsection{Expected values of \texorpdfstring{$\widehat{s}$}{s}-shape statistic}\label{sec:expected} The $\widehat{s}$-shape statistic plays an important role in tree balance, particularly in the context of mathematical phylogenetics and the Yule-Harding model, cf. \cite{kersting2024}. However, so far the expected values of the $\widehat{s}$-shape statistic under the Yule-Harding and the uniform models, which are common distributions of trees, are unknown \cite[Chapter 9]{fischerBook}. In the following, we give bounds on the expected value of the $\widehat{s}$-shape statistic under these two distributions: the uniform distribution where each (binary) tree on $n$ leaves is equally likely, and the Yule-Harding distribution. To show our bounds, we outline and use the elegant approach of King and Rosenberg \cite{king2021}.  They note that the expectation of the Sackin index can be computed in terms of the cluster sizes for any distribution, $\theta$, on trees which has the exchangeability property, that is, for each $T\in \mathcal{BT}_n$ and each permutation, $\sigma$, of its leaf labels, $\mathbb{P}_{\theta}(T) = \mathbb{P}_{\theta}(\sigma(T))$.  For such distributions, the proposition from Than and Rosenberg \cite{than2014} applies for the Sackin index, $S_n$:

\begin{proposition} [\cite{than2014}, Lemma 6]  If a probability distribution, $\Theta$, on $\mathcal{BT}_n$ satisfies the exchangeability property, then the expected value for the Sackin index on $n$-leaf trees is:
$$
    \mathbb{E}_{\theta}[S_n] =  \sum_{k=1}^{n-1} \binom{n}{k} kp_n(k)
$$
where $p_n(k)$ is the probability that a given subset $A \subseteq X$ with $|A| = k$, $1\leq k\leq n$, is a cluster of a tree of size $n$ leaves sampled from $\mathcal{BT}_n$ according to $\Theta$.
\end{proposition}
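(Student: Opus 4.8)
The plan is to express the Sackin index as a sum of indicator variables over subsets of the taxon set $X$, take expectations termwise, and then invoke exchangeability to collapse the sum over subsets of equal size into a single multiplicity.

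First I would establish the combinatorial identity
$$S(T) = \sum_{\emptyset \neq A \subsetneq X} |A| \cdot \mathbf{1}[A \text{ is a cluster of } T],$$
where a \emph{cluster} is the leaf set of a pending subtree $T_v$ and $\mathbf{1}[\cdot]$ is the indicator function. To verify this I would pass to the equivalent description $S(T) = \sum_{\ell \in V_L(T)} \delta_T(\ell)$, which follows from the definition $S(T) = \sum_{v \in \mathring{V}(T)} n_v$ by writing $n_v = \sum_{\ell} \mathbf{1}[v \text{ is an ancestor of } \ell]$ and exchanging the order of summation. For a fixed leaf $\ell$, the clusters containing $\ell$ are exactly those induced by the vertices on the root-to-$\ell$ path, of which there are $\delta_T(\ell)+1$; discarding the full taxon set $X$ (induced by the root) leaves exactly $\delta_T(\ell)$ of them. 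Since summing $|A|$ over clusters amounts to counting, for each cluster $A$, each leaf of $A$ once, the double sum rearranges to $\sum_{\ell} \delta_T(\ell) = S(T)$, which confirms the identity.

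Next I would take expectations under $\Theta$ and use linearity to obtain
$$\mathbb{E}_{\theta}[S_n] = \sum_{\emptyset \neq A \subsetneq X} |A| \cdot \mathbb{P}_{\theta}[A \text{ is a cluster of } T].$$
This is the step where exchangeability is essential: because $\mathbb{P}_{\theta}(T) = \mathbb{P}_{\theta}(\sigma(T))$ for every permutation $\sigma$ of the leaf labels, the probability that a subset $A$ is a cluster depends only on $|A|$ and equals $p_n(|A|)$; in particular $p_n(k)$ is well-defined, so the phrase \enquote{a given subset of size $k$} is unambiguous. Grouping the nonempty proper subsets $A$ by their common size $k = |A|$, which ranges over $1, \ldots, n-1$, and using that there are exactly $\binom{n}{k}$ subsets of size $k$, yields $\mathbb{E}_{\theta}[S_n] = \sum_{k=1}^{n-1} \binom{n}{k}\, k\, p_n(k)$, as claimed.

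The only real subtlety — more bookkeeping than genuine obstacle — is the index range. The definition $S(T) = \sum_{v \in \mathring{V}(T)} n_v$ naturally runs over clusters of sizes $2$ through $n$, whereas the target sum runs over sizes $1$ through $n-1$. These agree because every singleton is a cluster, so $p_n(1) = 1$ and the $k=1$ term contributes $\binom{n}{1}\cdot 1 \cdot 1 = n$, matching exactly the root cluster $X$ (which has $p_n(n) = 1$ and also contributes $n$) that the restriction $A \subsetneq X$ removes. Summing over nonempty \emph{proper} subsets in the identity above handles this automatically, but I would spell out this cancellation explicitly to forestall any off-by-one error.
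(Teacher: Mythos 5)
The paper does not prove this proposition at all---it is quoted directly from Than and Rosenberg \cite{than2014} as a prior result---so there is no in-paper argument to compare against. Your proof is correct and is the standard one: writing $S(T)$ as $\sum_{\emptyset\neq A\subsetneq X}|A|\cdot\mathbf{1}[A \text{ is a cluster}]$, applying linearity of expectation, and using exchangeability to reduce $\mathbb{P}_\theta[A \text{ is a cluster}]$ to a function $p_n(|A|)$ of the size alone; your explicit accounting of the boundary terms (the $n$ singleton clusters entering at $k=1$ exactly compensating for the excluded root cluster $X$) is precisely the point worth spelling out, and you have it right.
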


Note that the $\widehat{s}$-shape statistic, like the Sackin index, sums across all cluster sizes of a given tree: the Sackin index sums up the respective size $k$, while the $\widehat{s}$-statistic sums the logarithm of said size (namely, $\log(k-1)$). This similarity between the indices allows us to use the above approach introduced for the Sackin index also for the $\widehat{s}$-shape statistic:

\begin{proposition}\label{prop:exp_value}  If a probability distribution, $\Theta$, on $\mathcal{BT}_n$ satisfies the exchangeability property, then the expected value for the $\widehat{s}$-shape statistic on $n$-leaf trees is
$$
    \mathbb{E}_{\theta}[\widehat{s}_n] =  \sum_{k=2}^{n-1} \binom{n}{k} \log(k-1)p_n(k)
$$
where $p_n(k)$ is the probability that a given subset $A \subseteq X$ with $|A| = k$, $2\leq k\leq n$, is a cluster of a tree of size $n$ leaves sampled from $\mathcal{BT}_n$ according to $\Theta$.
\end{proposition}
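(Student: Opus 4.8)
The plan is to prove this exactly as the companion Sackin statement (Than--Rosenberg, Lemma 6) is proved: recognise $\widehat{s}$ as a \emph{linear} functional of the family of indicator variables that record which subsets of $X$ occur as clusters of the random tree, and then push the expectation inside the (finite) sum using linearity together with the exchangeability hypothesis.

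First I would rewrite the statistic as a sum over subsets of the taxon set. Each internal vertex $v$ of a tree $T$ with $n_v = k$ induces precisely the cluster $A \subseteq X$ consisting of the $k$ leaves of $T_v$, and this correspondence between internal vertices and clusters of size $k \geq 2$ is a bijection. Writing $\mathbf{1}[A \text{ is a cluster of } T]$ for the corresponding indicator, the defining formula $\widehat{s}(T) = \sum_{v \in \mathring{V}(T)} \log(n_v - 1)$ becomes
\begin{equation*}
\widehat{s}(T) = \sum_{\substack{A \subseteq X \\ |A| \geq 2}} \log(|A| - 1)\, \mathbf{1}[A \text{ is a cluster of } T].
\end{equation*}

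Next I would take the expectation over a tree drawn from $\Theta$ and exchange the expectation with the finite sum by linearity, so that each indicator is replaced by its probability:
\begin{equation*}
\mathbb{E}_{\theta}[\widehat{s}_n] = \sum_{\substack{A \subseteq X \\ |A| \geq 2}} \log(|A| - 1)\, \mathbb{P}_{\theta}[A \text{ is a cluster}].
\end{equation*}
Here the exchangeability property enters: since $\mathbb{P}_{\theta}(T) = \mathbb{P}_{\theta}(\sigma(T))$ for every leaf permutation $\sigma$, the probability $\mathbb{P}_{\theta}[A \text{ is a cluster}]$ depends on $A$ only through its cardinality $k = |A|$, and by definition this common value is $p_n(k)$. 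Grouping the $\binom{n}{k}$ subsets of each fixed size $k$ then collapses the double sum into a sum over sizes, yielding $\sum_{k} \binom{n}{k} \log(k-1)\, p_n(k)$.

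The only delicate point --- and the step I would check most carefully --- is the range of summation over $k$. The lower endpoint $k = 1$ never occurs, which is fortunate since $\log(k-1)$ would be undefined there: an internal vertex always induces at least two leaves, so singleton subsets contribute nothing. The upper endpoint requires matching the cluster convention used for the Sackin index in the preceding proposition, under which the clusters counted by $p_n$ range over the sizes $1 \leq k \leq n-1$ (equivalently, clusters are associated with the non-root vertices, i.e. the edges, of $T$); the full leaf set $X$ is a cluster of \emph{every} tree, so its contribution is deterministic and is handled separately rather than through $p_n$. Verifying that the statistic is expressed over exactly this collection of clusters, so that the sum terminates at $k = n-1$ as claimed, is where the bookkeeping must be carried out consistently with the Sackin case, and is the main (if modest) obstacle; the entire probabilistic content of the argument is captured by the one-line linearity-plus-exchangeability reduction above.
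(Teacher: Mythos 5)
Your argument is correct and is exactly the route the paper takes: the paper proves the proposition by direct analogy with the Than--Rosenberg computation for the Sackin index, i.e.\ by writing the statistic as a linear combination of cluster indicators and applying linearity of expectation together with exchangeability, which is precisely what you spell out. The one point you flag but defer --- the upper limit of summation --- is worth settling rather than leaving as ``bookkeeping'': since the root is an internal vertex with $n_\rho=n$, the cluster $X$ contributes a deterministic $\log(n-1)$ to $\widehat{s}(T)$ for every tree, so carrying your decomposition through gives $\mathbb{E}_{\theta}[\widehat{s}_n]=\log(n-1)+\sum_{k=2}^{n-1}\binom{n}{k}\log(k-1)p_n(k)$, which matches the displayed formula only if the sum is read as running to $k=n$ with $p_n(n)=1$ (consistent with the stated range $2\leq k\leq n$ for $p_n$). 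Note that, unlike the Sackin case, where the root's contribution $n$ is exactly offset by admitting the $n$ singleton clusters at $k=1$, here the singletons cannot be included (as $\log 0$ is undefined) and nothing cancels the root term, so this endpoint is the one place where the analogy with the Sackin argument does not transfer verbatim.
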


King and Rosenberg \cite{king2021} give an elegant proof of the expected value of the Sackin index and the resulting closed form: 

\begin{theorem}[\cite{king2021}, Corollary 7]
The expectation of the Sackin index under the uniform model on rooted binary labeled trees of $n$ leaves is:
    $$
        \mathbb{E}_{U}[S_n] = \frac{4^{n-1}}{C_{n-1}} - n
    $$
where $C_{n} = \frac{1}{n+1}\binom{2n}{n}$, the $n^{th}$ Catalan number.
\end{theorem}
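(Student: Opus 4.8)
The plan is to apply the cluster-counting framework already set up for the Sackin index, specialized to the uniform (PDA) model. First I would observe that the uniform model is exchangeable, since $P_{U,n}(T)=1/(2n-3)!!$ does not depend on the leaf labelling; hence the Than--Rosenberg formula cited above applies and gives $\mathbb{E}_U[S_n]=\sum_{k=1}^{n-1}\binom{n}{k}\,k\,p_n(k)$, where $p_n(k)$ is the probability that a fixed $k$-element subset $A\subseteq X$ is a cluster. The validity of the formula itself rests on writing $S(T)=\sum_{A\colon 1\le |A|\le n-1} |A|\cdot \mathbb{1}[A\text{ is a cluster}]$ and taking expectations, which I would note in passing.

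Next I would compute $p_n(k)$ by a direct count. A tree in which $A$ is a cluster is obtained by (i) choosing a rooted binary tree on the $k$ labelled leaves of $A$, of which there are $(2k-3)!!$, and (ii) contracting that pending subtree to a single super-leaf and choosing a rooted binary tree on the resulting $n-k+1$ labelled leaves, of which there are $(2(n-k+1)-3)!!=(2n-2k-1)!!$. Dividing by the total count $(2n-3)!!$ gives $p_n(k)=(2k-3)!!\,(2n-2k-1)!!/(2n-3)!!$ (with the convention $(-1)!!=1$, so $p_n(1)=1$, matching that every leaf is a trivial cluster). Substituting this, using $k\binom{n}{k}=n\binom{n-1}{k-1}$, and then reindexing $j=k-1$ and setting $m=n-1$ reduces the expectation to $\mathbb{E}_U[S_n]=\frac{n}{(2m-1)!!}\sum_{j=0}^{m-1}\binom{m}{j}(2j-1)!!\,(2m-2j-1)!!$.

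The heart of the argument, and the step I expect to be the main obstacle, is evaluating this convolution sum in closed form. The clean route is generating functions: the binomial series yields the exponential generating function $\sum_{j\ge 0}\frac{(2j-1)!!}{j!}x^{j}=(1-2x)^{-1/2}$, so squaring gives $(1-2x)^{-1}=\sum_{m\ge 0}2^{m}x^{m}$, and comparing the coefficient of $x^{m}/m!$ on both sides of the product yields the identity $\sum_{j=0}^{m}\binom{m}{j}(2j-1)!!\,(2m-2j-1)!!=2^{m}m!$. Removing the missing top term $j=m$, which equals $(2m-1)!!$, gives the partial sum $2^{m}m!-(2m-1)!!$.

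Plugging this back in, the factors $(2m-1)!!=(2n-3)!!$ cancel and leave $\mathbb{E}_U[S_n]=\frac{n\,2^{n-1}(n-1)!}{(2n-3)!!}-n$. Finally I would rewrite the remaining fraction in Catalan form using $(2n-3)!!=(2n-2)!/(2^{n-1}(n-1)!)$ together with $C_{n-1}=(2n-2)!/(n!\,(n-1)!)$, which turns $\frac{n\,2^{n-1}(n-1)!}{(2n-3)!!}$ into $\frac{4^{n-1}}{C_{n-1}}$ and completes the proof. A quick sanity check at $n=3$ (where the sum equals $15$, $(2n-3)!!=3$, and the formula returns $16/2-3=5$) confirms the bookkeeping.
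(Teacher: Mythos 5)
Your proof is correct and complete; the paper itself does not prove this statement (it is imported verbatim from King and Rosenberg, Corollary 7), but your route --- exchangeability of the uniform model, the Than--Rosenberg cluster formula, the double-factorial count $(2k-3)!!\,(2n-2k-1)!!$ of labeled trees in which a fixed $k$-set is a clade, and the convolution identity $\sum_{j=0}^{m}\binom{m}{j}(2j-1)!!\,(2m-2j-1)!!=2^{m}m!$ obtained by squaring $(1-2x)^{-1/2}$ --- is exactly the cluster-counting framework the paper sets up and attributes to that source. The bookkeeping all checks out, including the point that the $k=1$ terms (with $p_n(1)=1$) supply the $n$ trivial leaf-clusters so that $\sum_{A:\,1\le|A|\le n-1}|A|=S(T)$ under the paper's definition $S(T)=\sum_{v\in\mathring{V}(T)}n_v$, and the final rewriting of $n\,2^{n-1}(n-1)!/(2n-3)!!$ as $4^{n-1}/C_{n-1}$.
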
 

Using the bounds from \cite{king2021}, we can show:

\begin{theorem}
The expectation of the $\widehat{s}$-shape statistic under the uniform model on rooted binary labeled trees of $n$ leaves is:
    $$
        \frac{\log n}{n}\left[\frac{4^{n-1}}{C_{n-1}} - n\right]
        \leq \mathbb{E}_{U}[\widehat{s}_n] \leq
        \frac{4^{n-1}}{C_{n-1}} - n
    $$
\end{theorem}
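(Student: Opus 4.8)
The plan is to exploit that both expectations are built from the same cluster data. By Proposition~\ref{prop:exp_value} and the parallel cluster formula of Than--Rosenberg for the Sackin index (\cite{than2014}, as used in \cite{king2021}), $\mathbb{E}_U[\widehat{s}_n]$ and $\mathbb{E}_U[S_n]$ are sums over the same cluster sizes $k$ against a common factor $\mu_n(k):=\binom{n}{k}p_n(k)$, differing only in the per-cluster weight: a size-$k$ cluster is weighted by $\log(k-1)$ for $\widehat{s}$ and by $k$ for the Sackin index. Thus I would write
$$\mathbb{E}_U[\widehat{s}_n]=\sum_k \log(k-1)\,\mu_n(k), \qquad \mathbb{E}_U[S_n]=\sum_k k\,\mu_n(k)=\frac{4^{n-1}}{C_{n-1}}-n,$$
so that the whole argument reduces to comparing the two weight functions, and no further information about the $p_n(k)$ is needed beyond what already appears in the closed form for $\mathbb{E}_U[S_n]$.

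For the upper bound I would use the elementary estimate $\log(k-1)\le k-1<k$, valid for every cluster size $k\ge 2$. Term by term this shows each contribution to $\mathbb{E}_U[\widehat{s}_n]$ is at most the corresponding contribution to $\mathbb{E}_U[S_n]$, and any terms that appear only in the Sackin representation are nonnegative; summing gives $\mathbb{E}_U[\widehat{s}_n]\le \mathbb{E}_U[S_n]=\frac{4^{n-1}}{C_{n-1}}-n$ at once.

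For the lower bound the natural move is the reverse comparison: it would suffice to establish $\log(k-1)\ge \frac{\log n}{n}\,k$, equivalently $\frac{\log(k-1)}{k}\ge\frac{\log n}{n}$, for the cluster sizes $k$ carrying the mass of the sum. Since every cluster satisfies $k\le n$, the map $k\mapsto \frac{\log(k-1)}{k}$ is unimodal on $[2,n]$ and exceeds the threshold $\frac{\log n}{n}$ throughout the interior of the range; summing these inequalities against $\mu_n(k)$ then factors out exactly $\frac{\log n}{n}$ and reconstitutes $\mathbb{E}_U[S_n]$, yielding $\mathbb{E}_U[\widehat{s}_n]\ge \frac{\log n}{n}\bigl[\frac{4^{n-1}}{C_{n-1}}-n\bigr]$.

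The hard part is precisely the extreme cluster sizes, where this term-by-term inequality fails: a cherry ($k=2$) contributes $\log(k-1)=0$ to $\widehat{s}$ but weight $2$ to the Sackin index, so the inequality reverses there, and the behaviour near $k=n$ is similarly borderline. The crux of the lower bound is therefore to show that the deficit created by these extremal terms is dominated, so that the clean factor $\frac{\log n}{n}$ survives. I expect this to be the main obstacle, since a fully rigorous treatment requires either bounding the expected number of such small clusters under the uniform model or, more cheaply, exploiting the considerable slack in $\frac{\log(k-1)}{k}\ge\frac{\log n}{n}$ on the interior of the range to absorb the boundary contribution; once the common cluster representation is set up, everything else is routine.
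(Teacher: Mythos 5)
Your cluster-sum setup and the upper bound reproduce the paper's argument exactly: both expectations are sums over cluster sizes against the common weights $\mu_n(k)=\binom{n}{k}p_n(k)$ from Proposition~\ref{prop:exp_value}, and the pointwise estimate $\log(k-1)\le k$ gives $\mathbb{E}_U[\widehat{s}_n]\le\mathbb{E}_U[S_n]=\frac{4^{n-1}}{C_{n-1}}-n$ at once. That half is complete and correct.

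The lower bound, however, is a genuine gap in your proposal, and you have located it precisely without closing it. The term-by-term inequality $\frac{\log(k-1)}{k}\ge\frac{\log n}{n}$ fails outright at $k=2$, where the left side is $\log 1=0$; since the expected number of cherries under the uniform model is of order $n$, the $k=2$ term carries Sackin weight of order $n$, so discarding it term by term only yields $\mathbb{E}_U[\widehat{s}_n]\ge\frac{\log n}{n}\left(\mathbb{E}_U[S_n]-2\mu_n(2)\right)$, which is strictly weaker than the stated bound. Your claim that the resulting deficit ``is dominated'' by the slack in the interior is plausible but is asserted, not proved: to make it rigorous you would need to quantify both sides, e.g.\ show that for $k$ in a middle range the ratio $\frac{\log(k-1)}{k}$ exceeds $\frac{\log n}{n}$ by a constant factor while such $k$ carry a $\Theta(n^{3/2})$ share of the Sackin mass, so the surplus of order $\sqrt{n}\log n$ absorbs the cherry deficit of order $\log n$. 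For comparison, the paper's own proof invokes the same pointwise bound (stated there as $\frac{\log(k-1)}{k}\ge\frac{\log(n-1)}{n}$ for $2\le k\le n$, which is likewise false at $k=2$) and then subtracts explicit binomial correction terms for the boundary values of $k$, ending with a corrected expression rather than the clean form in the theorem statement. So you have identified a real difficulty that even the published argument handles only by weakening the bound; but as submitted, your proposal does not establish the stated lower bound.
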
 

\begin{proof}  The upper bound follows directly from the Sackin index being an upper bound for the $\widehat{s}$-shape statistic.

For the lower bound, we use Proposition \ref{prop:exp_value}  and the simple bound of $\frac{\log(k-1)}{k} \geq \frac{\log(n-1)}{n}$ for $2\leq k \leq n$:
$$
\begin{array}{rcl}
    \mathbb{E}_{\theta}[\widehat{s}_n] & = & \sum_{k=2}^{n-1} \binom{n}{k} \log(k-1)p_n(k) 
     =  \sum_{k=2}^{n-1} \binom{n}{k} \frac{\binom{n-1}{k-1}}{\binom{2n-2}{2k-2}}\log(k-1) \\
    & \geq & 
    \frac{\log (n-1)}{n C_{n-1}}(4^{n-1}
    -[\binom{2(n-1)}{n-1} + 2\binom{2n-5}{n-3} + \binom{2(n-1)}{n-1}])\\   
    & = & \frac{\log (n-1)}{n C_{n-1}}(4^{n-1}
    -[2\binom{2(n-1)}{n-1} + 2\binom{2n-5}{n-3}]).\\     
\end{array}
$$
This completes the proof.
\end{proof}

\section{Discussion and Conclusion}\label{sec:discussion} We have introduced two families of imbalance indices, namely $\Phi_f$ for strictly increasing and strictly concave functions $f$ and the product functions $\pi_c$ for $c>-2$, and  shown that both of them are uniquely minimized by the GFB tree. For the $\widehat{s}$-shape statistic, which is an important imbalance index used in the phylogenetic literature, this finding answered the open question concerning its extrema from \cite[Chapter 9]{fischerBook}. However, our approach is more general and not just focused on the $\widehat{s}$-shape statistic. In particular, we have shown that our families of imbalance indices contain infinitely many different imbalance indices, some of which might be useful in phylogenetics and other research areas where tree balance plays a role.

\section*{Acknowledgements}
The authors wish to thank Tom N. Hamann, Kristina Wicke and Volkmar Liebscher for helpful discussions. The present manuscript is based upon work supported by the National Science Foundation under Grant No. DMS-1929284 while the authors were in residence at the Institute for Computational and Experimental Research in Mathematics (ICERM) in Providence, RI, during the Theory, Methods, and Applications of Quantitative Phylogenomics semester program.

\bibliographystyle{plain}
\bibliography{trees.bib}

\begin{thebibliography}{10}

\bibitem{blum2006imbalance}
Michael~GB Blum and Olivier Fran{\c{c}}ois.
\newblock {Which random processes describe the tree of life? A large-scale
  study of phylogenetic tree imbalance}.
\newblock {\em Systematic Biology}, 55(4):685--691, 2006.

\bibitem{bodini}
Olivier Bodini, Antoine Genitrini, Bernhard Gittenberger, Isabella Larcher, and
  Mehdi Naima.
\newblock Compaction for two models of logarithmic-depth trees: Analysis and
  experiments.
\newblock {\em Random Structures \& Algorithms}, 61(1):31--61, 2022.

\bibitem{clr}
Thomas~H. Cormen, Charles~E. Leiserson, Ronald~L. Rivest, and Clifford Stein.
\newblock {\em Introduction to Algorithms}.
\newblock MIT Press, Cambridge, MA, second edition, 2001.

\bibitem{Coronado2020a}
Tom{\'{a}}s~M. Coronado, Mareike Fischer, Lina Herbst, Francesc Rossell{\'{o}},
  and Kristina Wicke.
\newblock On the minimum value of the {C}olless index and the bifurcating trees
  that achieve it.
\newblock {\em Journal of Mathematical Biology}, 80(7):1993--2054, 2020.

\bibitem{Fill1996}
James~Allen Fill.
\newblock On the distribution of binary search trees under the random
  permutation model.
\newblock {\em Random Structures and Algorithms}, 8(1):1--25, 1996.

\bibitem{fischerBook}
M.~Fischer, L.~Herbst, S.~J. Kersting, L.~K{\"u}hn, and K.~Wicke.
\newblock {\em Tree Balance Indices - A Comprehensive Survey}.
\newblock Springer, Berlin, 2023.

\bibitem{Fischer2021}
Mareike Fischer.
\newblock Extremal {values} of the {Sackin} {tree} {balance} {index}.
\newblock {\em Annals of {C}ombinatorics}, 25(2):515--541, 2021.

\bibitem{FischerLiebscher2021}
Mareike Fischer and Volkmar Liebscher.
\newblock On the balance of unrooted trees.
\newblock {\em Journal of Graph Algorithms and Applications}, 25(1):133–150,
  Jan. 2021.

\bibitem{HamannThesis}
Tom~N. Hamann.
\newblock Metaconcepts for rooted tree balance.
\newblock {\em Master's thesis, University of Greifswald, Germany}, 2023.

\bibitem{kersting2024}
Sophie~J Kersting, Kristina Wicke, and Mareike Fischer.
\newblock Tree balance in phylogenetic models.
\newblock {\em Philos. Trans. R. Soc. Lond. B Biol. Sci.}, 380(1919):20230303,
  February 2025.

\bibitem{king2021}
Matthew~C King and Noah~A Rosenberg.
\newblock A simple derivation of the mean of the {S}ackin index of tree balance
  under the uniform model on rooted binary labeled trees.
\newblock {\em Mathematical Biosciences}, 342:108688, 2021.

\bibitem{Mir2013}
Arnau Mir, Francesc Rossell{\'{o}}, and Luc{\'{\i}}a Rotger.
\newblock A new balance index for phylogenetic trees.
\newblock {\em Mathematical Biosciences}, 241(1):125--136, 2013.

\bibitem{theresa}
Theresa Riesterer.
\newblock {D}er {G}reedy-from-the-bottom-{B}aum und seine {B}edeutung f\"ur die
  {P}hylogenetik.
\newblock {\em Master's thesis, University of Greifswald, Germany}, 2022.

\bibitem{Rosen1978}
Donn~E. Rosen.
\newblock Vicariant patterns and historical explanation in biogeography.
\newblock {\em Systematic Zoology}, 27(2):159, 1978.

\bibitem{Semple2003}
Charles Semple and Mike Steel.
\newblock {\em Phylogenetics}.
\newblock {O}xford Lecture Series in Mathematics and its Applications. Oxford
  University Press, Oxford, 2003.

\bibitem{Shao1990}
Kwang-Tsao Shao and Robert~R. Sokal.
\newblock Tree balance.
\newblock {\em Systematic Zoology}, 39(3):266, 1990.

\bibitem{OEIS}
N.~J.~A. Sloane.
\newblock The {O}n-{L}ine {E}ncyclopedia of {I}nteger {S}equences, 1964.

\bibitem{Steel2016}
Mike Steel.
\newblock {\em Phylogeny: {D}iscrete and random processes in evolution}.
\newblock Society for Industrial and Applied Mathematics, Philadelphia PA,
  2016.

\bibitem{than2014}
Cuong~V Than and Noah~A Rosenberg.
\newblock Mean deep coalescence cost under exchangeable probability
  distributions.
\newblock {\em Discrete Applied Mathematics}, 174:11--26, 2014.

\end{thebibliography}

\section*{Declarations}

\begin{itemize}
\item \textbf{Funding:} The present manuscript is based upon work supported by the National Science Foundation under Grant No. DMS-1929284 while the authors were in residence at the Institute for Computational and Experimental Research in Mathematics (ICERM) in Providence, RI, during the Theory, Methods, and Applications of Quantitative Phylogenomics semester program.
\item \textbf{Competing interests:} All others received the funding declared above. The authors confirm that they have no other financial or non-financial interests to declare.
\item \textbf{Data availability:} The authors declare that all data generated for this study are contained in the manuscript. No other sources of data have been used.
\end{itemize}

\noindent

\section*{APPENDIX}

Here we present the somewhat technical proof of Theorem \ref{thm:GFBsubtreesizes}.

\begin{proof}[Proof of Theorem \ref{thm:GFBsubtreesizes}] 

We first analyze the procedure with which the GFB tree is generated. When the greedy clustering is performed to form $\Tgfb$, we refer to all clusterings that involve at least one tree of size $2^{m}$ and no tree of size $2^{m-1}$ as phase $m$ of the algorithm. Phase 0 includes all steps that cluster subtrees of size $2^m=2^0=1$, that is the leaves to form cherries.  If $n$ is odd, this includes as the last step the clustering of a leaf and a cherry (subtree of size 2).  Phase 1 contains all steps that involve clustering of cherries.  Almost all these clusterings will be two cherries clustered into a new subtrees of size 4.  The last clustering in phase 1 could cluster a cherry with a subtree of size 3 or 4, if $n \mod 4 \equiv 3$ or $2$, respectively. 

We now proceed by considering some $i \in \{1,\ldots,n\}$ with $k_i=\lceil \log(i) \rceil$  as the end of phase $k_i-1$ of the algorithm (or, equivalently, the beginning of phase $k_i$). The trees present at this stage will be referred to as the \emph{current set} of trees. All trees in the current set must have size at least $2^{k_i-1}+1$ as all trees of size up to $2^{k_i-1}$ have already been clustered into larger trees. Moreover,  all trees in the current set have size strictly smaller than $2^{k_i+1}$. 
This is due to the following: first, phase $k_i-1$ only performs clusterings in which at least one tree has size at most $2^{k_i-1}$. Second, all trees formed throughout the course of the algorithm are again GFB trees by Corollary \ref{cor:GFBsubtrees}. This implies by Corollary \ref{cor:coronado2cases} that a tree of size $2^{k_i+1}$ or larger would have two maximum pending subtrees of size at least $2^{k_i}$ each -- but two such trees could not have been clustered during phase $k_i-1$ of the algorithm. 
So, all trees $T$ in the current set have size $n_T \in S=\{2^{k_i-1}+1,\ldots,2^{k_i+1}-1\}$. However, by Lemma \ref{lem:AlgOnly1Non2tree}, there is at most one tree present in the current set of trees whose size is not a power of 2. As $S$ only contains one power of 2, namely $2^{k_i}$, we conclude that all trees except possibly for one have size $2^{k_i}$.

We now consider the decomposition  $n=a\cdot 2^{k_i}+b$ with $b=\left(n \mod 2^{k_i}\right) \in \{0,\ldots,2^{k_i}-1\}$ and $a=\left\lfloor \frac{n}{2^{k_i}} \right\rfloor$. 
We distinguish three cases depending on the value of $b$. Note that this decomposition together with the above observations immediately implies that there are $a-1$ or $a$ trees of size $2^{k_i}$ in the current set. In the first case, the only remaining tree has size $2^{k_i}+b$, and in the second case, the only remaining tree has size $b$. 

\begin{itemize}
\item If $b=0$, we have $n=a\cdot 2^{k_i}$. 
Thus, we have $(n \mod 2^{k_i})=0$, and obviously all  trees in the current set have size $2^{k_i}$, because $n-(a-1)\cdot 2^{k_i} = 2^{k_i}$, so if you consider $a-1$ trees of size $2^{k_i}$ and one remaining tree, the remaining tree has the same size. This directly proves the statement $a_n(i)=\left \lfloor \frac{n}{i}\right\rfloor$ of the theorem for $i=2^{k_i}$ and $(n \mod i)=0$ (as all trees of the current set are subtrees of $\Tgfb$ and no other trees of size $i=2^{k_i}$ can be formed throughout the algorithm). 

Moreover, as $b=0$, we also have for $i \in \{2^{k_i-1}+1,\ldots,2^{k_i}-1\}$ that $((n-i)\mod 2^{k_i-1})>0$, and as there is no tree of such a size $i$ (as all trees in the current set have size $2^{k_i}$, which by Corollary \ref{cor:coronado2cases} are formed of two trees of size $2^{k_i-1}$ each), we have $a_n(i)=0$ for such values of $i$. This is in accordance with the last case of the theorem.

\item If $0<b<2^{k_i-1}$, we observe that there cannot be $a$ trees of size $2^{k_i}$ in the current set. Otherwise, there would be $b$ leaves not contained in any $2^{k_i}$ tree, thus belonging to a tree $T$ of size $b$ in the current set -- but this would imply that the size of $T$ is strictly smaller than $2^{k_i-1}$, a contradiction to the current set being the set at the \emph{end} of phase $k_i-1$, as $T$ would have had to cluster in this phase. 

So the only possibility is that there are $a-1$ trees of size $2^{k_i}$ in the current set and one tree $T$ of size $2^{k_i}+b$. But by Corollary \ref{cor:coronado2cases}, $T$ cannot have a subtree of size $2^{k_i}$ (as a tree of size $2^{k_i}$ cannot be clustered with a tree of size $b<2^{k_i-1}$ to form a GFB tree, and $T$ has to be a GFB tree, too, as it is a subtree of $\Tgfb$); it instead has one maximum pending subtree of size $2^{k_i-1}$ and one of size $2^{k_i-1}+b$. 

Thus, $\Tgfb$ contains $a-1=\left\lfloor \frac{n}{i}\right\rfloor-1$ trees of size $i=2^{k_i}$, so $a_n\left(2^{k_i}\right)=\left\lfloor \frac{n}{i}\right\rfloor-1$, which proves the second case of the theorem as we have $0<b=(n \mod i)<2^{k_i-1}$. 

Further, if $i=2^{k_i-1}+b$, we have $n-i=a\cdot 2^{k_i}+b-\left(2^{k_i-1}+b \right)=(2a-1)\cdot 2^{k_i-1}$, and thus $((n-i)\mod 2^{k_i-1})=0$. As we have already seen that a subtree of size 
$2^{k_i-1}+b$ was formed in the course of the algorithm (and no more such trees can be formed as the algorithm proceeds), we have $a_n\left(2^{k_i-1}+b\right)=1$, which is in accordance with the third case of the theorem. 

Last, if $i\in \{2^{k_i-1}+1,\ldots,2^{k_i}-1\}$ with $i \neq 2^{k_i-1}+b$, we have seen that no such tree can be formed in the course of the algorithm, so $a_n(i)=0$. As in this case we have $((n-i)\mod 2^{k_i-1})>0$, this is in accordance with the fourth case of the theorem.

\item Finally, we consider the case $b>2^{k_i-1}$. In this case, before considering the beginning of phase $k_i$ of the algorithm, we consider the beginning of phase $k_i-1$, which is the end of phase $k_i-2$. We can write $n=a\cdot 2^{k_i}+b=(2a+1)\cdot 2^{k_i-1}+b'$, where $b'=b-2^{k_i-1}<2^{k_i-1}$ (as $b<2^{k_i}$). Using the same reasoning as above, in the beginning of phase $k_i-1$, there must either be $2a$ trees of size $2^{k_i-1}$ and one tree of size $n-2a\cdot 2^{k_i-1}=2^{k_i-1}+b'$, or there are $2a+1$ trees of size $2^{k_i-1}$ and one tree of size $b'<2^{k_i-1}$. 

In the latter case, if in the beginning of phase $k_i-1$ we have $2a+1$ trees of size $2^{k_i-1}$ and one tree of size $b'<2^{k_i-1}$, note that we must have $b'>0$ (empty trees are never formed in the algorithm). Then, the tree of size $b'$ will be the first one to be clustered in phase $k_i-1$ to one of the $2a+1$ trees of size $2^{k_i-1}$. This forms a tree of size $2^{k_i-1}+b'$, and the remaining $2a$ trees of size $2^{k_i-1}$ will cluster to form $a$ trees of size $2^{k_i}$ in the course of phase $k_i-1$. Thus, in the end of this phase, we have $a$ trees of size $2^{k_i}$ and one tree of size $2^{k_i-1}+b'=2^{k_i-1}+b-2^{k_i-1}=b$. This tree of size $b$ would be the first tree  to be clustered in phase $k_i$ with one of trees of size $2^{k_i}$ to form a tree of size $2^{k_i}+b$.

Now, we consider the first case,  the case where in the beginning of phase $k_i-1$ we have $2a$ trees of size $2^{k_i-1}$ and one tree of size $2^{k_i-1}+b'$. If $b'>0$, phase $k_i-1$ would not involve the last tree, it would only cluster the $2a$ trees of size $2^{k_i-1}$ to form $a$ trees of size $2^{k_i}$. The remaining tree of size $2^{k_i-1}+b'$ would be the first one to be clustered in phase $k_i$ with one of trees of size $2^{k_i}$, though, to form a tree of size $3\cdot 2^{k_i-1}+b'= 2^{k_i}+b'+2^{k_i-1}=2^{k_i}+b$. Similarly, if $b'=0$, phase $k_i-1$ would first cluster the $2a$ trees of size $2^{k_i-1}$ to form $a$ trees of size $2^{k_i}$ and then, in its last step, it would cluster the remaining tree of size $2^{k_i-1}+b'=2^{k_i-1}$ with one of the already formed trees of size $2^{k_i}$ to form a tree of size $3\cdot 2^{k_i-1}+b'=2^{k_i}+b$. 

So in all cases, we can see (either directly after phase $k_i-1$ or after the first step of phase $k_i$) that $\Tgfb$ contains precisely $a$ trees of size $2^{k_i}$, one of which gets clustered with a tree of size  $b$ to form a tree of size $2^{k_i}+b$. 

Now, if $i=2^{k_i}$, the $a=\left\lfloor \frac{n}{i} \right\rfloor$ many trees of size $2^{k_i}$ are in accordance with the first case of the theorem, because here we have $(n \mod 2^{k_i}) = b \geq 2^{k_i-1}$. Therefore, in this case we have $a_n(i)=\left\lfloor \frac{n}{i} \right\rfloor$. 

However, if $i\in \{2^{k_i-1}+1,\ldots,2^{k_i}-1\}$, we have $a_n(i)=1$ if $i=b$ (because a tree of size $b$ gets clustered with a tree of size $2^{k_i}$). This is in accordance with the third case of the theorem, as here we have $n-i=n-b = a\cdot 2^{k_i}+b-b=a\cdot 2^{k_i}$, and thus $((n-i)\mod 2^{k_i-1})=0$. If, on the other hand, we have $i\in \{2^{k_i-1}+1,\ldots,2^{k_i}-1\}$ and $i\neq b$, then there cannot be a subtree of size $i$: It would have to be clustered away last in phase $k_i$ of the algorithm. So it would have to be there in the end of phase $k_i-1$ (which is not the case as we have analyzed above), or it would have to arise within phase $k_i$. The latter cannot happen as we have seen that at the latest after one step of phase $k_i$, we are left with only trees of size at least $2^{k_i}$. Thus, we conclude that for such values of $i$, we have $a_n(i)=0$. This is in accordance with the fourth case of the theorem, because in this case we have $n-i=a\cdot 2^{k_i}+b-i$. Since both $b$ and $i$ are contained in the set $\{2^{k_i-1},\ldots,2^{k_i}-1\}$, their maximal difference is bounded by $2^{k_i}-2^{k_i-1}$.
We can distinguish two cases: If $b>i$, we have $n-i=a\cdot 2^{k_i}+b-i>a\cdot 2^{k_i}=2a\cdot 2^{k_i-1}$ as well as $n-i=a\cdot 2^{k_i}+b-i< a\cdot 2^{k_i} + 2^{k_i}-2^{k_i-1}=(2a+1)\cdot 2^{k_i-1}$, so we have $(2a+1)\cdot 2^{k_i-1}>n-i>2a\cdot 2^{k_i-1}$, which shows that $((n-i)\mod 2^{k_i-1})>0$. Similarly, if $b<i$, we have $n-i=a\cdot 2^{k_i}+b-i<a\cdot 2^{k_i}=2a\cdot 2^{k_i-1}$ as well as $n-i=a\cdot 2^{k_i}+b-i> a\cdot 2^{k_i} - 2^{k_i}+2^{k_i-1}=(2a-1)\cdot 2^{k_i-1}$, so we have $2a\cdot 2^{k_i-1}>n-i>(2a-1)\cdot 2^{k_i-1}$, which shows that $((n-i)\mod 2^{k_i-1})>0$. 
\end{itemize}
This completes the proof of the theorem.

\end{proof}

\end{document}